\newcommand\numberthis{\addtocounter{equation}{1}\tag{\theequation}}
\newcommand{\ket}[1]{\left| #1 \right\rangle}
\newcommand{\ketbra}[3][]{\mathinner{\lvert#2\rangle\! \langle #3\rvert}_{#1}}
\newcommand*{\id}{{\normalfont\hbox{1\kern-0.15em \vrule width .8pt depth-.5pt}}}
\newtheoremstyle{mytheoremstyle} 
    {\topsep}                    
    {\topsep}                    
    {}                   
    {}                           
    {\itshape}                   
    {.}                          
    {.5em}                       
    {}  
\theoremstyle{mytheoremstyle}
\newtheorem{theorem}{Theorem}
\newtheorem{lemma}{Lemma}
\def\e{\mathrm{e}}
\def\i{\mathrm{i}}
\setlist[itemize]{leftmargin=*}
\begin{document}
\title{From single-shot to general work extraction with bounded fluctuations in work}
\author{Jonathan G. Richens}
\email{jonathan.richens08@ic.ac.uk}
\affiliation{Controlled Quantum Dynamics theory group, Department of Physics, Imperial College London, London SW7 2AZ, UK.}
\affiliation{Department of Physics and Astronomy, University College London,
Gower Street, London WC1E 6BT, UK.}
\author{Lluis Masanes}
\affiliation{Department of Physics and Astronomy, University College London,
Gower Street, London WC1E 6BT, UK.}

\begin{abstract}
\noindent In the standard framework of thermodynamics work is a random variable whose average is bounded by the change in free energy of the system. This average work is calculated without regard for the size of its fluctuations. We show that for some processes, such as reversible cooling, the fluctuations in work diverge. Realistic thermal machines may be unable to cope with arbitrarily large fluctuations. Hence, it is important to understand how thermodynamic efficiency rates are modified by bounding fluctuations. We quantify the work content and work of formation of arbitrary finite dimensional quantum states when the fluctuations in work are bounded by a given amount $c$. By varying $c$ we interpolate between the standard and min free energies. We derive fundamental trade-offs between the magnitude of work and its fluctuations. As one application of these results, we derive the corrected Carnot efficiency of a qubit heat engine with bounded fluctuations.  
\end{abstract}

\maketitle
\normalem
Historically, thermodynamics has been a theory of macroscopic systems comprising of many particles. As we venture away from the thermodynamic limit we must question the validity of established principles. Recently, the problem of extracting work from a microscopic quantum system has received much attention \cite{horodecki2013fundamental,skrzypczyk2014work,gemmer2015single,aaberg2013truly,dahlsten2011inadequacy}. The standard free energy is used to calculate the maximal amount of average work that can be extracted from a system in thermal contact with an infinite heat bath. Generally the work extracted on each running of the protocol fluctuates, but in the thermodynamic limit the relative size of fluctuations in work vanishes.  
However, in the case of microscopic systems, and systems that are far from equilibrium, fluctuations in the work can no longer be ignored. It is of significant practical importance that we understand these fluctuations in order to describe the behaviour of small and fragile machines such as quantum heat engines comprising of just a few qubits \cite{linden2010small,ryan2008spin,quan2006maxwell}. Realistic thermal machines are designed to operate at specific energies with a certain tolerance to fluctuations. Taking into account this inevitable fragility requires a modified free energy that tells us the average work associated with a process when fluctuations in that work are constrained.\\
\indent One approach to dealing with fluctuations is to simply not allow for them. This is the tactic employed by single-shot thermodynamics, a recently developed approach to quantum thermodynamics inspired by the field of single-shot information theory \cite{horodecki2013fundamental,aaberg2013truly,dahlsten2011inadequacy}. 
The single-shot (deterministic) work associated with a process is given by the difference in min-free energy between initial and final states \cite{horodecki2013fundamental,aaberg2013truly}, which is generally significantly smaller that the standard free energy difference. The work cost of forming a state from the Gibbs state is given by the max-free energy \cite{horodecki2013fundamental,aaberg2013truly}, which is generally significantly larger than the deterministic work that can be extracted from the state. This discrepancy between the work cost and work content of states in the single-shot regime results in thermodynamic irreversibility when transforming between states. Furthermore, the set of allowed thermodynamic transformations in the single-shot regime are severely restricted. In this regime, it is possible for a state to undergo a transition $\rho\rightarrow\rho'$ deterministically (and without supplying work) on the condition that an infinite family of ``second laws'' are satisfied \cite{brandao2015second}. Some transitions $\rho \leftrightarrow \rho'$ can only be achieved by supplying work in both forward and backward directions, resulting in a partial order on the set of states with respect to the resource of work \cite{brandao2015second}. This is in stark contrast to when we allow work to fluctuate freely, whereby all states can be inter-converted in a thermodynamically reversible manner. 

\indent To date the majority of thermodynamic protocols treat work either as an unconstrained random variable or a totally constrained (deterministic) quantity. In this article we explore the landscape of protocols that exist between these two regimes. We find that in many protocols, for example thermodynamically reversible cooling, the work must have fluctuations that diverge in size. 
This makes realising these protocols practically infeasible, especially for small or fragile machines. To this end we define the $c$-bounded work, giving the optimal average work $\langle w\rangle$ that can be achieved by any protocol when fluctuations of the random variable $w$ are bounded as
\begin{eqnarray}
  \left|\, w-\langle w\rangle\, \right|
  \leq c \label{basic bound}
\end{eqnarray}
where $c$ is a adjustable parameter. In this article we explore how bounding work fluctuations in this way affects work extraction, state formation and the allowed state transformations of individual systems. We derive expressions for the $c$-bounded work that interpolate between these two regimes of deterministic and freely fluctuating work. We then apply these results to the study of a single qubit thermal engine, and derive a corrected Carnot efficiency when fluctuations in the work produced by the engine are constrained.\\ 

\section*{Results}

\noindent\textbf{The framework}. In this section we provide a precise description of our framework, describing the system, bath, work system and the set of allowed operations. \\
\indent We make use the widely applied set-up for thermodynamic protocols of system, infinite thermal bath and a weight, which acts as a store and source of the work produced or consumed by a protocol \cite{masanes2014derivation,skrzypczyk2014work,gemmer2015single}. 
In the following we set the Boltzmann constant $k_B$ to 1. The bath has infinite volume and it is in the Gibbs state $\rho_\text{B} = \frac 1 {\mathcal Z _\text{B}} e^{-\beta  H_\text{B}}$, where $\beta$ is the inverse temperature, $H_\text{B}$ the Hamiltonian, and $\mathcal Z _\text{B}$ the partition function.
\\
\indent The work system is modelled as a suspended weight with a continuous energy spectrum and Hamiltonian dependent only on its displacement $H_\mathrm{W}=  \int_\mathbb{R} dx\, x |x\rangle \! \langle x|$, where the orthonormal basis $\{|x\rangle, \forall\, x\in \mathbb R\}$ represents the position of the weight. 
In order to define work as a classical random variable $w$, the position of the weight is measured at the beginning and end of the protocol.
\\
\indent The system being transformed has Hilbert space of dimension $d$, initial state and Hamiltonian $(\rho, H_\text{S})$, and final state and Hamiltonian $(\rho', H'_\text{S})$ (which may have no relation to the initial Hamiltonian, see Supplementary Note 1).
It is useful to define the initial and final dephased states and their spectral decompositions

\begin{eqnarray}
  \lim\limits_{T\rightarrow\infty} \int\limits_0^T dt\, e^{-\i H_\text{S} t} \rho\, 
  e^{\i H_\text{S} t}
  &=&
  \sum_s x_s |s\rangle\! \langle s|
  \ ,
  \\
 \lim\limits_{T\rightarrow\infty} \int\limits_0^T dt\, e^{-\i H'_\text{S} t} \rho'\, 
  e^{\i H'_\text{S} t}
  &=&
  \sum_s x_{s'} |s'\rangle\! \langle s'|
  \ .
\end{eqnarray}

The two bases $|s\rangle$ and $|s'\rangle$ defined above, allow to write the spectral decompositions $H_\text{S} = \sum_s \mathcal E_s |s\rangle\! \langle s|$ and $H'_\text{S} = \sum_{s'} \mathcal E_{s'} |s' \rangle\! \langle s'|$.
(Note that we use notation $x_{s'}$ and $\mathcal E_{s'}$ instead of $x'_{s'}$ and $\mathcal E'_{s'}$.)
Finally, we assume that initially the joint state of system, bath and weight is product $\rho \otimes \rho_\text{B} \otimes \rho_\text{W}$. We consider any process that is a joint transformation of system, bath and weight represented by a Completely Positive Trace Preserving (CPTP) map $\Gamma_\mathrm{SBW}$ satisfying the following conditions:\\

\noindent Microscopic reversibility (Second Law): It has an (CPTP) inverse $\Gamma_\mathrm{SBW}^{-1}$, which implies unitarity $\Gamma_\mathrm{SBW} (\rho_\mathrm{SBW}) = U\rho_\mathrm{SBW} U^\dagger$.\\

\noindent Energy conservation (First Law): $[U,H_\mathrm{S}\!+\!H_\mathrm{B}\!+\!H_\mathrm{W}]\!=\!0$.\\

\noindent Independence from the ``position" of the weight: The unitary commutes with the translations on the weight $[U,\Delta_\mathrm{W}] = 0$. The generator of the translations $\Delta_\mathrm{W}$ is canonically conjugated to the position of the weight $[H_\mathrm{W}, \Delta_\mathrm{W}] = \i$.\\

\noindent Classicality of work: Before and after applying the global map $\Gamma_\mathrm{SBW}$ the position of the weight is measured, obtaining outcomes $|x\rangle$ and $|x+w\rangle$ respectively. The joint transformation of the system and work random variable $w$ is given by the map
\begin{equation}
  \label{TO}
  \Lambda (\rho,w)
  =
  \int_\mathbb{R}\! dx\, {\rm tr}_\mathrm{BW}\! 
  \left[ Q_{x+w}\,
  U \left( \rho\otimes  \rho_\mathrm{B}\otimes 
  Q_x \rho_\mathrm{W} Q_x
  \right) U^\dagger \right]\ ,
\end{equation}
where $Q_x = |x\rangle\! \langle x|$ is a weight position projector.\\

The assumption that the dynamics of a closed system is reversible and conserves energy is widely used, because it corresponds to a common physical setup. The third condition implies that the reduced map on the system and bath is a mixture of unitaries, and therefore cannot decrease the entropy of the joint state of system and bath (See Result 1 in \citep{masanes2014derivation}). This ensures that the weight cannot be used as a source of non-equilibrium, and can be viewed as a necessary condition for defining work \citep{skrzypczyk2014work, gallego2015defining}. 
As a consequence of the fourth condition (classicality of work), the optimal work that can be extracted is determined by the dephased states of the system, hence the presence of coherences in the system cannot increase or decrease this amount (See Supplementary Note 3). In the case of state formation, a state with coherences cannot be formed from a thermal state. However, for general state transformations (which we do not analyse in this paper) the presence of coherences in the initial and final states of system generate further constraints on the work \cite{lostaglio2015description,lostaglio2015quantum}. Regarding quantum definitions of work, several attempts have been made \cite{alhambra2016second,aaberg2014catalytic,perarnau2016quantum}, but there is still no consensus on these definitions and their treatment of fluctuations. The problem of defining a truly quantum definition of work, or if such a definition exists, remains an important and open question \cite{perarnau2016quantum,talkner2016aspects,gallego2015thermodynamic}. 
\\

\noindent \textbf{Deterministic work.} 
The single-shot work content of a system is given by the difference in min free energy $F^\text{min} (\rho)=-\beta^{-1}\log\sum_s x_s^0\, e^{-\beta \mathcal{E}_s}$ between the state $\rho$ and the thermal state 
\begin{equation}
W^{(0)}(\rho) = \frac{1}{\beta}\log\mathcal{Z}-\frac{1}{\beta}\log\sum\limits_s x_s^0\, e^{-\beta \mathcal{E}_s} 
\ ,
\label{minfreenergy}
\end{equation}
where $x^0$ returns 1 if $x> 0$ and 0 if $x=0$, and $\mathcal{Z} =\text{tr}\left(e^{-\beta H_\text{S}}\right)$ is the partition function of the system \citep{aaberg2013truly,horodecki2013fundamental}. 
If $x_s$ has full rank then min free energy is $-\beta^{-1}\log\mathcal{Z}$. Therefore non-zero deterministic work can only be extracted from states that are not of full rank. The single-shot work of formation is 
\begin{equation}
W^{(0)}_\text{F} (\rho ) = 
\frac 1 \beta \log \mathcal Z
+
\frac{1}{\beta}\log \max_s x_s \, e^{\beta \mathcal{E}_s} \ .
\end{equation}
In general we find that $W^{(0)}_F(\rho )>W^{(0)} (\rho)$, i.e. it is not possible to form most states in a thermodynamically reversible manner. 
When a weight is not present, the necessary and sufficient condition for a state transformation $(\rho , H_\text{S} ) \rightarrow (\rho' , H'_\text{S} )$ to be possible is given by the thermo-majorisation criteria~\cite{horodecki2013fundamental}. If in addition a catalyst is used, the necessary and sufficient conditions are given in \cite{brandao2015second} (for states that are diagonal in the energy eigenbasis). The key phenomenon is that in single-shot thermodynamics there is a partial order on states, 
i.e. there are state transitions that are impossible, in both forward and backward directions, without supplying work. In contrast thermodynamic irreversibility and partial order are not observed in the standard thermodynamic formalism, in which work is allowed to fluctuate freely, as we discuss next.\\

\noindent \textbf{Work with unbounded fluctuations}.  
The maximum average work that can be extracted from a system with the assistance of a heat bath with inverse temperature $\beta$ is given by the difference in free energy between $\rho$ and the Gibbs state. The free energy is given by $F(\rho) = \langle \mathcal{E} \rangle - \beta^{-1} S (\rho)$,  $\langle \mathcal{E} \rangle=\text{tr}\left(\rho \,H_\text{S}\right) = \sum_s x_s \mathcal E_s$ is the internal energy of the state, $S(\rho) =-\sum_s x_s \log x_s$ is the entropy of the de-phased state.
The Gibbs state, with energy level occupation probabilities $x_s = \mathcal{Z}^{-1} e^{-\beta \mathcal{E}_s}$, is the unique state given $H_\text{S}$ and $\beta$ with the lowest free energy (given by $-\beta^{-1} \log \mathcal{Z}$). Therefore the optimal average work that can be extracted from an out of equilibrium state is given by $W^{(\infty)}=\beta^{-1} \log \mathcal{Z}+ F(\rho)$. In the reverse process of state formation the work cost is also given by the difference in free energy between initial and final states. In other words, if we do not bound fluctuations in the work it is always possible to realise all state transformations in a thermodynamically reversible way. This poses the question - what is the minimum amount we must allow work to fluctuate in order for a transition to be achievable with a thermodynamically reversible protocol? 
\begin{theorem}\label{thermodynamic reversibility}
The thermodynamically reversible process achieving the transition $(\rho, \, H_\text{S})\rightarrow (\rho' , \, H_\text{S}')$ with minimal fluctuations in work has work values $w(s'|s) = \beta^{-1}\log (x_se^{\beta \epsilon_s}/x_{s'}e^{\beta \epsilon_{s'}})$. Therefore there exists a thermodynamically reversible process achieving the transition $(\rho, \, H_\text{S})\rightarrow (\rho' , \, H_\text{S}')$ with fluctuations in work less than or equal to $c$ if 
\begin{equation}
e^{\beta (\Delta F-c)}\leq \frac{x_s e^{\beta \mathcal{E}_s}}{x_{s'} e^{\beta \mathcal{E}_{s'}}}\leq  e^{\beta (\Delta F+c)} \quad \forall \, s, \, s'\ , \label{revcond}
\end{equation}
where 
$\Delta F = F(\rho ) - F(\rho')$ 
is the change in the standard free energy. This becomes a necessary and sufficient condition if the initial and/or or final state is diagonal in the energy eigenbasis
\end{theorem}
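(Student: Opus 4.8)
The plan is to reduce everything to the population dynamics on the energy eigenlevels and then pin down the work carried by each pathway $s\to s'$ using reversibility. First I would fix the average-work baseline. The third framework condition makes the reduced map on system and bath a mixture of unitaries, which cannot decrease their joint entropy (Result 1 of \cite{masanes2014derivation}); the standard argument then bounds the extractable work by $\Delta F=F(\rho)-F(\rho')$, with equality exactly for a thermodynamically reversible process. So I restrict attention to reversible processes, for which $\langle w\rangle=\Delta F$.

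Next I would compute the per-pathway work. Writing the stochastic (surprisal) free energy of a level as $\mu_s=\mathcal E_s+\beta^{-1}\log x_s$, so that $F(\rho)=\sum_s x_s\mu_s$ and $F(\rho')=\sum_{s'} x_{s'}\mu_{s'}$, I would exhibit the canonical reversible protocol in two quasi-static isothermal stages through the Gibbs state: first $(\rho,H_\text{S})\to(\gamma,H_\text{S})$, then $(\gamma,H'_\text{S})\to(\rho',H'_\text{S})$. Because the Gibbs state has uniform potential $\mu^\gamma=-\beta^{-1}\log\mathcal Z$, the first stage extracts $\mu_s+\beta^{-1}\log\mathcal Z$ conditioned on the initial level $s$, while the second costs $\mu_{s'}+\beta^{-1}\log\mathcal Z$ conditioned on the final level $s'$; summing gives the claimed deterministic pathway work $w(s'|s)=\mu_s-\mu_{s'}=\beta^{-1}\log(x_se^{\beta\mathcal E_s}/x_{s'}e^{\beta\mathcal E_{s'}})$. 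A one-line check confirms $\langle w\rangle=\sum_{s,s'}p(s',s)\,(\mu_s-\mu_{s'})=\Delta F$ for any coupling $p$ with the correct marginals, consistent with reversibility; and since letting the bath absorb a random amount for fixed $(s,s')$ only broadens the distribution, the minimal-fluctuation reversible process is precisely the one in which $w$ is deterministic in $(s,s')$.

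Translating the constraint is then immediate: the requirement \eqref{basic bound} that $|w-\langle w\rangle|\le c$ reads $|w(s'|s)-\Delta F|\le c$ on every pathway actually used, which after exponentiation is exactly \eqref{revcond}. For sufficiency I would note that the decohering Gibbs step leaves the initial and final levels uncorrelated, so the protocol realises the product coupling $p(s',s)=x_sx_{s'}$ and therefore uses \emph{every} pair with $x_s,x_{s'}>0$; if \eqref{revcond} holds for all $s,s'$ the bound is met on all of them and such a reversible process exists.

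The main obstacle, and the content of the ``diagonal'' clause, is the necessity direction, where I must show that no reversible process can use fewer pairs. Here I would argue that a reversible process cannot build correlations between the initial and final levels: the translation-covariance condition $[U,\Delta_\text{W}]=0$ forbids using the weight's recorded value of the extracted work as a control for the formation stage, and recording the initial level in any other register would incur a Landauer erasure cost incompatible with reversibility. The coupling is thus forced to be the product coupling, every compatible pair is genuinely used, and \eqref{revcond} must hold for every $s,s'$, giving necessity. When instead $\rho$ or $\rho'$ carries coherence in the energy eigenbasis this forcing breaks down, since the coherence can act as a reference that permits partially correlated couplings and so lets one steer around the extreme pairs; this is exactly why the condition stays sufficient but need not be necessary, and why demanding that at least one state be diagonal restores the equivalence. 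The delicate step to get right is precisely this derivation of the product coupling from the covariance constraint, together with verifying that in the diagonal case no correlation-free reversible protocol can beat $w(s'|s)=\mu_s-\mu_{s'}$.
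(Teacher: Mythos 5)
Your sufficiency direction is essentially sound, and it takes a genuinely different route from the paper: where the paper (Supplementary Note 2) performs a direct Lagrangian maximisation of $\sum_{s,s'} x_s\, t(s'|s)\, w(s'|s)$ under the relaxed thermal-operation constraint and reads off the stationary work values, you compose optimal extraction $(\rho,H_\text{S})\to(\gamma,H_\text{S})$ with optimal formation $(\gamma,H'_\text{S})\to(\rho',H'_\text{S})$. Two repairs are needed, though. First, a bookkeeping slip: writing $\mu_s=\mathcal{E}_s+\beta^{-1}\log x_s$ as you do, the second stage costs $\mu_{s'}+\beta^{-1}\log\mathcal{Z}'$, not $\mu_{s'}+\beta^{-1}\log\mathcal{Z}$, so when $H_\text{S}\neq H'_\text{S}$ you must insert the intermediate Gibbs-to-Gibbs switch, which carries \emph{deterministic} work $\beta^{-1}\log(\mathcal{Z}'/\mathcal{Z})$, for the partition functions to cancel and yield $w(s'|s)=\mu_s-\mu_{s'}$. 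Second, your assertion that randomising $w$ at fixed $(s,s')$ ``only broadens the distribution'' is exactly the paper's minimal-fluctuation theorem (Supplementary Note 1E, Theorem~\ref{opt}), which needs the Jensen/strict-convexity argument applied to $e^{\beta w}$ inside the constraint~\eqref{result 1}; it must be proved, since replacing a fluctuating conditional work by its arithmetic mean would in general \emph{violate} the thermal-operation condition — only the exponential average preserves it.

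The genuine gap is your necessity argument. The claim that reversibility forces the product coupling $p(s',s)=x_s x_{s'}$ is false in this framework: for \emph{any} stochastic $t(s'|s)$ with the correct marginals, assigning $w(s'|s)=\mu_s-\mu_{s'}$ saturates the thermal-operation condition, since $\sum_s t(s'|s)\,e^{\beta(\mathcal{E}_{s'}-\mathcal{E}_s+w(s'|s))}=\sum_s t(s'|s)\,x_s/x_{s'}=1$. Hence arbitrarily correlated reversible protocols exist — e.g.\ for $\rho'=\rho$, $H'_\text{S}=H_\text{S}$, the identity map is a thermal operation achieving $\langle w\rangle=\Delta F=0$ with zero fluctuations — and neither $[U,\Delta_\mathrm{W}]=0$ nor any Landauer-erasure accounting rules them out (the system itself carries the record of its initial level at no cost, so no extra register or erasure is needed). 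The mechanism the paper actually uses is different and is \emph{coupling-independent}: for every fixed $t$, Lagrangian stationarity (giving $\lambda_{s'}=x_{s'}/\beta$) together with strict convexity forces the unique work values $w(s'|s)=\beta^{-1}\log\bigl(x_s e^{\beta\mathcal{E}_s}/x_{s'}e^{\beta\mathcal{E}_{s'}}\bigr)$ on every pathway with $t(s'|s)>0$ whenever $\langle w\rangle=\Delta F$; the $c$-bound on these values then gives \eqref{revcond}. The worry you correctly identified — that a protocol might steer around the extreme pairs by choosing the support of $t$ — is real, but it cannot be dispatched by outlawing correlated couplings; the paper's own proof simply imposes the bound on all $(s,s')$ without engaging it. Finally, your reading of the diagonality caveat (coherence as a reference enabling correlated couplings) is not the paper's: there it stems from the reduction of the quantum problem to classical population dynamics via the decoupling of coherence modes (Supplementary Note 1C), which makes the classical analysis equivalent to the quantum one only when the initial and/or final state is diagonal in the energy eigenbasis.
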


\noindent \emph{Proof:} For a detailed proof see Supplementary note 2.\\

Note that for any finite $c$ there exist states such that \eqref{revcond} cannot be satisfied. These bounds have strong consequences for the minimal fluctuations that can be achieved with a thermodynamically reversible protocol. For example
\begin{equation}
\lim\limits_{x_i \rightarrow 0} \log \frac{x_s e^{\beta \mathcal{E}_s}}{x_{s'} e^{\beta \mathcal{E}_{s'}}}= - \lim\limits_{x_{s'} \rightarrow 0} \log \frac{x_s e^{\beta \mathcal{E}_s}}{x_{s'} e^{\beta \mathcal{E}_{s'}}} = - \infty
\end{equation}
Therefore as an energy level occupation probability tends to zero the work fluctuation associated with transitioning to or from this energy level diverges, negatively for work extraction and positively for state formation, when performing a thermodynamically reversible protocol. In either case we require $c\rightarrow\infty$ in order to satisfy inequalities \eqref{revcond}.\\
\indent Result \ref{thermodynamic reversibility} tells us that the further from equilibrium the initial or final states are, the larger the work fluctuations will be in
thermodynamically reversible protocols. Note that cooling a system close to its ground state is an example of transitioning from the thermal state to a far from equilibrium state. Similarly, if we want to extract work form a far from equilibrium state using a thermodynamically reversible transformation we encounter the same divergence in fluctuations. The fluctuations can diverge even if the average work remains small (for example, if the system is a qubit with trivial Hamiltonian then $W\leq \beta^{-1}\log 2$). These divergences have been previously noted in the recent study of absolute irreversibility \cite{murashita2015absolute, murashita2014nonequilibrium}.\\
\indent Previous discussions of the inadequacy of the standard free energy in the nano-regime have focused on the definitions of work \cite{dahlsten2011inadequacy, aaberg2013truly}. Here we add another criticism, that using the standard free energy to describe work we necessarily requires set-ups that can tolerate arbitrary fluctuations, which diverge in size for processes with initial or final states that are increasingly far from equilibrium.  \\

\noindent \textbf{Work with bounded fluctuations.} Motivated by these observations we define the c-bounded work content $W^{(c)} (\rho)$ as the maximum average work that can be extracted from state $\rho$ with initial Hamiltonian $H_\text{S}$ and final Hamiltonian $H'_\text{S}$, when the fluctuations of the work are constrained by $c$, as in \eqref{basic bound}. This notion of c-bounded work, and a generalisation to include a small probability of failure, was proposed in \citep{aaberg2013truly} but not developed beyond its definition.
Analogously, we define the c-bounded work of formation $W^{(c)}_\mathrm{F} (\rho)$ as the minimal average work that is necessary to create a state $\rho$ with  Hamiltonian $H'_\text{S}$ from the Gibbs state (with respect to initial Hamiltonian $H_\text{S}$), such that fluctuations in the work are bounded by $c$. 
\begin{theorem}
The $c$-bounded work content $W^{(c)}(\rho)$ and work of formation $W^{(c)}_F(\rho)$ are given by
\begin{eqnarray}
W^{(c)}(\rho)\! &=& \!\frac{1}{\beta}\log \mathcal{Z}'- \frac{1}{\beta}\log \sum_s x_s^0\, e^{-\beta (\mathcal{E}_s-\theta^{(c)}_{s})} \label{cwork}
\\
W^{(c)}_F(\rho)\! &=&\! \frac{1}{X_\mathrm{u}}\! \left[ \sum\limits_{s\in \mathcal X_\mathrm{u}}\!  \frac{x_s}{\beta} \log \!\left(x_s e^{\beta \mathcal{E}'_s}\!\mathcal{Z}\! \right)\!\!+\!c \, (1\!-\!X_\mathrm{u})\right] \label{cform}
\end{eqnarray}
\end{theorem}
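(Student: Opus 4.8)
The plan is to recognise both quantities as the optimal values of two constrained optimisation problems over achievable work profiles, and to solve these by a Lagrange/KKT argument whose interior solution is exactly the reversible profile of Theorem~\ref{thermodynamic reversibility}. First I would invoke the classicality-of-work condition (Supplementary Note~3) to pass to the dephased, classical description: as far as work is concerned every admissible protocol is fixed by a work value $w_s$ assigned to each occupied energy level $s$, an initial level (occupied with probability $x_s$) being driven deterministically to work value $w_s$ while the system is taken to the relevant thermal state. Then $\langle w\rangle=\sum_s x_s w_s$ and the constraint \eqref{basic bound} becomes the box constraint $|w_s-\langle w\rangle|\le c$ for every $s$ with $x_s>0$.

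The key technical input is the achievability (second-law) constraint identifying which profiles $\{w_s\}$ are realisable. For extraction ending in the thermal state of $H'_\text{S}$ I would prove that $\{w_s\}$ is achievable iff $\sum_s x_s^0\,e^{-\beta(\mathcal{E}_s-w_s)}\le\mathcal{Z}'$, with equality precisely for the reversible profile $w_s=\mathcal{E}_s+\beta^{-1}\log x_s+\beta^{-1}\log\mathcal{Z}'$ of Theorem~\ref{thermodynamic reversibility}. Necessity of the inequality follows from the third axiom: the reduced map on system and bath is a mixture of unitaries and hence cannot decrease their joint entropy (Result~1 of \citep{masanes2014derivation}); combined with energy conservation and translation invariance of the weight this forces the bound, the support indicator $x_s^0$ reflecting that only occupied levels are constrained. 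Sufficiency I would establish by explicit construction, composing the reversible protocol with an entropy-increasing thermalisation to realise any profile compatible with the inequality. The formation constraint is the time-reverse of this statement, now referred to $\mathcal{Z}$ and the final energies $\mathcal{E}'_s$, with reversible cost $w^{F}_s=\beta^{-1}\log(x_s e^{\beta\mathcal{E}'_s}\mathcal{Z})$, so that the single-shot value is $\max_s w^{F}_s$.

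With these constraints in hand, $W^{(c)}$ is the maximum of $\langle w\rangle$ and $W^{(c)}_F$ the minimum of $\langle w\rangle$, each over the achievable set intersected with the box. Substituting $w_s=\langle w\rangle+\delta_s$ (so $\sum_s x_s\delta_s=0$), the extraction problem collapses to minimising the convex sum $\sum_s x_s^0\,e^{-\beta(\mathcal{E}_s-\delta_s)}$ over $|\delta_s|\le c$; the minimiser is the reversible deviation clipped to $[-c,c]$, which is precisely $\theta^{(c)}_s$, and reinserting it into the tight constraint reproduces \eqref{cwork}. The formation problem is handled analogously, except that its cost is set by the worst (largest-$w^{F}_s$) level—the max-free-energy phenomenon—so the active box constraints partition the levels into an unclipped set $\mathcal{X}_\mathrm{u}$ that pays its reversible cost $w^{F}_s$ and a complement that saturates the bound; carrying out the weighted average with $X_\mathrm{u}=\sum_{s\in\mathcal{X}_\mathrm{u}}x_s$ yields \eqref{cform}. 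Finally I would check the limits, recovering the single-shot formulas as $c\to0$ and $W^{(\infty)}=\beta^{-1}\log\mathcal{Z}'+F(\rho)$ as $c\to\infty$.

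The hard part is twofold. First, establishing the achievability inequalities as genuinely necessary \emph{and} sufficient from the operational axioms, rather than merely plausible, is the real content and is what I expect to occupy the Supplementary Note. Second, because the bound in \eqref{basic bound} is centred on the unknown mean $\langle w\rangle$, the box constraint is self-referential: the clip set—and hence $\mathcal{X}_\mathrm{u}$ together with the multiplier fixing $\sum_s x_s\delta_s=0$—must be pinned down by a fixed-point and monotonicity argument. The formation case is the more delicate, since its one-sided, worst-case structure is what produces the asymmetric $c\,(1-X_\mathrm{u})$ term and must be matched carefully against the single-shot limit.
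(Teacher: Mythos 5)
Your overall strategy matches the paper's: dephase to the classical description, characterise achievability via the exponentiated reversibility constraint, and solve a constrained optimisation whose interior solution is the reversible profile, with $\theta^{(c)}_s$ emerging as a clipped-and-recentred deviation and the one-sided clipping producing \eqref{cform}. Your change of variables $w_s = \langle w\rangle + \delta_s$, reducing extraction to minimising $\sum_s x_s^0\, e^{-\beta(\mathcal{E}_s-\delta_s)}$ over the box $|\delta_s|\le c$ intersected with the hyperplane $\sum_s x_s\delta_s=0$, is a tidier packaging of the paper's Lagrangian in Supplementary Note 3, which instead computes all multipliers $\mu_s,\bar\mu_s$ in closed form; the two are equivalent KKT arguments, and your fixed-point remark about the self-referential centre is exactly the content of the paper's partition algorithm (Lemma \ref{partition lemma} and Theorem \ref{partitionformation}).

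There is, however, one genuine gap. Your opening claim that ``every admissible protocol is fixed by a work value $w_s$ assigned to each occupied energy level'' does not follow from classicality of work: that axiom only yields a joint conditional distribution $t(s',w|s)$, so a priori the work may be stochastic given $s$, may depend on the final level $s'$, and the final state need not be Gibbs. Optimising over your restricted class therefore establishes only achievability (a lower bound on $W^{(c)}$, an upper bound on $W^{(c)}_F$), not optimality. The paper closes precisely this hole with Theorem \ref{opt} (Supplementary Note 1E) and Lemma \ref{simplified map} (Supplementary Note 3): by convexity of $e^{\beta w}$, replacing $w(s'|s)$ by its conditional average and $t(s'|s)$ by the Gibbs distribution preserves the mean work, keeps the reversibility budget \eqref{x4} satisfied (Jensen), and cannot increase the worst-case fluctuation — a dominated-protocol argument that must be proved, not assumed. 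Two smaller inaccuracies: necessity of the constraint is not obtained from entropy non-decrease but from unitality of the weight-averaged map, $\mathrm{tr}_\mathrm{W}\left[U\,{\mathbb I}_\mathrm{SB}\otimes\rho_\mathrm{W}\, U^\dagger\right]={\mathbb I}_\mathrm{SB}$, combined with energy conservation to give the exact identity \eqref{result 1} (entropy monotonicity alone would not yield an exponential equality), and sufficiency is imported from Theorem~5 of \cite{alhambra2016second} rather than constructed. Finally, the formation constraint is not literally the time-reverse aggregate inequality: with $\tilde t(s')=x_{s'}$ the conditions \eqref{result 1} remain \emph{per-level} caps $x_{s'}\mathcal{Z}\,e^{\beta(w_{s'}+\mathcal{E}_{s'})}\le 1$, and it is this per-level structure (not time reversal per se) that generates the worst-case, one-sided $c\,(1-X_\mathrm{u})$ term you correctly identify.
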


\noindent \emph{Proof:} See Supplementary Notes 3 and 4 respectively.\\

First we describe the terms in \eqref{cwork}. $\mathcal{Z}'= \sum_{s'} e^{-\beta \mathcal{E} _{s'}}$ is the partition function of the final Hamiltonian $H'_\text{S}$. The second term can be viewed as a generalisation of the min free energy \eqref{minfreenergy} that allows for fluctuations in work $\theta^{(c)}_s$. The only difference to the min free energy is the term $e^{\beta \theta^{(c)}_{s}}$ included in the summation. $\theta^{(c)}_{s}$ is the fluctuation of the work value from the $c$-bounded average $W^{(c)}(\rho)$ given that the system was initially in state $\ket{s}$. In order to find the fluctuations associated with the optimal $c$-bounded work extraction protocol we must partition the energy levels into three disjoint subsets $\{1,2, \ldots ,d \} = \mathcal X_\mathrm{u} \cup \mathcal X_+ \cup \mathcal X_-$, representing the energy levels with positive $\mathcal X_+$, negative $\mathcal X_-$ and unbounded $\mathcal X_\mathrm{u}$ fluctuations. We also define
\begin{eqnarray}
  X_\mathrm{u} &=& \mbox{$\sum_{s\in \mathcal X_\mathrm{u}}$} x_s \ ,\\
  X_\pm &=& \mbox{$\sum_{s\in \mathcal X_\pm}$} x_s \ .
\end{eqnarray}

The algorithm for determining the partition $\mathcal X_\mathrm{u} \cup \mathcal X_+ \cup \mathcal X_-$, which requires the checking of at most $d-1$ inequalities, is described Supplementary Note 3B. Once we have determined the partition, the fluctuations are given by \\
\begin{equation}\label{c bounded flucs}
\theta^{(c)}_s = \begin{cases}
\frac{1}{\beta}\log (x_s e^{\beta \mathcal{E}_s})-\nu,\quad  s\in \mathcal X_\mathrm{u} \\
+c , \quad s\in \mathcal X_+ \\
-c ,\quad s\in \mathcal X_-
\end{cases}
\end{equation}
where
\begin{eqnarray}
  \nu &=& \frac{1}{X_\mathrm{u}}\left(F_\mathrm{u} + c(X_+ - X_-) \right)
  \\
  F_\mathrm{u} (\rho) &=& \sum_{s\in \mathcal X_\mathrm{u}} x_s \log (x_s e^{\beta \mathcal{E}_s})
\end{eqnarray}
where $F_\mathrm{u} (\rho)$ is the un-normalised free energy calculated for the unbounded partition only. Note that $W^{(c)}(\rho)$ can be written in the more compact form\\
\begin{equation}
W^{(c)}(\rho) = \frac{1}{\beta}\log\mathcal{Z}'-\frac{1}{\beta}\log\left(X_\mathrm{u} e^{-\beta \nu}+\mathcal{Z}_+e^{\beta c}+\mathcal{Z}_- e^{-\beta c} \right)\label{cwork 2}
\end{equation}
where 
\begin{eqnarray}
  \mathcal{Z}_\pm &=& \sum_{s\in \mathcal X_\mathrm{\pm}}e^{-\beta \mathcal{E}_s}
\end{eqnarray}
are the partition functions calculated over the positive and negative bounded partitions respectively. In Supplementary Notes 3 and 4 we find that in the optimal work extraction and state formation protocols the final / initial state of the system is the Gibbs state, which is diagonal in the energy eigenbasis. Hence equations \eqref{cwork} and \eqref{cform} give the optimal work for arbitrary quantum states (See Supplementary Note 1). For a two-level quantum system $d=2$, the work content can be expressed succinctly as 
\begin{eqnarray}
  \label{simple work 2}
  && W^{(c)}(\rho) = 
  \\ \nonumber
  &&\begin{cases}
   \frac 1 \beta \log\mathcal{Z}'
   -\frac 1 \beta \log\! \left[ 
    e^{+\beta c}\left(1+e^{-\beta 
    (\mathcal{E} + c /x_1)}\right) 
    \right] 
    &  \mbox{if } c<\xi
    \\
    \frac 1 \beta
    \log\mathcal{Z}'-\frac 1 \beta
    \log\! \left[ 
    e^{-\beta c}\left(1+e^{-\beta 
    (\mathcal{E}- c/x_1)}\right) 
    \right] 
    & \mbox{if } c>-\xi
    \\
    \frac 1 \beta
    \log\mathcal{Z'}+ F(\rho)
    & \mbox{otherwise }
  \end{cases}
\end{eqnarray}
where
\begin{equation}
  \xi = \frac 1 \beta \ln(1\!-\!x_1) -F(\rho)\ .
\end{equation}
Without loss of generality we have assumed above $x_1 \geq x_2$ and we define $\mathcal E = \mathcal E_1 - \mathcal E_2$.
The above expression derives from \eqref{cwork 2} and the partitioning algorithm given in Supplementary Note 3.\\

\indent Returning to the $c$-bounded work of formation, \eqref{cform} and the algorithm for finding the state space partition giving the $c$-bounded work of formation \eqref{cform} is detailed in Supplementary Note 4. Note that the hamiltonian of $\rho$, the final state of the system, is given by $H'_\text{S}=\sum_j \mathcal{E}'_j \ketbra{j}{j}$. The work of formation for a two level system can be succinctly stated as 

\begin{equation}
W^{(c)}_\text{F}(\rho)=\begin{cases}
\frac{1}{\beta}\log\mathcal{Z}-F(\rho) \, , \,\quad \quad \quad\, \,\, \,c\geq \xi\\
\frac{1}{\beta}\log \left(x\,e^{\beta\mathcal{E}'}\mathcal{Z} \right)+c\frac{1-x}{x}\, , \, c<\xi
\end{cases} \label{formation simple}
\end{equation}
where $\rho=\! x \ketbra{0}{0}\!+(1\!-\!x)\ketbra{1}{1}$, $x\geq 1/2$ and $H_\text{S}=\mathcal{E}\ketbra{0}{0}$ is the Hamiltonian of the initial Gibbs state. \\

\begin{figure}[h!]
\includegraphics[scale=0.6]{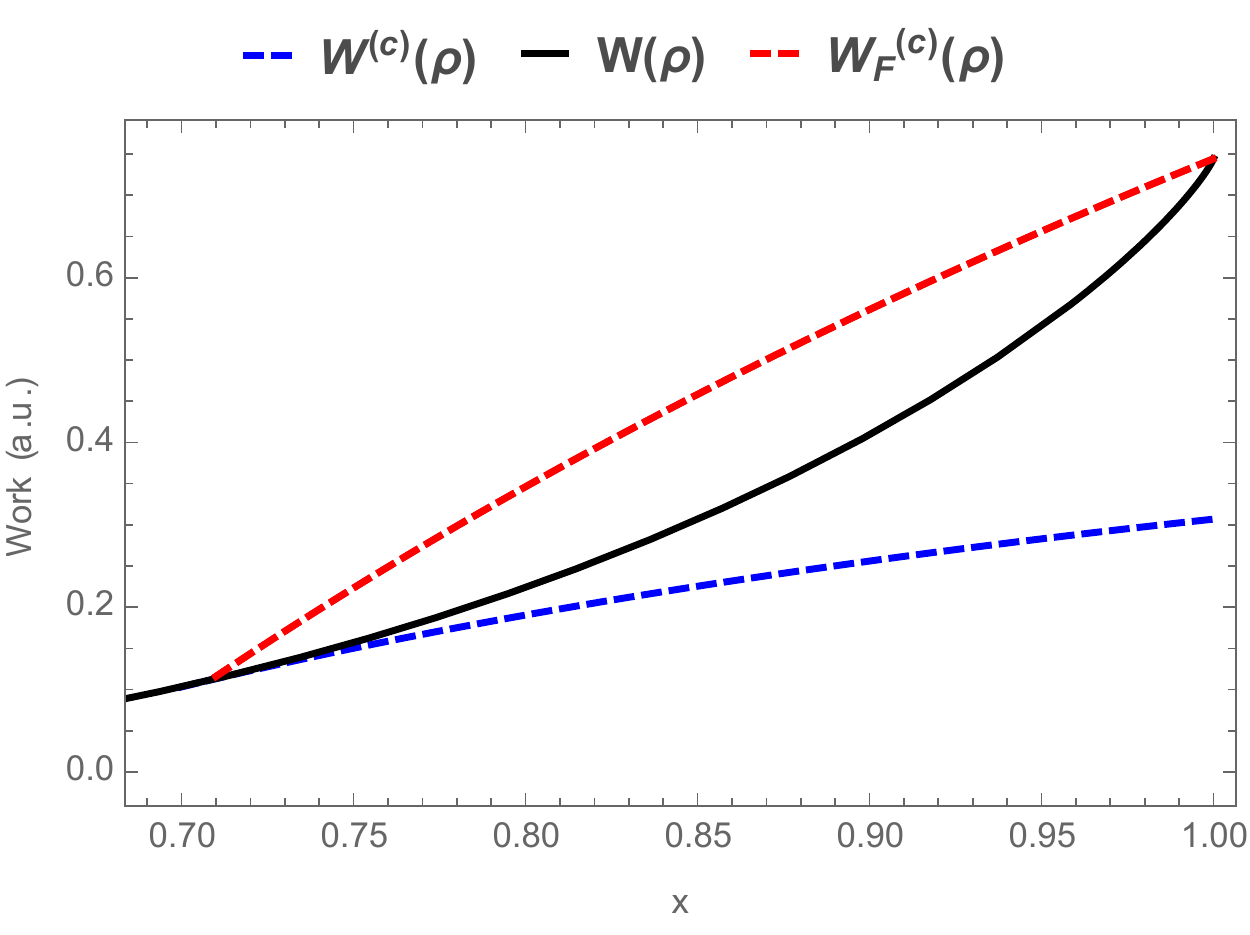}
\caption{How $c$-bounded work extraction and state formation varies further from equilibrium. Figure shows the unbounded work $W(\rho )$, the $c$-bounded work content $W^{(c)}(\rho )$ and the $c$-bounded work of formation $W^{(c)}_F(\rho )$ for the state $\rho= x \ketbra{0}{0} + (1-x)\ketbra{1}{1}$ with Hamiltonian $H_s = \mathcal{E}\ketbra{0}{0}$ with $\beta=1$, $\mathcal{E}=0.1$ and $c=0.7$. Note that despite choosing a $c$-bound that allows for fluctuations of the order of the maximal work that can be extracted from the pure state $x=1$, in general we can extract much less that this amount. There is a discontinuity in $W^{(c)}(\rho )$ at $x=0$ where we recover $W^{(c)}(\rho ) = W^{(\infty)}(\rho)$. Notice also that closer to the thermal state the dissipation (difference between the $W^{(c)}(\rho )$ or $W^{(c)}_F(\rho )$ and $W^{(\infty)}(\rho )$) is greater for state formations than work extraction, and this reverses as the state moves further from the Gibbs state.}\label{fig1}
\end{figure}

\begin{figure}[h!]
\includegraphics[scale=0.6]{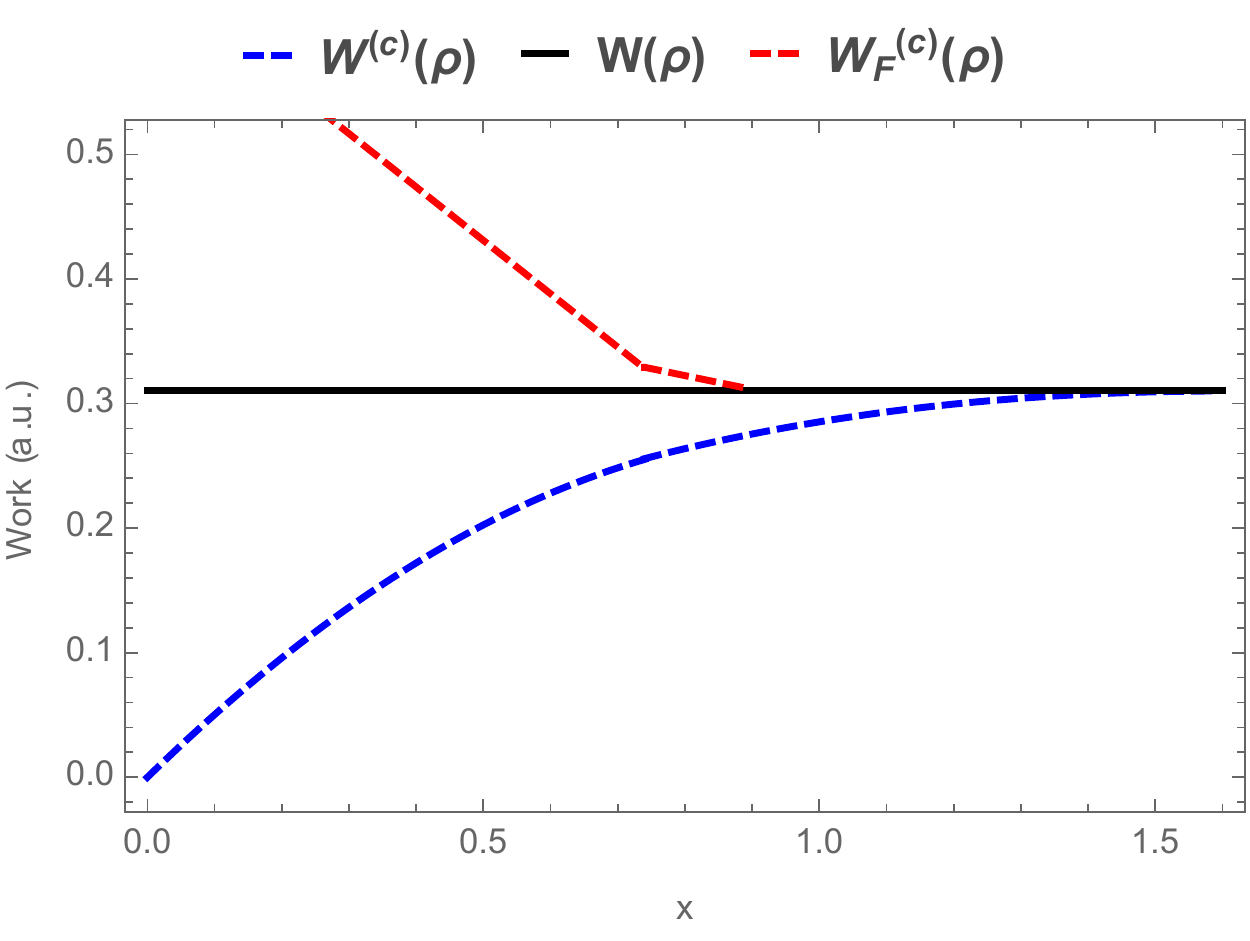}
\caption{How $c$-bounded work extraction and state formation varies with $c$ for a fixed initial / final state. figure shows $W^{(c)}_F (\rho )$ and $W^{(c)} (\rho )$ v.s. $c$ for the trit state $\rho = 0.7 \ketbra{0}{0}+0.2\ketbra{1}{1} +0.1 \ketbra{2}{2}$ with Hamiltonian $H_s=\mathcal{E}_1\ketbra{0}{0}+\mathcal{E}_2\ketbra{1}{1}$ with $\mathcal{E}_1 = 0.1$, $\mathcal{E}_2=0.2$. $\beta$ is set to 1. For small $c$ the dissipation $|W^{(c)}(\rho )-W^{(\infty)}(\rho )|$ for the formation protocol is greater than for the extraction protocol, and for large $c$ the relationship is inverted. Note that for $c>0.9$ it possible to thermodynamically reversibly prepare state $\rho$ but not to thermodynamically reversibly extract work from it. } \label{fig2}
\end{figure}

We now summarise some of the properties of the $c$-bounded work.
\begin{theorem}
The $c$-bounded work is related to the non-fluctuating work by inequalities
\begin{eqnarray}
W^{(c)}(\rho) &\leq& W^{(0)} (\rho) + c \label{bounds inequality}\\
W_\text{F}^{(c)}(\rho) &\geq& W_\text{F}^{(0)} (\rho) - c
\end{eqnarray}
becoming strict inequalities for $c>0$
\end{theorem}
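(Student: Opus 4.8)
The plan is to prove both inequalities operationally, using that $W^{(0)}$ and $W^{(0)}_F$ are the \emph{optimal worst-case} (deterministically guaranteed) quantities, and then to upgrade them to strict inequalities using the explicit fluctuation profile \eqref{c bounded flucs}. The guiding idea is that both statements say the same thing: allowing fluctuations can improve the average work by at most $c$ relative to the deterministic value, because in any $c$-bounded protocol every individual run lies within $c$ of the mean by \eqref{basic bound}. First I would fix an optimal $c$-bounded extraction protocol, so that its average extracted work equals $W^{(c)}(\rho)$ and every realisation satisfies $w \ge W^{(c)}(\rho) - c$ with certainty.

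For \eqref{bounds inequality}, the observation is that such a protocol extracts at least $W^{(c)}(\rho)-c$ on \emph{every} run, and hence (discarding any surplus) is in particular a protocol extracting this amount deterministically. Since $W^{(0)}(\rho)$ is, by its single-shot definition \eqref{minfreenergy}, the largest work extractable with certainty, I would conclude $W^{(c)}(\rho)-c \le W^{(0)}(\rho)$. The same bound can be checked directly from \eqref{cwork}: writing $W^{(c)}(\rho)-W^{(0)}(\rho) = -\beta^{-1}\log\big(\sum_s x_s^0 e^{-\beta\mathcal{E}_s}e^{\beta\theta^{(c)}_s}\big/\sum_s x_s^0 e^{-\beta\mathcal{E}_s}\big)$ and using $\theta^{(c)}_s \ge -c$ from \eqref{c bounded flucs} shows that the bracketed weighted average of $e^{\beta\theta^{(c)}_s}$ is at least $e^{-\beta c}$, which gives the claim.

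The formation inequality is the mirror image. An optimal $c$-bounded formation protocol has average cost $W^{(c)}_F(\rho)$ and, by \eqref{basic bound}, never pays more than $W^{(c)}_F(\rho)+c$ on any run, so it forms $\rho$ with a guaranteed cost at most $W^{(c)}_F(\rho)+c$. As $W^{(0)}_F(\rho)$ is the minimal cost of forming $\rho$ with certainty, I would obtain $W^{(0)}_F(\rho) \le W^{(c)}_F(\rho)+c$. This can likewise be verified from \eqref{cform} by identifying the per-level formation cost $a_s = \beta^{-1}\log(x_s e^{\beta\mathcal{E}'_s}\mathcal{Z})$, for which $W^{(0)}_F(\rho)=\max_s a_s$, and combining the single-shot feasibility relation $w_s \ge a_s$ with $w_s = W^{(c)}_F(\rho)+\theta^{(c)}_s \le W^{(c)}_F(\rho)+c$ evaluated at the maximising level.

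Finally, for strictness when $c>0$ I would invoke the vanishing-mean identity $\sum_s x_s \theta^{(c)}_s = 0$, which follows from $\langle w\rangle=W^{(c)}(\rho)$ together with \eqref{c bounded flucs}. Since the $x_s$ are non-negative and sum to one, not every populated level can carry the favourable saturating fluctuation $\theta^{(c)}_s=-c$ (extraction) or $\theta^{(c)}_s=+c$ (formation); hence the weighted average of $e^{\beta\theta^{(c)}_s}$ is \emph{strictly} above $e^{-\beta c}$ in the extraction case, and the cost-maximising level lies in $\mathcal X_\mathrm{u}$ with $\theta^{(c)}_{s^*}<c$ in the formation case, making both inequalities strict. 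I expect this last step to be the only real obstacle: the bounds themselves are immediate from the worst-case argument, but establishing strictness rigorously requires the explicit optimal partition $\mathcal X_\mathrm{u}\cup\mathcal X_+\cup\mathcal X_-$ and the fluctuation profile \eqref{c bounded flucs} from Supplementary Notes 3 and 4, in particular the fact that the most expensive level to form is never assigned a $+c$ fluctuation.
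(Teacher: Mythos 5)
Your proof is essentially correct but follows a genuinely different route from the paper's. The paper (Supplementary Note 5) proves the extraction bound by a sub-linearity argument: it computes $\partial W^{(c)}/\partial c$ from the closed form \eqref{cwork 2}, shows that $\partial W^{(c)}/\partial c>1$ would force $X_+>1/2$, rules this out using the vanishing-mean identity $\sum_s x_s\theta_s=0$ (the very identity you reserve for strictness), and then integrates from the boundary value $\lim_{c\to 0}W^{(c)}=W^{(0)}$; the formation bound is proved by contradiction with the partition inequality \eqref{counter}. Your argument is instead pointwise in $c$: the operational worst-case bound, backed by the direct verification from \eqref{cwork} that the $\left(x_s^0 e^{-\beta\mathcal E_s}\right)$-weighted average of $e^{\beta\theta^{(c)}_s}$ is at least $e^{-\beta c}$ since every populated level has $\theta^{(c)}_s\geq -c$. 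This buys you two things the paper's route does not: you need no differentiability of $W^{(c)}$ across the values of $c$ where the partition $\mathcal X_\mathrm{u}\cup\mathcal X_+\cup\mathcal X_-$ changes (a regularity point the derivative argument passes over silently), and your mean-zero argument actually delivers strictness for extraction, whereas the supplementary lemmas are stated and proved only in non-strict form.

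Two caveats. First, the step ``discarding any surplus \ldots is in particular a protocol extracting this amount deterministically'' is not licensed as literal protocol surgery in this framework: by the theorem in Supplementary Note 1, a map is a thermal operation if and only if the \emph{equality} \eqref{result 1} holds, and flattening all work values down to the minimum strictly lowers the left-hand side (deterministically dumping work into the bath would require unitarily translating a spectrum bounded below). The conclusion survives because you only need that no valid protocol \emph{guarantees} more than $W^{(0)}$ on every run, which follows in one line from the relaxed constraint \eqref{x4}: if $w\geq a$ on all transitions out of populated levels, then
\begin{equation}
e^{-\beta a}\,\mathcal Z' \;\geq\; \sum_s x_s^0\, e^{-\beta \mathcal E_s}\,,
\end{equation}
i.e.\ $a\leq W^{(0)}$ — or you can lean entirely on your formula check. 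Second, your formation strictness rests on the claim that the cost-maximising level, lying in $\mathcal X_\mathrm{u}$, has deviation strictly below $c$; but inactivity of a constraint in the Lagrangian does not preclude coincidental saturation. For a qubit with $x=1/2$ and the capping active, the self-consistency $W^{(c)}_\text{F} = \tfrac12 a_0 + \tfrac12\bigl(W^{(c)}_\text{F}-c\bigr)$ (with $a_0$ the cost of the expensive level) gives $W^{(c)}_\text{F} = W^{(0)}_\text{F}-c$ exactly, so strictness can fail at such degenerate points. The paper itself only establishes the non-strict formation inequality in the supplement, so your proof is no weaker than the paper's on this point — but the $\mathcal X_\mathrm{u}$-membership observation alone does not settle it, and a non-degeneracy hypothesis (or an explicit treatment of the degenerate $\beta$-ordered case) is needed.
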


\noindent \emph{Proof:} See Supplementary Note 5.\\

These inequalities imply a fundamental trade-off between work and fluctuations. In order to do better than single shot work extraction / state formation, our work must have fluctuations that are greater than the increase in work / decrease in work cost with respect to the deterministic work. In Supplementary Note 5 we show that the $c$-bounded work distributions that give $W^{(c)}(\rho)$ and $W^{(c)}_F(\rho)$ obey the Jarzinski equality \cite{jarzynski1997nonequilibrium}. In Supplementary Note 5 we prove that $\lim\limits_{c\rightarrow 0}W^{(c)}(\rho) = W^{(0)}(\rho)$ and similarly for $W_\text{F}^{(c)}(\rho)$.\\ 
\indent For the interested reader, it is simple to see that for any finite $c$ a partial ordering of the states w.r.t work emerges. A simple way to observe this is to choose a qubit state $\rho$ with Hamiltonian $H_\text{S}$ and a thermal qubit state $\gamma$ with Hamiltonian $H'_\text{S}\neq H_\text{S}$ such that neither state thermo-majorizes the other (see \citep{horodecki2013fundamental} for examples). For any two such states there is a value of $c$ below which $W^{(c)}(\rho)$ (for $\rho \rightarrow \gamma$) is negative and $W^{(c)}_\text{F}(\rho)$ (for $\gamma \rightarrow \rho$) is positive, i.e. it costs work to perform both the forward and backwards transitions. Note that there are states with the same standard free energy that exhibit a partial order for finite $c$. Allowing the weight to fluctuate allows us to transition between these states freely. This is an example of how weight is not just a resource for extracting additional fluctuating work but in accommodating dynamics, even when on average its displacement remains zero. It is an interesting open question to determine how much we must allow work to fluctuate to allow a transition $\rho\rightarrow \rho'$ to be achieved without costing work.\\

\noindent \textbf{Qubit Carnot engine.} In this section we find the $c$-bounded Carnot efficiency for a qubit carnot engine model, i.e. the maximal efficiency the qubit engine can reach given that fluctuations in the work it produces are bounded by $c$. We use the same single qubit engine model as described in \cite{skrzypczyk2014work}. The engine operates by moving a qubit $\rho$ with Hamiltonian $H_s=\mathcal{E}\ketbra{0}{0}$ between two baths of inverse temperature $\beta_H$ and $\beta_C$, with $\beta_H<\beta_C$. The qubit has state $\rho_{H,C} = \mathcal{Z}_{H,C}^{-1}e^{-\beta_{H,C} \mathcal{E}}\ketbra{0}{0}+\mathcal{Z}_{H,C}^{-1}\ketbra{1}{1}$ when in equilibrium with the hot / cold bath, where $\mathcal{Z}_{H,C}= 1+e^{-\beta_{H,C}\mathcal{E}}$. The engine cycle begins with the qubit in thermal equilibrium with the cold bath. In the first half of the cycle it is then placed in contact with the hot bath and work is extracted. In the second step of the cycle the qubit is returned to the cold bath and work is extracted a second time. In Supplementary Note 6 we show that, in the case that fluctuations are not bounded, it is possible to reach Carnot efficiency with this engine, as shown in \cite{skrzypczyk2014work}.

\begin{equation}
\eta_\text{Carnot}=1-\frac{\beta_H}{\beta_C}
\end{equation}
In the case that $c$ is finite, the work extracted in the first half of the cycle is given by 
\begin{equation}
W_1^{(c)}(\rho ) = \begin{cases}
\frac{1}{\beta_H}\log \left(\frac{\mathcal{Z}_H}{\mathcal{Z}_C}\right)+\text{tr}[H_s \rho_C ]\left(\frac{\beta_H-\beta_C}{\beta_H}\right),\, \text{if} \, A>c \\
\frac{1}{\beta_H}\log\left(\frac{\mathcal{Z}_He^{\beta_H c}}{e^{c\beta_H\mathcal{Z}_C}+e^{-\mathcal{E}\beta_H}}\right)\, , \, \text{if}\, A\leq c
\end{cases}
\end{equation}
where 
\begin{equation}
A = 
\frac{\mathcal{E}}{\mathcal{Z}_C}\left(\frac{\beta_C-\beta_H}{\beta_H} \right) \label{cef1}
\end{equation}
if $A\leq c$ we simply extract the difference in free energy between the two thermal states, otherwise we extract the $c$-bounded work of $\rho_C$ in contact with the bath $\beta_H$. Similarly, on the second part of the cycle we extract 
\begin{equation}
W_2^{(c)}(\rho ) = \begin{cases}
\frac{1}{\beta_C}\log \left(\frac{\mathcal{Z}_C}{\mathcal{Z}_H}\right)+\text{tr}[H_s \rho_H ]\left(\frac{\beta_C-\beta_H}{\beta_H}\right),\, \text{if} \, B>c \\
\frac{1}{\beta_C}\log\left(\frac{\mathcal{Z}_Ce^{-\beta_C c}}{e^{-c\beta_C\mathcal{Z}_H}+e^{-\mathcal{E}\beta_C}}\right)\, , \, \text{if}\, B\leq c
\end{cases}
\end{equation}
where
\begin{equation}
B = 
\frac{\mathcal{E}}{\mathcal{Z}_H}\left(\frac{\beta_C-\beta_H}{\beta_C} \right) \label{cef2}
\end{equation}
Note that satisfying \eqref{cef2} implies that \eqref{cef1} is also satisfied, therefore breaking inequality \eqref{cef1} is the condition for achieving Carnot efficiency in this model. Also note that $B$ gives the minimum worst case fluctuation of the work extracted by this engine when operating thermodynamically reversibly. The efficiency is given by the ratio of the heat flow from the hot bath to the total work extracted in the cycle. The heat flow from the hot bath is found by applying the $1^\text{st}$ law of thermodynamics, $Q_H = \Delta \langle \mathcal{E}\rangle  (\rho_H \rightarrow \rho_C) + W_1^{(c)}$ where $ \Delta \langle \mathcal{E}\rangle  (\rho_H \rightarrow \rho_C)$ is the change in the systems internal energy in the first part of the cycle. Therefore the $c$-bounded efficiency of the engine is given by 
\begin{equation}
\eta^{(c)} =\frac{W^{(c)}_1+W^{(c)}_2}{\Delta U + W^{(c)}_1}
\end{equation} 
\begin{figure}[h!]
\includegraphics[scale=0.6]{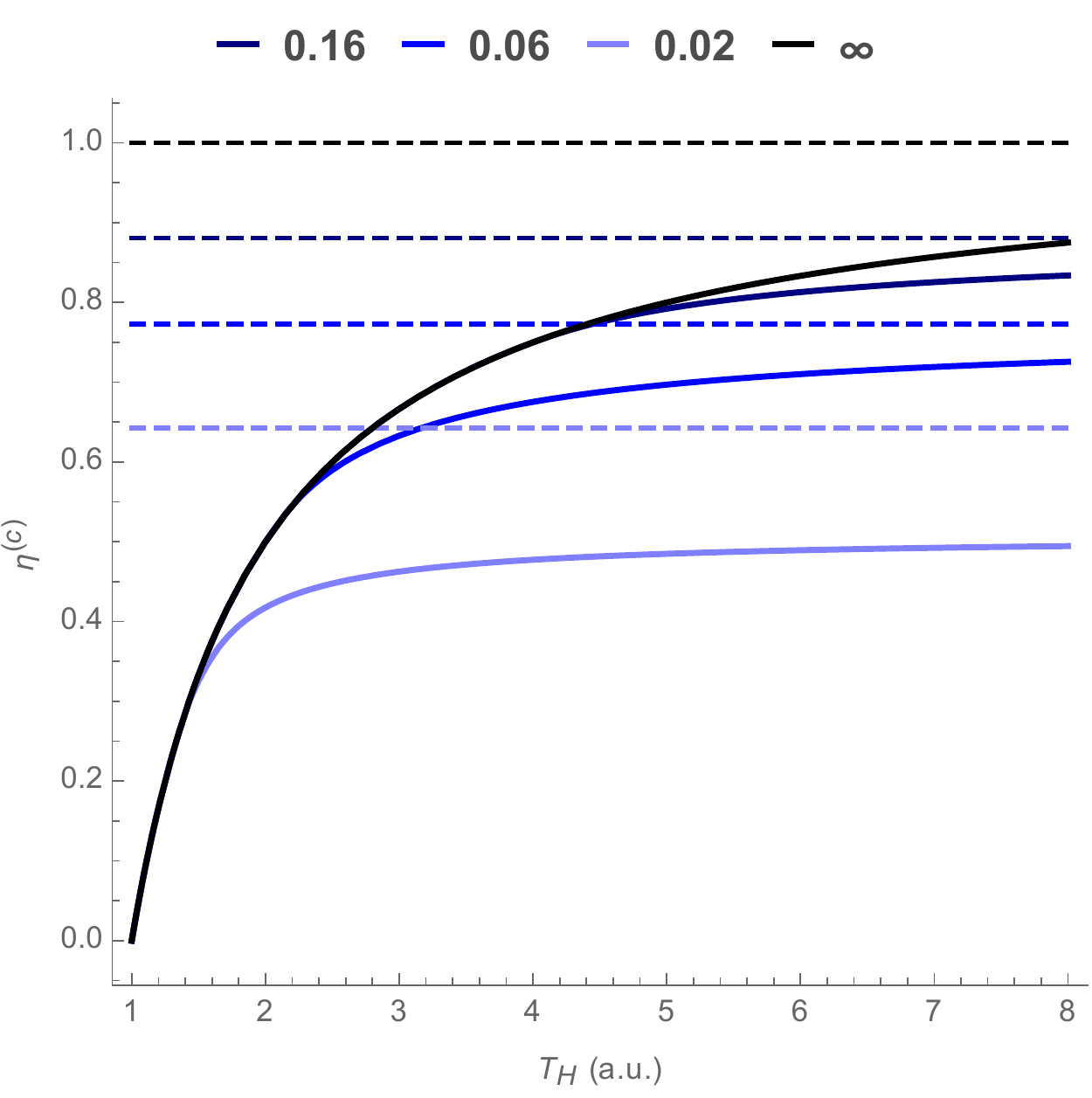}
\caption{The corrected carnot efficiency for various values of $c$ for a single qubit heat engine. Figure shows the $c$-bounded Carnot efficiencies vs. $T_H$ for single qubit engine with gap $\mathcal{E}=0.1$ and $T_C=1$. The black line gives the unbounded Carnot efficiency. The Blue lines give the $c$-bounded efficiencies with there corresponding $c$'s marked on the figure. The dashed lines give the maximum attainable efficiency in the limit of asymptotic temperature difference between $\beta_C/\beta_H\rightarrow \infty$}
\end{figure}
In the case that $c\rightarrow\infty$, we recover the Carnot efficiency, which is bounded from above by 1, i.e. we recover unit efficiency in the limit that $\beta_C/\beta_H\rightarrow\infty$. For any finite $c$ this is no longer the case, with the maximal efficiency give by \\
\begin{equation}
\eta^{(c)}_\text{max}=\lim\limits_{\beta_C/\beta_H \rightarrow\infty} \eta^{(c)}= 1-\frac{\mathcal{E}}{2(2c+\mathcal{E})}
\end{equation}
This gives an upper limit on the efficiency of the single qubit engine protocol described above, which is dependent only on the Hamiltonian of the qubit and the parameter $c$.  \\
\indent As $\mathcal{E}\rightarrow 0$, $\eta^{(c)}_\text{max}\rightarrow 1$, but the work extracted tends to zero as the Gibbs states associated with the two bath temperatures become indistinguishable. For $c\rightarrow 0$ we get that $\eta^{(c)}_\text{max}$ is bounded from below by $1/2$, but at $c=0$ no work can be extracted as the thermal states are of full rank, giving $\eta^{(0)}_\text{max}=0$. Therefore we find that, although this engine cannot run at non-zero efficiency in the single-shot regime, if we allow for arbitrarily small fluctuations it is possible in principle to reach a maximum efficiency greater that $1/2$ (although the fluctuations in work will still be of the order of the work extracted, given inequality \eqref{bounds inequality}). Similar results relating to the single-shot regime are discussed in \cite{woods2015maximum}.   \\

\section*{Discussion}

\noindent In this article we have derived tight bounds on the minimal fluctuations in work associated with thermodynamically reversible protocols, for which the average work is given by difference in free energy between initial and final states. We have found that thermodynamically reversible protocols have fluctuations that diverge in size as the relative athermality of initial or final states increases.\\
\indent Motivated by this we have presented a framework for computing the work associated with a thermodynamic process under arbitrary convex bounds. We have derived the $c$-bounded work content and work of formation of arbitrary quantum states, which can be understood as modified free energies that interpolate between the standard and single shot free energies. By exploring this new territory, we have found that the phenomenology of single-shot thermodynamics, namely thermodynamic irreversibility and a partial order of states with respect to work, are to some extent present for any finite $c$. Furthermore we have found that it is impossible to extract more that the deterministic work content of a system without necessitating fluctuations that are greater than the gain in work (and similarly for the work cost of state formation). One potential avenue for extending these results would be to consider a more general definition of $c$-bounded work that includes a small probability of failure in the work extraction process (as proposed in \cite{aaberg2013truly})\\
\indent An interesting open question is to what extent we must allow work to fluctuate in order to allow for a given state transformation. Answering this question would require the extension of the results presented in this article to processes with arbitrary initial and final states, including the case where both initial and final states contain coherences between energy levels. In Supplementary Note 1C we show that, under the assumption that the protocol is independent on the position state of the weight, the ``coherence modes'' \cite{lostaglio2015description,lostaglio2015quantum,cwiklinski2015limitations} evolve independently under the action of the thermal map. This lays the ground for future investigations into how the presence of coherences affects the allowed thermal operations in the case that work is allowed to fluctuate. \\
\indent Finally, we have used the $c$-bounded work to study how bounding work fluctuations affects the efficiency of a single qubit nano-engine, and have derived an upper bound on efficiency of this engine that depends only on $c$ and the engine's Hamiltonian, establishing a fundamental trade-off between a the engines efficiency and the fluctuations in the work it produces. This opens the door to correcting the efficiency for general thermodynamic protocols, taking into account the fragility of realistic machines that cannot tolerate large fluctuations in work.\\ 
\indent Given that there are many thermal engine models that can reach Carnot efficiency in the case the fluctuations in work are unbounded, it would be of interest to determine the optimal engine with respect to minimizing fluctuations in work whilst maximising efficiency or the power produced. Furthermore, it is well known that in the thermodynamic limit the relative size of fluctuations in work to the average tends to zero. It would be of interest to determine if it is possible to design engines operating far from the thermodynamic limit that achieve a similar quasi-deterministic work output with non-zero power. For example it could be possible, through clever choice of the working system Hamiltonian, or by controlling interactions between a small number of systems that constitute the working system, to find engine models that achieve quasi-deterministic work output without needing to take the thermodynamic limit. Further work in this direction would provide invaluable insights for designing realistic nano-engines that are robust to fluctuations in work. \\

\section*{Methods}

\noindent \textbf{Proofs and derivations.} The proofs are contained in he Supplementary information. In Supplementary note 1 we address the preliminaries and framework within which we derive our proofs, including the framework of thermal operations with fluctuating work, changing Hamiltonians, coherences and reducing quantum the protocols to classical protocols, and proofs that our framework is both general and optimal. In Supplementary note 2 we derive the form of the work optimal protocols with unbounded fluctuations, and derive Result 1. In Supplementary note 3 we derive the c-bounded work content, and the corresponding state partitioning algorithm. In Supplementary note 4 we derive the c-bounded work of formation, and the corresponding state partitioning algorithm. In Supplementary note 5 we show that the standard and min free energy can be recovered in the limits $c\rightarrow \infty$ and $c\rightarrow 0$ respectively, and derive the trade-off bounds relating the optimal work to the size of the worst-case fluctuations. In Supplementary note 6 we derive the c-bounded Carnot for the single qubit heat engine model. \\

\noindent \textbf{Data availability}\\ 
\noindent Data sharing is not applicable to this article, as no data sets were generated or analysed during this study. \\

\bigskip

\noindent \textbf{Acknowledgements} \\
\noindent JR would like to thank Jochen Gemmer, Martí Perarnau-Llobet, Alvaro Alhabra and Chris Perry for discussions. JR is supported by ESPRC.\\

\noindent \textbf{Author contributions}\\
\noindent Both authors contributed equally to this work

\noindent \textbf{Author competing financial interests}\\
\noindent The authors declare no competing financial interests

\newpage

\section{Supplementary note 1}  

\subsection{Thermal operations with fluctuating work}

\noindent In this supplementary note we introduce the framework that we will use to derive $W^{(c)} (\rho )$. The work extraction protocol is performed using a system $\rho$, a infinite thermal bath and a work system or weight. \\

First, let us characterize the type of process/operation that we consider, which we refer to as thermal operations with fluctuating work. Our setting consists of a system with Hamiltonian $H_{\rm S}$, a bath with Hamiltonian $H_{\rm B}$ initially in the thermal state $\rho_\mathrm{B} = \frac 1 {Z_{\rm B}} \e^{-\beta H_\mathrm{B}}$, and an ideal weight with Hamiltonian $H_\mathrm{W}=  \int_\mathbb{R} dx\, x |x\rangle \! \langle x|$, where the orthonormal basis $\{|x\rangle, \forall\, x\in \mathbb R\}$ represents the position of the weight. 
Any joint transformation of system, bath and weight is represented by a Completely Positive Trace Preserving (CPTP) map $\Gamma_\mathrm{SBW}$ satisfying the following conditions:
\begin{description}
  \item[Microscopic reversibility (Second Law):] It has an (CPTP) inverse $\Gamma_\mathrm{SBW}^{-1}$, which implies unitarity $\Gamma_\mathrm{SBW} (\rho_\mathrm{SBW}) = U\rho_\mathrm{SBW} U^\dagger$.
    
  \item[Energy conservation (First Law):] 
  $[U,H_\mathrm{S} +H_\mathrm{B} +H_\mathrm{W}] =0$.

  \item[Independence from the ``position" of the weight:]
  The unitary commutes with the translations on the weight $[U,\Delta_\mathrm{W}] = 0$. The generator of the translations $\Delta_\mathrm{W}$ is canonically conjugated to the position of the weight $[H_\mathrm{W}, \Delta_\mathrm{W}] = \i$.

  \item[Classicality of work:] Before and after applying the global map $\Gamma_\mathrm{SBW}$ the position of the weight is measured, obtaining outcomes $|x\rangle$ and $|x+w\rangle$ respectively. The joint transformation of the system and work random variable $w$ is given by the map
\begin{equation}
  \label{TO}
  \Lambda (\rho_\mathrm{S},w)
  =
  \int_\mathbb{R}\! dx\, {\rm tr}_\mathrm{BW}\! 
  \left[ Q_{x+w}\,
  U \left( \rho_\mathrm{S}\, \rho_\mathrm{B}\,
  Q_x \rho_\mathrm{W} Q_x
  \right) U^\dagger \right]\ ,
\end{equation}
where $Q_x = |x\rangle\! \langle x|$ is a weight eigen-projector.

  
\end{description}
Condition $[U,\Delta_\mathrm{W}] = 0$  implies that the reduced map on system and bath is a mixture of unitaries (Theorem 1 in \cite{masanes2014derivation}). Hence, this transformation can never decrease the entropy of system and bath, which guarantees that the weight is not used as a source of non-equilibrium. 
%
Note that after integrating over the work varable
\begin{equation}
  \int_\mathbb{R}\! dw\, 
  \Lambda (\rho_\mathrm{S},w)
  = {\rm tr}_\mathrm{BW}\! \left[U 
  \left( \rho_\mathrm{S}\, \rho_\mathrm{B}\, \Theta [\rho_\mathrm{W}] \right) 
  U^\dagger \right]
  = \Gamma_\mathrm{S} (\rho_\mathrm{S}) \ ,
\end{equation}
we obtain the reduced CPTP map for the system. The state $\Theta[\rho_\mathrm{W}]  = \int_\mathbb{R}\! dx\, Q_x \rho_\mathrm{W} Q_x$ is the energy-diagonal version of $\rho_\mathrm{W}$. But $\Lambda (\rho_\mathrm{S},w)$ being independent of $\rho_\mathrm{W}$, we can choose the initial state to be diagonal $\Theta [\rho_\mathrm{W}]$.
Also, note that when tracing the system
\begin{equation}
  {\rm tr}_\mathrm{S} \Lambda (\rho_\mathrm{S},w)
  = P(w)\ ,
\end{equation}
we obtain the probability distribution of the work generated in the transformation.\\

\subsection{Thermal operations with non-constant Hamiltonian}
\label{ss:toH}

Thermal operations are general enough to include the case where the initial Hamiltonian of the system $H_\mathrm{S}$ is different than the final one $H'_\mathrm{S}$. This is done by including an additional qubit $X$ which plays the role of a switch (as in \cite{horodecki2013fundamental,aberg2016fully}). Now the total Hamiltonian is
\begin{equation}
  H = H_\mathrm{S}\otimes |0\rangle _\mathrm{X}
  \langle 0| + H'_\mathrm{S} \otimes |1\rangle_\mathrm{X}
  \langle 1| +H_\mathrm{B} +H_\mathrm{W}
\ ,
\end{equation}
and energy conservation reads 
$[V,H] =0$, where $V$ is the global unitary when we include the switch.
We impose that the initial state of switch is $|0\rangle_\mathrm{X}$ and the global unitary $V$ performs the switching
\begin{equation}
  \label{SS}
V \left( 
\rho_\mathrm{SBW} \otimes |0\rangle_\mathrm{X} \langle 0|\right)  V^\dagger = \rho'_\mathrm{SBW} \otimes |1\rangle_\mathrm{X} \langle 1| 
\ ,
\end{equation}
for any $\rho_\mathrm{SBW}$.
This implies
\begin{equation}
  V=  U
  \otimes |1\rangle_\mathrm{X} \langle 0| +
  \tilde U
  \otimes |0\rangle_\mathrm{X} \langle 1|
  \ ,
\end{equation}
where $U$ and $\tilde U$ are unitaries on system, bath and weight. Condition $[V,H] =0$ implies
\begin{equation}
  \label{EC}
  U (H_\mathrm{S}+H_\mathrm{B}+H_\mathrm{W}) =
  (H'_\mathrm{S}+H_\mathrm{B}+H_\mathrm{W}) U
  \ .
\end{equation}
Therefore, the reduced map on system, bath and weight can be written as
\begin{equation}
  \label{rmm}
  \Gamma_\mathrm{SBW} (\rho_\mathrm{SBW}) =
  U \rho_\mathrm{SBW} 
  U^\dagger\ ,
\end{equation}
where the unitary $U$ does not necessarily commute with $H_\mathrm{S}+H_\mathrm{B}+H_\mathrm{W}$ nor $H'_\mathrm{S}+H_\mathrm{B}+H_\mathrm{W}$ but satisfies~\eqref{EC}.

\onecolumngrid

\subsection{Reducing the quantum problem to the classical case}

Let us show that the equation connecting the initial state of the system $\rho_\mathrm{S}$ with the final one conditioned on work $w$,
\begin{equation}
  \rho'_{\mathrm S|w} = \frac 1 {P(w)} 
  \Lambda (\rho_\mathrm{S},w)\ ,
\end{equation}
decouples in the diagonal part and the other energy modes \citep{lostaglio2015description,lostaglio2015quantum,cwiklinski2015limitations}.
The map $\Theta_\alpha$ defined as
\begin{equation}
  \Theta_\alpha [\rho_\mathrm{S}] = 
  \int_\mathbb{R}\! dt\, 
  \e^{i\alpha t}\,
  \e^{iH_\mathrm{S} t} \rho_\mathrm{S} \e^{-iH_\mathrm{S} t}
  \ ,  
\end{equation}
projects the state $\rho_\mathrm{S}$ onto 
the $\alpha$-energy mode of $H_\mathrm{S}$. When $\alpha=0$ it projects the state onto 
its diagonal, when written in the energy eigenbasis. And in general, it projects the state onto all the terms $|s_1\rangle \!\langle s_2|$ such that $\mathcal E_{s_1} - \mathcal E_{s_2} = \alpha$.
Using constraint~\eqref{EC} and identities $\e^{itH_\mathrm{W}} Q_x =\e^{itx}$ and $[H_\mathrm{B}, \rho_\mathrm{B}]=0$  we obtain
\begin{eqnarray}
  \nonumber
  \Theta_\alpha [\Lambda (\rho_\mathrm{S},w)] 
  &=& 
  \int_\mathbb{R}\! dt\, 
  \e^{i\alpha t}\,
  \e^{iH'_\mathrm{S} t}\, \Lambda (\rho_\mathrm{S},w)\, \e^{-iH'_\mathrm{S} t}
  \\ \nonumber &=&  
  \int_\mathbb{R}\! dt\, dx\, 
  \e^{i\alpha t}\,
  {\rm tr}_\mathrm{BW}\! 
  \left[ \e^{i(H'_\mathrm{S}+H_\mathrm{B}+H_\mathrm{W}) t} Q_{x+w}\,
  U \left( \rho_\mathrm{S}\, \rho_\mathrm{B}\,
  Q_x \rho_\mathrm{W} Q_x
  \right) U^\dagger \e^{-i(H'_\mathrm{S}+H_\mathrm{B}+H_\mathrm{W}) t}\right]
  \\ \nonumber &=&  
  \int_\mathbb{R}\! dt\, dx\, 
  \e^{i\alpha t}\,
  {\rm tr}_\mathrm{BW}\! 
  \left[ Q_{x+w}\,
  U \left( \e^{i(H_\mathrm{S}+H_\mathrm{B}+H_\mathrm{W}) t} \rho_\mathrm{S}\, \rho_\mathrm{B}\,
  Q_x \rho_\mathrm{W} Q_x \e^{-i(H_\mathrm{S}+H_\mathrm{B}+H_\mathrm{W}) t} 
  \right) U^\dagger\right]
  \\ \nonumber &=&  
  \int_\mathbb{R}\! dt\, dx\, 
  \e^{i\alpha t}\,
  {\rm tr}_\mathrm{BW}\! 
  \left[ Q_{x+w}\,
  U \left( \e^{iH_\mathrm{S} t} \rho_\mathrm{S} \e^{-iH_\mathrm{S} t}\, \rho_\mathrm{B}\,
  Q_x \rho_\mathrm{W} Q_x  
  \right) U^\dagger\right]
  \\ &=&
  \Gamma_\mathrm{S}(\Theta_\alpha [\rho_\mathrm{S}],w)\ ,
\end{eqnarray}
as claimed above.
This shows that when the initial state is diagonal, $\Theta_\alpha [\rho_\mathrm{S}] = 0$ for all $\alpha \neq 0$, so is the final one; and visa-versa. Thus our results, concerning work extraction and state formation where either the initial or final state diagonal in the energy eigenbasis, are valid in the case that the non-equilibrium state involves coherences between energy eigenstates. \\

The diagonal part of the map is nicely characterized by the conditional probability distribution
\begin{equation}
  t(s',\, w|\, s) = \langle s' |
  \Lambda (|s\rangle\! \langle s|,w)
  |s'\rangle\ ,
\end{equation}
where $|s\rangle$ is the eigenbasis of $\Theta[\rho_\mathrm{S}]$ and $|s'\rangle$ is the eigenbasis of $\Theta[\Lambda (\rho_\mathrm{S},w)]$. 
%
Note that the ``dynamics" of the diagonals is a completely classical problem. From now on we use $\rho = \Theta[\rho_\mathrm{S}]$, $\rho' = \Gamma_\mathrm{S}\Theta[\rho_\mathrm{S}]$ for the initial/final states in optimal work extraction / state formation processes.\\
\subsection{Necessary condition for thermal operations}

Using~\eqref{EC} and definitions made above we obtain
\begin{eqnarray}
  \nonumber &&
  \sum_{s} \int_\mathbb{R}\! dw\, t(s',w|s)\, 
  \e^{\beta (\mathcal E_{s'} -\mathcal E_{s} +w)}
  \\ \nonumber &=& 
  \sum_{s}\int_\mathbb{R}\! dw\, dx\, 
  \langle s'|\, \e^{\beta H'_\mathrm{S}}\, 
  {\rm tr}_\mathrm{BW}\! 
  \left[ \e^{\beta H_\mathrm{W}} Q_{x+w}\,
  U \left( \e^{-\beta H_\mathrm{S}}
  |s\rangle\! \langle s|\, \rho_\mathrm{B}\,
  \e^{-\beta H_\mathrm{W}} Q_x \rho_\mathrm{W} Q_x
  \right) U^\dagger \right]
  |s'\rangle
  \\ \nonumber &=& 
  \int_\mathbb{R}\! dx'\, dx\, 
  \langle s'|\, \e^{\beta H'_\mathrm{S}}\, 
  {\rm tr}_\mathrm{BW}\! 
  \left[ \e^{\beta H_\mathrm{W}} Q_{x'}\,
  U \left( \e^{-\beta H_\mathrm{S}}
  \frac {\e^{-\beta H_\mathrm{B}}} {Z_\mathrm{B}}
  \e^{-\beta H_\mathrm{W}} Q_x \rho_\mathrm{W} Q_x
  \right) U^\dagger \right]
  |s'\rangle
  \\ \nonumber &=& 
  \langle s'|\, \e^{\beta H'_\mathrm{S}}\, 
  {\rm tr}_\mathrm{BW}\! 
  \left[ \e^{\beta H_\mathrm{W}}
  \e^{-\beta H'_\mathrm{S}}
  \frac {\e^{-\beta H_\mathrm{B}}} {Z_\mathrm{B}}
  \e^{-\beta H_\mathrm{W}}
  U \Theta[\rho_\mathrm{W}]
  U^\dagger \right]
  |s'\rangle
  \\ \label{last step} &=& 
  {\rm tr}_\mathrm{SBW} \left[ |s'\rangle\! \langle s'|
  \frac {\e^{-\beta H_\mathrm{B}}} {Z_\mathrm{B}}
  U \Theta[\rho_\mathrm{W}]
  U^\dagger \right]
  \ .
\end{eqnarray}
As mentioned above, Theorem 1 in \cite{masanes2014derivation} proves that condition $[U, \Delta_\mathrm{W}] = 0$ implies
\begin{equation}
  \mathrm{tr}\mathrm{W} \left[ U {\mathbb I}_\mathrm{SB} \rho_\mathrm{W} U^\dagger \right] = {\mathbb I}_\mathrm{SB}
\end{equation}
for all $\rho_\mathrm{W}$.
Applying this to~\eqref{last step} we obtain
\begin{equation}
  {\rm tr}_\mathrm{SBW} \left[ 
  |s'\rangle\! \langle s'|
  \frac {\e^{-\beta H_\mathrm{B}}} {Z_\mathrm{B}}
  U \Theta[\rho_\mathrm{W}]
  U^\dagger \right]
  =
  {\rm tr}_\mathrm{SB} \left[ 
  |s'\rangle\! \langle s'|
  \frac {\e^{-\beta H_\mathrm{B}}} {Z_\mathrm{B}}
  \right] = 1
  \ .
\end{equation}
This proves the ``only if" part of the following

\begin{theorem}
  The (classical) map $t(s',w|s)$ comes from a thermal operation 
\begin{equation}
  t(s',w|s) = \langle s' |\Lambda (|s\rangle\! \langle s|,w) |s'\rangle\ ,
\end{equation}
if and only if 
\begin{equation}
  \label{result 1}
  \sum_{s} \int_\mathbb{R}\! dw\, 
  t(s',w|s)\, 
  \e^{\beta (\mathcal E_{s'} -\mathcal E_{s} +w)}
  =1\ ,
\end{equation}
for all $s'$.
\end{theorem}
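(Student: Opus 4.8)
\emph{Proof (proposal).} The ``only if'' direction is already settled by the computation preceding the statement, which combines energy conservation \eqref{EC} with Theorem~1 of \cite{masanes2014derivation} (unitality of the weight-reduced channel). What remains is the converse: every classical kernel $t(s',w|s)$ satisfying \eqref{result 1} arises from a thermal operation of the form \eqref{TO}. The first thing I would do is recognise \eqref{result 1} as a stationarity condition. Writing the Gibbs weights $g_s=\e^{-\beta\mathcal{E}_s}$ and treating the weight as a genuine physical system with Hamiltonian $H_\mathrm{W}$, the unnormalised joint Gibbs state of system-plus-weight assigns weight $g_s\,\e^{-\beta x}$ to $\ket{s}\otimes\ket{x}$. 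Multiplying \eqref{result 1} by $g_{s'}$ and substituting $x'=x+w$ gives, after a one-line change of variables,
\begin{equation}
  \sum_s \int_\mathbb{R} dx\; g_s\, \e^{-\beta x}\, t(s',\, x'-x\,|\,s) \;=\; g_{s'}\, \e^{-\beta x'}\ ,
\end{equation}
so that the transition kernel $(s,x)\mapsto(s',x+w)$ exactly preserves the Gibbs state of system-plus-weight. The task is thereby reduced to the familiar problem of realising an arbitrary Gibbs-preserving classical channel by a thermal operation with an infinite bath.

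For the construction I would use the standard embedding argument \cite{horodecki2013fundamental}. Energy conservation \eqref{EC} forces every elementary transition $\ket{s}\ket{E_\mathrm{B}}\ket{x}\to\ket{s'}\ket{E_\mathrm{B}'}\ket{x+w}$ to obey $E_\mathrm{B}'=E_\mathrm{B}+\mathcal{E}_s-\mathcal{E}_{s'}-w$, i.e.\ the bath absorbs precisely the energy that balances the change in system and weight. Because the bath spectrum is continuous and unbounded, each total-energy eigenspace is infinite-dimensional, and the thermal occupation $\e^{-\beta E_\mathrm{B}}/Z_\mathrm{B}$ is exactly the factor that converts the Gibbs-stationarity of $t$ into a doubly-stochastic condition within a fixed total-energy shell. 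Concretely I would first discretise the bath and weight spectra onto a commensurate lattice so that all Gibbs ratios become rational, then embed each level $s$ into a number of degenerate bath-assisted microstates proportional to $g_s$; this flattening turns the Gibbs-stochastic kernel into a genuine doubly-stochastic matrix on the microstates. By the Birkhoff--von Neumann theorem that matrix is a convex combination of permutations of the microstates, and each permutation---being a relabelling inside a single total-energy shell---is an energy-conserving unitary that depends only on energy differences and hence commutes with $\Delta_\mathrm{W}$. The convex weights are then implemented as a mixture of such unitaries (equivalently absorbed into a slightly enlarged bath).

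The final step is to check that the resulting $U$ meets all four defining conditions and reproduces the kernel. Substituting $U$ into \eqref{TO} and projecting, $\langle s'|\Lambda(\ketbra{s}{s},w)|s'\rangle$ returns $t(s',w|s)$ by construction, while microscopic reversibility, energy conservation, translation invariance and classicality hold because $U$ is (a mixture of) energy-shell permutations commuting with $\Delta_\mathrm{W}$.

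I expect the main obstacle to lie in the continuum limit. The Birkhoff decomposition is transparent only for finite doubly-stochastic matrices, so the delicate point is to perform the discretisation of the continuous bath and weight spectra, apply the finite argument, and then let the lattice spacing tend to zero while (a)~maintaining $[U,\Delta_\mathrm{W}]=0$ and exact energy conservation at every stage, and (b)~controlling the convergence of the realised kernel to the prescribed density $t(s',w|s)$ in $w$. Upgrading this convergence from an $\epsilon$-approximate statement to exact realisability is precisely the place where the idealisation of an infinite bath (infinitely many modes supplying a continuum of balancing energies) is indispensable.
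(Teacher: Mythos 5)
Your handling of the two directions matches the paper's structure: the ``only if'' part really is just the displayed chain of equalities together with Theorem~1 of \cite{masanes2014derivation}, so deferring to it is fine. For the ``if'' part, be aware that the paper does not prove it either --- it is outsourced to Theorem~5 of \cite{alhambra2016second}. Your reformulation of \eqref{result 1} as stationarity of the unnormalised joint weight $\e^{-\beta\mathcal{E}_s}\,\e^{-\beta x}$ under the translation-covariant kernel is correct, and it is genuinely the right lens on the condition.

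The gap is in how you then realise the Gibbs-stationary kernel. The discretise--flatten--Birkhoff--take-limits scheme does not establish the theorem as stated, for two concrete reasons. First, every measure in sight is infinite: the weight factor $\e^{-\beta x}$ on $\mathbb{R}$ and a bath with density of states $\propto \e^{\beta E}$ both have infinite total weight, so ``a number of microstates proportional to $g_s$'' has no finite meaning and the Birkhoff--von Neumann theorem, which as you note is a finite-dimensional statement, never literally applies; even granting uniform control of the discretisation, what your scheme delivers is that $t(s',w|s)$ is a \emph{limit} of exactly realisable kernels, i.e.\ approximate realisability, whereas the theorem asserts exact realisability, and you have not shown the set of realisable kernels is closed (for a continuous density in $w$ this is delicate: any finite construction produces finitely many discrete work values). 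Second, Birkhoff yields a \emph{mixture} of permutations, while the framework demands a single unitary $U$; this is repairable by conditioning on a degenerate (hence thermally maximally mixed) bath ancilla, but it must be said. The proof the paper leans on avoids all of this by constructing $U$ directly in the continuum: energy conservation fixes the bath energy change $E' = E + \mathcal{E}_s - \mathcal{E}_{s'} - w$ for each elementary transition, and \eqref{result 1} is then \emph{exactly} the statement that, for every $s'$, the incoming pieces weighted by the relative density of states $\e^{\beta(\mathcal{E}_{s'}-\mathcal{E}_s+w)}$ tile the target energy shell with total fraction one --- so a measure-preserving bijection of microstates (a single energy-conserving permutation unitary, manifestly commuting with $\Delta_\mathrm{W}$) exists exactly, with no limiting argument. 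Your Gibbs-stationarity observation is precisely the input to that construction; replacing the Birkhoff step by this direct shell-tiling bijection (this is also the content of the saturated constraint \eqref{reversibility} in Supplementary Note~3) would close your proof.
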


The ``if" part of the above theorem is proven in Theorem~5 from \cite{alhambra2016second}.

\subsection{Optimal thermal operations have minimal work fluctuations}

Given a process $t(s',w|s)$, satisfying~\eqref{result 1}, the work generated in a particular state transition $s \to s'$ has probability distribution 
\begin{equation}
  \label{Pwss}
  t(w|s,s') = 
  \frac {t(s',w|s)} 
  {t(s'|s)}\ ,
\end{equation}
where $t(s'|s) = \int_\mathbb{R} dw\, t(s',w|s)$ is the value transformation on the system. In general, despite the conditioning on $s,s'$, the distribution~\eqref{Pwss} contains fluctuations on $w$.
In certain setups, a less fluctuating work variable $w$ is desirable. The following theorem shows that this can always be done without decreasing the average work generated in the given process $t(s',w|s)$. 

\begin{theorem}\label{opt}
If $t(s',w|s)$ satisfies condition~\eqref{result 1} then 
\begin{equation}
  \tilde t (s',w|s) =
  \delta (w-w(s'|s)) t(s'|s)
\end{equation}
with
\begin{equation}
  \label{wss}
  w(s'|s) =
  T \ln\! {\int_\mathbb{R} dw\, 
  t(w|s,s')}\, \e^{\beta w}
\end{equation}
also satisfies~\eqref{result 1}, and in addition
\begin{eqnarray}
  \label{average w}
  \langle w\rangle_{\tilde t} 
  &\geq &
  \langle w\rangle_t\ ,
  \\ \label{fluct}
  \max_{w\, :\, \tilde t(w) >0}
  \left| w - \langle w\rangle_{\tilde t} \right| 
  &\leq &
  \max_{w\, :\, t(w) >0}  \left| w - \langle w\rangle_{t} \right| \ .
\end{eqnarray}
The above inequalities are saturated if and only if $\tilde t (w,s'|s) = t(w,s'|s)$.
\end{theorem}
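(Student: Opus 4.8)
The plan is to treat the three assertions separately, all of them resting on the single identity $\e^{\beta w(s'|s)} = \int_\mathbb{R} dw\, t(w|s,s')\,\e^{\beta w}$, which is nothing but a rearrangement of the definition \eqref{wss} (using $\beta T=1$), together with Jensen's inequality for the convex, monotone function $w\mapsto \e^{\beta w}$. The conceptual content is that replacing each conditional work distribution $t(w|s,s')$ by a point mass at its ``Jarzynski value'' $w(s'|s)$ preserves exactly the one feature of the work that enters \eqref{result 1}, namely its conditional exponential average, while it can only push the average work up and the work support inward.

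First I would check that $\tilde t$ satisfies \eqref{result 1}. Substituting $\tilde t(s',w|s)=\delta(w-w(s'|s))\,t(s'|s)$ into the left-hand side and integrating the delta function gives $\sum_s t(s'|s)\,\e^{\beta(\mathcal E_{s'}-\mathcal E_s+w(s'|s))}$; writing $t(s'|s)=\int dw\,t(s',w|s)$ and using the identity above to replace $\e^{\beta w(s'|s)}$ by $\int dw\, t(w|s,s')\e^{\beta w}$, the factor $t(s'|s)\int dw\, t(w|s,s')\e^{\beta w}$ collapses back to $\int dw\, t(s',w|s)\e^{\beta w}$ via \eqref{Pwss}, so the sum reproduces the left-hand side of \eqref{result 1} for $t$ and equals $1$. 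The bound \eqref{average w} I would prove termwise in $(s,s')$: it suffices that $w(s'|s)\ge \int dw\, w\, t(w|s,s')$ for every pair with $t(s'|s)>0$, and this follows from $\e^{\beta w(s'|s)}=\int dw\, t(w|s,s')\e^{\beta w}\ge \exp\!\big(\beta\!\int dw\, w\, t(w|s,s')\big)$ by Jensen; summing against the weights $x_s\,t(s'|s)$ gives \eqref{average w}.

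The fluctuation bound \eqref{fluct} is the delicate step. The natural opening is to observe that each $w(s'|s)$ lies in the convex hull $[w_{\min},w_{\max}]$ of the support of the marginal work distribution of $t$: since $t(w|s,s')$ is supported there and $\e^{\beta w}$ is monotone, $\e^{\beta w(s'|s)}$ lies between $\e^{\beta w_{\min}}$ and $\e^{\beta w_{\max}}$, hence $w(s'|s)\in[w_{\min},w_{\max}]$. Therefore $\langle w\rangle_{\tilde t}$, a convex combination of the $w(s'|s)$, also lies in $[w_{\min},w_{\max}]$ and, by \eqref{average w}, satisfies $\langle w\rangle_{\tilde t}\ge \langle w\rangle_t$. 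The upper deviation is then free, since $w(s'|s)-\langle w\rangle_{\tilde t}\le w_{\max}-\langle w\rangle_t\le \max_{w:t(w)>0}|w-\langle w\rangle_t|$. I expect the main obstacle to be the lower deviation $\langle w\rangle_{\tilde t}-w(s'|s)$: convexity alone places both $\langle w\rangle_{\tilde t}$ and the smallest $w(s'|s)$ anywhere in $[w_{\min},w_{\max}]$ and so yields only $w_{\max}-w_{\min}$, which can be twice the allowed half-width. To close this one must show that the upward shift of the mean forced by Jensen is matched by an upward shift of the lowest surviving support point of $\tilde t$. I anticipate this is precisely where the thermal-operation constraint \eqref{result 1} has to be invoked—e.g.\ through the per-term inequality $w(s'|s)\le \mathcal E_s-\mathcal E_{s'}-T\ln t(s'|s)$ that \eqref{result 1} implies, which links small transition probabilities to suppressed work values—rather than following from the structure of the averaging map in isolation.

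Finally, the saturation claim follows by tracking equality in the one inequality actually used. Both \eqref{average w} and \eqref{fluct} are tight exactly when Jensen is saturated for every conditional with $t(s'|s)>0$, that is, when each $t(w|s,s')$ is already a point mass. By \eqref{Pwss} this says $t(s',w|s)=\delta(w-w(s'|s))\,t(s'|s)=\tilde t(s',w|s)$, which is the stated equality condition, and so I would present this as an immediate corollary of the equality analysis of the Jensen step rather than as a separate argument.
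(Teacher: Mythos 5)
Your first, second, and fourth steps coincide with the paper's proof essentially line by line: the paper also verifies \eqref{result 1} for $\tilde t$ by exponentiating \eqref{wss}, multiplying by $t(s'|s)\,\e^{\beta(\mathcal E_{s'}-\mathcal E_s)}$ and resumming; it also obtains \eqref{average w} from Jensen applied to $\e^{\beta w(s'|s)}=\int dw\, t(w|s,s')\,\e^{\beta w}$; and it extracts the saturation claim from strict convexity exactly as you do. The one place you diverge is \eqref{fluct}, and there your diagnosis is exactly right: convexity only yields that each $w(s'|s)$ lies (strictly, unless the conditional is a point mass) inside the convex hull of the support of $t(w|s,s')$, hence containment of the new work support in the old one; combined with the upward shift of the mean this controls the upper deviation but bounds the lower deviation only by twice the old half-width. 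Your proposed rescue does not close this: the per-term consequence of \eqref{result 1}, namely $w(s'|s)\le \mathcal E_s-\mathcal E_{s'}-T\ln t(s'|s)$, is an \emph{upper} bound on the new work values, i.e.\ it constrains the direction you already control, and says nothing about how far below the new mean a surviving low work value may sit.

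In fact no argument can close it, because \eqref{fluct} as stated fails: take $\beta=1$, initial energies $\mathcal E_1=\mathcal E_2=0$, final energies $\mathcal E_{1'}=0$, $\mathcal E_{2'}=5/2$, the deterministic system map $t(1'|1)=t(2'|2)=1$, conditional work distributions $t(w|1,1')=\frac 1 2 \delta(w+3)+\frac 1 2 \delta(w-h)$ with $h=\ln\left(2-\e^{-3}\right)\approx 0.67$ (so that $\frac 1 2 \e^{-3}+\frac 1 2 \e^{h}=1$ and both columns of \eqref{result 1} sum exactly to $1$) and $t(w|2,2')=\delta(w+5/2)$, with initial weights $x=(0.99,\,0.01)$. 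Then $w(1'|1)=0$, $w(2'|2)=-5/2$; one finds $\langle w\rangle_t\approx -1.18$ with worst-case deviation $\approx 1.85$ (attained at $w=h$), whereas $\langle w\rangle_{\tilde t}=-0.025$ and the atom at $w=-5/2$ now deviates by $\approx 2.48$. So the collapsed map has \emph{larger} worst-case fluctuations: the Jensen shift of the mean is not matched by any lift of the lowest surviving work value. You should know that the paper's own proof of \eqref{fluct} is no more complete than yours: it consists solely of the observation that $w(\pm)$ with $w(-)<w(s'|s)<w(+)$ exist in the conditional support unless $t(w|s,s')$ is a point mass — i.e.\ it proves support containment and the equality condition, and silently ignores the moving mean. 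The statement is safe only in the regime where the paper actually invokes it (Supplementary Note 2), namely when $\langle w\rangle_t$ already equals the maximal, thermodynamically reversible value, so that \eqref{average w} is saturated and strict convexity forces $t=\tilde t$; alternatively one can repair the theorem by measuring fluctuations of each conditional $t(w|s,s')$ about its own value $w(s'|s)$ rather than about the global mean. So: your obstacle is genuine, your attempted fix via \eqref{result 1} would not succeed, and the honest conclusion is that the gap lies in the theorem's proof (and, for the global-mean reading, in the theorem itself), not merely in your attempt.
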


To show that $\tilde t(s',w|s)$ satisfies~\eqref{result 1}, first, exponentiate the two sides of~\eqref{wss}, 
\begin{equation}
  \e^{\beta w(s'|s)} =
  {\int_\mathbb{R} dw\, 
  t(w|s,s')}\, \e^{\beta w}\ ,
\end{equation}
and second, multiply by $t(s'|s)\, \e^{\beta(\mathcal E_{s'} - \mathcal E_\mathrm{S})}$ and sum over $s$,
\begin{equation}
  \sum_s t(s'|s)\, 
  \e^{\beta (\mathcal E_{s'} - \mathcal E_\mathrm{S} + w(s'|s))} =
  \sum_s {\int_\mathbb{R} dw\, 
  t(s',w|s)}\, \e^{\beta (\mathcal E_{s'} - \mathcal E_\mathrm{S} +w)}
  = 1 \ .
\end{equation}
Note that the two maps, $t(s',w|s)$ and $\tilde t(s',w|s)$, have the same value $t(s'|s)$. That is, they perform the same transformation on the system. 

To show~\eqref{average w} we use the convexity of the exponential in equation~\eqref{wss}, obtaining
\begin{equation}
  \label{wss 2}
  w(s'|s) \geq
  T \ln  \e^{\int_\mathbb{R} dw\, 
  P(w|s,s')\, \beta\, w}
  =
  {\int_\mathbb{R} dw\, 
  t(w|s,s')}\, w
  \ .
\end{equation}
Averaging over $s,s'$ gives~\eqref{average w}. Also, note that due to strict convexity of the exponential, the equality in~\eqref{wss 2} is only achieved when $t(w|s,s') = \delta(w-w(s'|s)) = \tilde t (w|s,s')$. 

To show~\eqref{fluct}, note that the convexity of the exponential implies that, unless $t(w|s,s') = \delta(w-w(s'|s))$, there are values $w(+) > w(s'|s)$ and $w(-) < w(s'|s)$ such that $t(w(\pm) |s,s') >0$.
Hence, equality is only achieved when $t=\tilde t$.\\

\section{Supplementary note 2}
\noindent In this supplementary note we calculate the maximal average work extracted (or minimal work of formation) for the state transformation $\rho\rightarrow \rho'$ with unbounded fluctuations. The work is given by the difference in free energy. By explicitly calculating the work distribution we show that the work fluctuations diverge as the initial or final system states become more athermal. We also show that there is no map that can achieve the optimal average work (i.e. thermodynamically reversible) with a smaller range of fluctuations about the average (see also Supplementary Note 1E).\\

In the previous Supplementary Notes we simplified the work-optimal map to a form where it is defined by the map parameters $\{ t(s'|s), w(s'|s)\}$ where $t(s'|s)$ defined the reduced map on the system. We have shown that, in the case that the initial or final state is diagonal we can work with de-phased initial and final states. Furthermore the $t(s'|s)$ must obey 
\begin{eqnarray}
1 &=& \sum\limits_{s'} t(s'|s) \label{x1}\\
x_{s'} &=& \sum\limits_s x_s t(s'|s) \label{x2}\\
1 &=&  \sum_s t(s'|s)\, 
  \e^{\beta (\mathcal E_{s'} - \mathcal E_{s} + w(s'|s))} \label{x3}
\end{eqnarray}
where the first two constraints ensure that the reduced map acting on the system is stochastic and achieves the desired state transformation, and the third constraint, derived in the previous section, is required to ensure that the map is microscopically reversible. In the following we simplify the map further in the case of optimal work extraction. \\

It will be useful to relax the reversibility equality to an inequality
\begin{equation}
1 \geq  \sum_s t(s'|s)\, 
  \e^{\beta (\mathcal E_{s'} - \mathcal E_{s} + w(s'|s))} \label{x4}
\end{equation}
where saturation of the inequality implies that the map is an allowed (thermal) operation. In all future calculations we make use of this relaxed constraint and show that our solutions saturate the inequality. \\

The average work associated with the optimal map achieving $\rho\rightarrow\rho'$ is given by 
\begin{equation}
W=\sum\limits_{ss'}x_s t(s'|s)w(s'|s)
\end{equation}
it turns out to be sufficient to optimize this under the reversibility constraint \eqref{x4}. The Lagrangian is
\begin{equation}
\mathcal{L}= \sum\limits_{ss'}x_s t(s'|s)w(s'|s) + \sum\limits_{s'}\lambda_{s'} \left(1-\sum\limits_{ss'}t(s'|s) e^{\beta (w(s'|s) +\Delta\mathcal{E}_{ss'})} \right)
\end{equation} 
Maximising with respect to $w(s'|s)$ gives 
\begin{equation}
w(s'|s) = \frac{1}{\beta}\log \left(\frac{x_s e^{\beta \Delta\mathcal{E}_{ss'}}}{\lambda_{s'}\beta} \right) \label{wij}
\end{equation}
Extremizing with respect to $\lambda_{s'}$ and applying \eqref{x1} and \eqref{x2} gives 
\begin{equation}
\lambda_{s'} = \frac{x_{s'}}{\beta}
\end{equation}
Substituting this into \eqref{wij} gives the optimal work 
\begin{equation}
\sum\limits_{ss'}x_s t(s'|s) \frac{1}{\beta}\log \left(\frac{x_s e^{\beta \Delta\mathcal{E}_{ss'}}}{x_{s'}} \right) = F(\rho) - F(\rho ')
\end{equation}
where $F(\rho)=\langle \mathcal{E} (\rho) \rangle - 1/\beta\, S(\rho)$ is the standard free energy and we have again used \eqref{x1}, \eqref{x2}. Notice that the result is independent of our choice of $t(s'|s)$, i.e. any map that takes us from $\rho \rightarrow \rho'$ gives the same optimal work. This is simply a statement that the free energy is a state variable (i.e. is  path independent). The work $W^{(\infty )}$ is the average of the work values, gives by 
\begin{equation}
w(s'|s) = \frac{1}{\beta}\log \left(\frac{x_s e^{\beta \Delta\mathcal{E}_{ss'}}}{x_{s'}} \right)\label{wij2}
\end{equation}
Note work values with $x_s =0$ are set to zero (they have zero probability of occurring in the work distribution). Note that the fluctuations diverge as the initial / final state moves further from equilibrium. It is easy to check that substituting our solutions for $w(s'|s)$ into \eqref{x4} saturates the inequality, therefore this map is achievable with thermal operations. \\

Theorem~\ref{opt} shows that no thermal map can achieve this $W$ with smaller worst case fluctuations. Having explicitly calculated the work values, we have derived necessary and sufficient conditions for a thermal map to exist that achieves this $W$ (i.e. the thermodynamically reversible work) given that the fluctuations are constrained $|w-W|\leq c$\\

\noindent \textbf{Result 1.}
The process $(\rho, \, H_\text{S})\rightarrow (\rho' , \, H_\text{S}')$ can be achieved in a thermodynamically reversible map if 
\begin{equation}
e^{\beta (\Delta F(\rho\rightarrow\rho' )-c)}\leq \frac{x_s e^{\beta \mathcal{E}_s}}{x_{s'} e^{\beta \mathcal{E}_{s'}}}\leq  e^{\beta (\Delta F(\rho\rightarrow\rho')+c)} \quad \forall \, s, \, s'
\end{equation}
This becomes a necessary and sufficient condition if the initial and/or or final state is diagonal in the energy eigenbasis
\begin{proof}
Given Theorem \ref{opt}, it is sufficient to find the conditions that the work values \eqref{wij2} obey the $c$-bound
\begin{equation}
\left|\frac{1}{\beta}\log \left(\frac{x_s e^{\beta \Delta\mathcal{E}_{ss'}}}{x_{s'}} \right)-\Delta F (\rho\rightarrow\rho') \right|\leq c
\end{equation}
which gives the desired inequalities
\end{proof}

\section{Supplementary note 3}

In this Supplementary Note we derive the $c$-bounded work content for general quantum state $\rho$.

\subsection{determining the optimal map}

\begin{lemma}\label{simplified map}
We can always find an optimal protocol where $t(s'|s)\rightarrow \tilde{t}_s = e^{-\beta \mathcal{E}_{s'}}/\mathcal{Z}'$, where $\mathcal{Z}'=\text{tr}[e^{-\beta H'_\text{S}}]$, and $w(s'|s) \rightarrow\tilde{w}_s$ that obeys $|\tilde{w}(s)-\langle W \rangle | \leq |w(s'|s)-\langle W \rangle |$ $\forall$ $s,s'$
\end{lemma}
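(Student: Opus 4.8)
The plan is to start from an arbitrary valid protocol and deform it, in two convexity-driven steps, into the claimed Gibbs form while certifying that the average work does not fall and the worst-case work fluctuation does not grow; since any valid protocol is thereby dominated by one of the claimed form, an optimal protocol of this form must exist. By Theorem~\ref{opt} I may first assume the work is deterministic in each transition, so the protocol is specified by parameters $\{t(s'|s),\,w(s'|s)\}$ obeying the stochasticity, state-transformation and (relaxed) reversibility constraints \eqref{x1}, \eqref{x2} and \eqref{x4}. Because the work content constrains only the final Hamiltonian $H'_\text{S}$ and not the final state, I am free to redirect the transitions, and the target is to replace $t(s'|s)$ by the $s$-independent Gibbs weights $\tilde t(s'|s)=e^{-\beta\mathcal E_{s'}}/\mathcal Z'$, which automatically drives the system to the Gibbs state.

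The construction I would use is to impose $\tilde t(s'|s)=e^{-\beta\mathcal E_{s'}}/\mathcal Z'$ and to collapse the $s'$-dependence of the work by the exponential average $\tilde w_s=\beta^{-1}\log\sum_{s'}t(s'|s)\,e^{\beta w(s'|s)}$, in direct analogy with \eqref{wss}. Three checks then finish the argument. First, reversibility: the new map gives $\sum_s e^{\beta(\tilde w_s-\mathcal E_s)}=\sum_{s,s'}t(s'|s)\,e^{\beta(w(s'|s)-\mathcal E_s)}$, and summing the old constraint \eqref{x4} over $s'$ bounds this by $\sum_{s'}e^{-\beta\mathcal E_{s'}}=\mathcal Z'$, which is exactly the reversibility condition for the Gibbs map; a uniform upward shift of all $\tilde w_s$ then saturates it, raising the average work and leaving fluctuations about the mean untouched. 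Second, average work: since $\tilde w_s$ is the log of a convex average, Jensen's inequality gives $\tilde w_s\ge\sum_{s'}t(s'|s)\,w(s'|s)$, so $\langle W\rangle=\sum_s x_s\tilde w_s$ cannot decrease. Third, fluctuations: each $\tilde w_s$ lies between $\min_{s'}w(s'|s)$ and $\max_{s'}w(s'|s)$, so the collapsed work values remain inside the range spanned by the original ones and the worst-case deviation from the mean does not increase, which is the asserted bound.

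The main point to get right is that reversibility is not a single equation but an $s'$-indexed family, and a naive collapse of the work over $s'$ would violate it. The key observation that makes the argument go through is that the Gibbs transition structure $\tilde t(s'|s)\propto e^{-\beta\mathcal E_{s'}}$ is precisely what decouples these constraints: it converts the per-$s'$ reversibility conditions into the single condition $\sum_s e^{\beta(\tilde w_s-\mathcal E_s)}=\mathcal Z'$, obtained by summing the originals over $s'$. The remaining delicate point is the fluctuation bookkeeping, namely reconciling the upward shift of the mean (coming from the exponential rather than linear average) with the requirement that $|\tilde w_s-\langle W\rangle|$ not exceed the original spread; this I would settle by noting that both the new values and the new mean stay inside the convex hull of the original work values, exactly as in the closing step of Theorem~\ref{opt}.
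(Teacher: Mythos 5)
Your first two checks are correct and your structural observation is genuinely right: imposing the Gibbs weights $\tilde t(s')=e^{-\beta\mathcal E_{s'}}/\mathcal Z'$ does merge the $s'$-indexed reversibility family into the single condition $\sum_s e^{\beta(\tilde w_s-\mathcal E_s)}\leq \mathcal Z'$, which follows by summing \eqref{x4} over $s'$, and your Jensen argument for the average work is valid. The genuine gap is in your third step, and it is created precisely by your choice of the \emph{exponential} collapse $\tilde w_s=\beta^{-1}\log\sum_{s'}t(s'|s)\,e^{\beta w(s'|s)}$. Jensen gives $\tilde w_s>\sum_{s'}t(s'|s)\,w(s'|s)$ strictly whenever $w(\cdot|s)$ is non-constant in $s'$, so the new mean $\tilde W=\sum_s x_s\tilde w_s$ is strictly larger than $\langle W\rangle$, while any row whose work is already deterministic in $s'$ is not lifted at all. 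Convex-hull containment of the values then controls only the upward deviations: from $\tilde w_s\leq \langle W\rangle+c$ and $\tilde W\geq \langle W\rangle$ one gets $\tilde w_s-\tilde W\leq c$, but on the downward side one only gets $\tilde W-\tilde w_s\leq 2c$, and this is essentially attained. Concretely, take a feasible protocol with a low-weight initial level $s_0$ whose work is deterministic and saturates the negative bound, $w(s'|s_0)=\langle W\rangle-c$ for all $s'$ (saturation of bounds is generic for optimal protocols, cf.~\eqref{sat1}), while the remaining rows are spread over $[\langle W\rangle-c,\langle W\rangle+c]$ with $\beta c$ large. Then $\tilde w_{s_0}=\langle W\rangle-c$ is unmoved but $\tilde W\to\langle W\rangle+c$, so $|\tilde w_{s_0}-\tilde W|\to 2c$ and the $c$-bound is violated; your subsequent uniform upward shift to saturate reversibility cannot repair this, since it leaves deviations from the mean untouched. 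A sanity check confirms something must fail: if your map really preserved feasibility and the $c$-bound while strictly increasing $\langle w\rangle$, the original protocol could not have been optimal in the first place.

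The paper's proof differs from yours in exactly this one move: it collapses with the \emph{arithmetic} average $\tilde w(s)=\sum_{s'}t(s'|s)\,w(s'|s)$. Then the mean is preserved exactly, $\tilde W=\langle W\rangle$, so per-row convex-hull containment immediately yields $|\tilde w(s)-\langle W\rangle|\leq\max_{s'}|w(s'|s)-\langle W\rangle|$, which is the asserted fluctuation bound with no bookkeeping about a shifted mean. The price is that reversibility no longer holds by construction; instead one applies Jensen in the opposite direction, $e^{\beta\tilde w(s)}\leq\sum_{s'}t(s'|s)\,e^{\beta w(s'|s)}$, and then your own summation of \eqref{x4} over $s'$ closes the estimate at $\mathcal Z'$. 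The moral is that the log-sum-exp average is the right tool in Theorem~\ref{opt}, where the collapse is over the conditional work distribution at \emph{fixed} $(s,s')$ and one trades nothing away, but here the collapse is over $s'$ and the lemma requires a work-preserving, fluctuation-non-increasing operation --- which forces the linear average.
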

\begin{proof}
The average work extracted by a given protocol is given by 
\begin{equation}
W = \sum\limits_{ss'} x_s t(s'|s) w(s'|s)
\end{equation}
where $\{ t(s'|s), w(s'|s)\}$ obey constraints~\eqref{x1}~\eqref{x2} and ~\eqref{x4}.\\

Assume there exists some optimal protocol $\{ t(s'|s), w(s'|s)\}$ that satisfies these and obeys the $c$-bound
\begin{equation}
|w(s'|s) -W|\leq c \quad \forall s,\, s'
\end{equation}
Define a new protocol with
\begin{align}
\tilde{t}(s') &= \frac{e^{-\beta \mathcal{E}_{s'}}}{\mathcal{Z}'}\\
\tilde{w}(s) &= \sum\limits_k t(k|s) w(k|s) 
\end{align}
The average work extracted by this protocol is 
\begin{equation}
\ \tilde{W} = \sum\limits_{ss'} x_s \tilde{t}(s')  \tilde{w}(s) = \sum\limits_s x_s \tilde{w}(s) = \sum\limits_{ss'} x_s t(s'|s) w(s'|s)
\end{equation}
therefore it extract the same amount of work as the optimal protocol. It is also obeys the microscopic reversibility inequality \eqref{x4} as 
\begin{align*}
\sum\limits_s \tilde{t}(s')  e^{\beta \left(\tilde{w}(s)-\Delta\mathcal{E}_{ss'}\right)} &= \frac{1}{\mathcal{Z}'}\sum\limits_s e^{\beta \left(\sum\limits_k t(k|s) w(k|s) - \mathcal{E}_s \right)} \\
&\leq \frac{1}{\mathcal{Z}'}\sum\limits_{sk}t(k|s) e^{\beta w(k|s)-\beta \mathcal{E}_s}\\
&= \frac{1}{\mathcal{Z}'}\sum\limits_{sk}t(k|s) e^{\beta w(k|s)-\beta \mathcal{E}_s +\beta \mathcal{E}_k-\beta \mathcal{E}_k}\\
&\leq \frac{1}{\mathcal{Z}'}\sum\limits_k e^{-\beta \mathcal{E}_k } \\
&= 1 \\
\end{align*}
where we have used the fact that $\sum_k t(k|s)=1$ and the convexity of the exponential function to get $e^{\beta \tilde{w}(s)} \leq \sum_k t(k|s) e^{\beta w(k|s)} $, and in the second to last line have used the reversibility of the original map. Finally, the new protocol also obeys the $c$-bound as the work values $\tilde{w}(s)$ are convex sums of the work values $w(s'|s)$ and therefore $\text{max}\{w(k|s)\}\geq \tilde{w}(s)\geq \text{min}\{w(k|s)\}$. As $W$ is unchanged then the worst case fluctuations of the new work distribution about the average must be less than or equal to those of the optimal distribution.
\end{proof}

We now drop the tilde from $\tilde{w}(s)$ and $\tilde{t}(s') $. It is simple to check that $t(s') = 1/\mathcal{Z}' e^{-\beta \mathcal{E}_{s'}}$ satisfies the conditions \eqref{x1} and \eqref{x2}. The reversibility inequality \eqref{x4} becomes 
\begin{equation}
\mathcal{Z}'\geq \sum_s e^{\beta (w(s) -\mathcal{E}_s)} \label{reversibility}
\end{equation}
and the average work is given by 
\begin{equation}
W = \sum\limits_s x_s w(s)
\end{equation}

We are now in a position to derive $W^{(c)} (\rho )$. The Lagrangian is the same as employed in the previous section except that it includes terms bounding all $|w(s) - \sum_{k} x_{k} w(k)|\leq 0$. Due to the exponential in the reversibility term $\mathcal{Z}'\geq \sum_s e^{\beta (w(s) - \mathcal{E}_s)}$ we must linearise these bounds so as to avoid generating a transcendental equation when extremizing the Lagrangian over $w(s)$. We therefore select the bounds  
\begin{eqnarray}
w(s) - \sum\limits_{k} x_{k} w(k) &\leq& c \\
\sum\limits_{k} x_{k} w(k)-w(s) &\leq& c
\end{eqnarray}
with associate Lagrange parameters $\mu_{ss'}$ and $\bar \mu_{ss'}$. As all fluctuations are either positive $(w(s) \geq W^{(c)}(\rho))$ or negative $(w(s) < W^{(c)}(\rho ))$ one of these bounds will always be trivial for each work value $w(s)$. Including the reversibility constraint \eqref{reversibility} we maximise the average work $W=\sum_s x_s w(s)$ by extremizing the Lagrangian
\begin{equation}
\mathcal{L}=\sum\limits_s f_s w(s) +\lambda \left( \mathcal{Z}'-\sum\limits_s e^{\beta (w(s) - \mathcal{E}_s)} \right) + c \sum\limits_s \left(\mu_s + \bar{\mu}_s\right)
\end{equation}
where 
\begin{equation}
f_s = x_s\, f-(\bar{\mu}_s-\mu_s) \label{def1}
\end{equation}
where $f=1+\sum_k(\bar{\mu}_k-\mu_k)$ and we have replaced the old Lagrange parameters with $\lambda = \mathcal{Z}'\sum_s \lambda_s $ and $\mu_s (\bar{\mu}_s)= \sum_{s'} \mu_{ss'}(\bar{\mu}_{ss'})$. Maximizing w.r.t $w(i)$ and $\lambda$ gives 
\begin{eqnarray}
w(s) &=& \frac{1}{\beta}\log \left(\frac{f_s e^{\beta \mathcal{E}_s}}{\beta \lambda} \right)\\
\lambda &=& \frac{1}{\beta \mathcal{Z}'}
\end{eqnarray}
where we have used $\sum_s f_s = 1$. Substituting these into the Lagrangian simplifies it to 
\begin{equation}
\mathcal{L}=\frac{1}{\beta}\sum\limits_s f_s \log \left(f_s e^{\beta \mathcal{E}_s}\mathcal{Z}' \right)+c\sum\limits_s (\mu_s +\bar{\mu}_s)
\end{equation}
Maximizing w.r.t $\mu_j$ and $\bar{\mu}_j$ under the condition $\mu_j \geq 0$ and $\bar{\mu}_j\geq 0$ gives 
\begin{eqnarray}
\label{eqA} \\
\frac{\partial \mathcal{L}}{\partial \mu_j}=0 &\rightarrow& \log \left( f_j e^{\beta \mathcal{E}_j}\right)-\sum\limits_s x_s \log \left( f_s e^{\beta \mathcal{E}_s}\right) = -\beta c \nonumber \\
\label{eqB} \\
\frac{\partial \mathcal{L}}{\partial \bar{\mu}_j}=0 &\rightarrow& -\log \left( f_j e^{\beta \mathcal{E}_j}\right)+\sum\limits_s x_s \log \left( f_s e^{\beta \mathcal{E}_s}\right) = -\beta c \nonumber
\end{eqnarray}
where we have used 
\begin{eqnarray}
\frac{\partial f_s}{\partial \mu_j} &=& -x_s + \delta_{sj} \\
\frac{\partial f_s}{\partial \bar{\mu}_j} &=& x_s - \delta_{sj}
\end{eqnarray}
For $c>0$, equations \eqref{eqA} and \eqref{eqB} cannot be simultaneously satisfied by any $w(j)= 1/\beta\,  \log \left(f_j e^{\beta \mathcal{E}_j}\mathcal{Z}' \right)$ therefore if $\mu_j > 0$ then $\bar{\mu}_j=0$ and visa-versa. As $w(i) = 1/\beta\,  \log \left(f_i e^{\beta \mathcal{E}_i}\mathcal{Z}' \right)$ equations \eqref{eqA} and \eqref{eqB} can be written in the form  
\begin{align}
\mu_j \neq 0 \rightarrow w(j) - \langle W \rangle &= - c \label{sat1}\\
\bar{\mu}_j \neq 0 \rightarrow w(j) - \langle W \rangle &= c \label{sat2}
\end{align}
Therefore fluctuations saturate either a positive or negative bound, or saturate no bounds ($\mu_j = \bar{\mu}_j = 0$). It will therefore be useful to partition the set of energy levels into those for which the resulting fluctuations will saturate a positive bound $ i\in \mathcal{X}_+ $, a negative bound $ j\in\mathcal{X}_- $ or saturate no bounds $u\in\mathcal{X}_u$. For work values $w(i)$, $w(i')$ that saturate a positive bound and  $w(j)$, $w(j')$ that saturate a negative bound we have 
\begin{eqnarray}
f_i e^{\beta \mathcal{E}_i } &=& f_{i'} e^{\beta \mathcal{E}_{i'} } \label{sim1}\\
f_j e^{\beta \mathcal{E}_j } &=& f_{j'} e^{\beta \mathcal{E}_{j'} } \label{sim2}\\
f_j e^{\beta \mathcal{E}_j } &=& f_{i'} e^{\beta ( \mathcal{E}_{i'}-2c )} \label{sim3} \\
f_i &=& x_i f - \bar{\mu}_i \label{obv1}\\
f_j &=& x_j f + \mu_j \label{obv2}\\
f_u &=& x_u f \label{obv3}
\end{eqnarray}
where \eqref{sim1} comes from summing \eqref{eqA} for $\mu_j$, $\mu_{j'}$, \eqref{sim2} from summing \eqref{eqB} for $\bar{\mu}_i$, $\bar{\mu}_{i'}$, \eqref{sim3} comes from subtracting \eqref{eqA} for $\mu_j$ and \eqref{eqB} for $\bar{\mu}_i$ and \eqref{obv1}--\eqref{obv3} are just the definition \eqref{def1} with the conditions $\bar{\mu}_j = 0$, $\mu_i = 0$ and $\mu_u = \bar{\mu}_u = 0$ applied. \eqref{sim1}--\eqref{sim3} and \eqref{obv3} allow us to simplify \eqref{eqA} and \eqref{eqB} to 
\begin{eqnarray}
\frac{\partial \mathcal{L}}{\partial \mu_j}&=&0 \rightarrow \log \left(\frac{f}{f_j e^{\beta \mathcal{E}_j}} \right)=\nu +c \label{simp1}\\
\frac{\partial \mathcal{L}}{\partial \bar{\mu}_i}&=&0 \rightarrow \log \left(\frac{f}{f_i e^{\beta \mathcal{E}_i}} \right)=\nu -c \label{simp2}\\
\nu &=& \frac{\beta}{X_u} \left(\frac{1}{\beta}H_u (\rho ) + c(X_- - X_+) - \langle \mathcal{E}_u(\rho ) \rangle \right) \label{v}
\end{eqnarray}

Where $X_+ = \sum_{s\in\mathcal{X}_+} x_s$, $X_- = \sum_{s\in\mathcal{X}_-} x_s$, $X_u = \sum_{s\in\mathcal{X}_u} x_s$, $H_u (\rho ) = - \sum_{s\in\mathcal{X}_u} x_s \log x_s$ and $\langle \mathcal{E}_u(\rho ) \rangle = \sum_{s\in\mathcal{X}_u} x_s \mathcal{E}_s$. \eqref{sim1}--\eqref{obv3} let us to relate the remaining Lagrange parameters by
\begin{eqnarray}
\bar{\mu}_{i'} &=& f (x_{i'} - x_i e^{\beta (\mathcal{E}_i - \mathcal{E}_{i'})} )+\bar{\mu}_i e^{\beta (\mathcal{E}_i - \mathcal{E}_{i'})} \label{rel1} \\
\mu_{j'} &=& f (x_{j}e^{\beta (\mathcal{E}_j - \mathcal{E}_{j'})} - x_{j'} )+\mu_j e^{\beta (\mathcal{E}_j - \mathcal{E}_{j'})} \label{rel2}\\
\label{rel3}\\
\bar{\mu}_i &=& \frac{(k-\mu_j) (e^{\beta \mathcal{E}_i}x_i - e^{\beta (2c - \mathcal{E}_j)}x_j) - \mu_j e^{\beta (2c - \mathcal{E}_j)}}{(1-x_i)e^{\beta \mathcal{E}_i}+x_j e^{\beta (2c-\mathcal{E}_j)}} \nonumber
\end{eqnarray}
where $k= f - \bar{\mu}_i + \mu_j$. Summing \eqref{rel1} over $i'$ and \eqref{rel2} over $j'$ gives 
\begin{eqnarray}
\sum\limits_{i'}\bar{\mu}_{i'} &=& f (X_+ - x_i e^{\beta \mathcal{E}_i}\mathcal{Z}_+)+\bar{\mu}_i e^{\beta \mathcal{E}_i}\mathcal{Z}_+ \\
\sum\limits_{j'}\mu_{j'} &=& f (x_j e^{\beta \mathcal{E}_j}\mathcal{Z}_--X_i)+\mu_j e^{\beta \mathcal{E}_j}\mathcal{Z}_-
\end{eqnarray}
where $\mathcal{Z}_+ = \sum_{s\in\mathcal{X}_+} e^{-\beta \mathcal{E}_s}$ and $\mathcal{Z}_- = \sum_{s\in\mathcal{X}_+}e^{-\beta \mathcal{E}_s}$. $f= 1 + \sum_{i'}\bar{\mu}_{i'} - \sum_{j'}\mu_{j'}$ therefore we can get $f$ in terms of $\bar{\mu}_i$ and $\mu_j$ alone 
\begin{equation}
f=\frac{1+\bar{\mu}_i e^{\beta \mathcal{E}_i}\mathcal{Z}_+-\mu_je^{\beta \mathcal{E}_j}\mathcal{Z}_- }{x_ie^{\beta \mathcal{E}_i}\mathcal{Z}_++x_je^{\beta \mathcal{E}_j}\mathcal{Z}_-+X_u } \label{eqf}
\end{equation}
We now have $f$ in terms of $\bar{\mu}_i$ and $\mu_j$, and \eqref{rel3} relates these to eachother, so we can solve \eqref{simp1} and \eqref{simp2} simultaneously (using $k=f-\bar{\mu}_i + \mu_j$) to find $\bar{\mu}_i$ and $\mu_j$, and by \eqref{rel1}, \eqref{rel2} all Lagrange multipliers. Substituting \eqref{rel1} and \eqref{rel2} into \eqref{sim1} and solving for $\mu_j$ gives 
\begin{equation}
\mu_j = \frac{k e^{\beta \mathcal{E}_i}(e^{-\beta c}-x_j e^{\beta \mathcal{E}_j + \nu })}{e^{\beta (\mathcal{E}_i-c)}+e^{\beta (\mathcal{E}_j + c)} + (1-x_i-x_j)e^{\beta (\mathcal{E}_i + \mathcal{E}_j)+\nu}} \label{mu1}
\end{equation}
and similarly for $\bar{\mu}_i$ 
\begin{equation}
\bar{\mu}_i = \frac{k e^{\beta \mathcal{E}_j }(e^{\beta c}-x_ie^{\beta\mathcal{E}_i + \nu})}{e^{\beta (\mathcal{E}_i -c)}+e^{\beta (\mathcal{E}_j + c)} + (1-x_i-x_j)e^{\beta (\mathcal{E}_i + \mathcal{E}_j)+\nu}} \label{mu2}
\end{equation}
Using \eqref{eqf} and $k=f-\mu_j + \bar{\mu}_i$ simultaneously solve \eqref{mu1} and \eqref{mu2} to give 
\begin{eqnarray}
\bar{\mu}_i &=& \frac{x_i e^\nu - e^{-\beta (\mathcal{E}_i - c)}}{e^\nu X_u + \mathcal{Z}_+e^{\beta c}+\mathcal{Z}_- e^{-\beta c}}\label{mui}\\
\mu_j &=& \frac{ e^{-\beta (\mathcal{E}_j + c)}-x_je^\nu}{e^\nu X_u + \mathcal{Z}_+e^{\beta c}+\mathcal{Z}_- e^{-\beta c}}
\label{muj}
\end{eqnarray}
it is simple to check that \eqref{rel1} and \ref{rel2} result in exactly the same equation for all $\bar{\mu}_{i'}$ and $\mu_{j'}$ but with the corresponding index. Now armed with the explicit form of the Lagrange parameters we have solved the Lagrangian. It is easy to check that for these solutions for $\bar{\mu}_i$, $\mu_j$ the Lagrangian simplifies to $\mathcal{L} = \sum_s x_s w(s)$ where $w(s) = 1/\beta \log (f_s  e^{\beta \mathcal{E}_s }\mathcal{Z}' )$ where the $f_i$ are now of the form 
\begin{equation}
f_s = \gamma^{-1}\begin{cases}
x_s e^\nu \, , \quad s\in \mathcal{X}_u \\
e^{\beta (c-\mathcal{E}_s)}\, , \quad s\in \mathcal{X}_+ \\
e^{-\beta (c+\mathcal{E}_s)}\, , \quad s\in \mathcal{X}_- \\
\end{cases}
\end{equation}  
where 
\begin{equation}
\gamma = e^\nu X_u + \mathcal{Z}_+e^{\beta c}+\mathcal{Z}_- e^{-\beta c}
\end{equation}
Substituting  in the above values for $f_s$ into $\mathcal{L} = 1/\beta \sum_s x_s \log (f_s  e^{\beta \mathcal{E}_s \mathcal{Z}'} )$ gives our final result for work extraction
\begin{equation}
W^{(c)} (\rho ) = \frac{1}{\beta} \left( \log \mathcal{Z}' - \log \gamma \right)\label{mainresult}
\end{equation}
Next we show that $\gamma$ can be written as $\gamma = \sum_s x_s^0 e^{-\beta \mathcal{E}_s}e^{\beta \theta_s}$ there $\theta_S$ is the fluctuation associates with subspace $\ketbra{s}{s}$ which, in the $c$-bounded distribution, is given by 
\begin{equation}
\theta_s = \begin{cases}
\frac{1}{\beta}\log (x_s e^{\beta \mathcal{E}_s})+\frac{\nu}{\beta},\quad  s\in \mathcal{X}_u \\
+c , \quad s\in \mathcal{X}_+ \\
-c ,\quad s\in \mathcal{X}_-
\end{cases}
\end{equation}
In general, the $c$-bounded work is given by the difference between the free energy of the final thermal state and the $c$-bounded free energy
\begin{equation}
F^{(c)}(\rho) = - \log \sum_s x_s^0 e^{-\beta \mathcal{E}_s}e^{\beta \theta_s}
\end{equation}

\subsection{Finding the optimal partition}

\noindent In this Supplementary Note we derive the inequalities for partitioning the state space into positively bounded $\mathcal{X}_+$, negatively bounded $\mathcal{X}_-$ and unbounded $\mathcal{X}_u$ energy levels, as required by our main result. We derive the general set of inequalities for any state and Hamiltonian and give worked through examples of how to find the partition for arbitrary 2 and 3 dimensional systems. Firstly we derive the partition inequalities for work extraction protocols, and then show that the partition inequalities for work of formation are identical. \\

\noindent The Lagrangian is maximised under the condition that $\bar{\mu}_i$ and $\mu_j$ given in \eqref{mu1} and \eqref{mu2} are positive. Therefore if our optimization gives a negative Lagrange parameter it is set to zero, removing the corresponding $c$-bound.  The Lagrange parameters are positive when the following inequalities are satisfied
\begin{eqnarray}
\frac{1}{\beta}\log \left(x_i e^{\beta \mathcal{E}_i} \right) &>&  c - \frac{\nu}{\beta}\quad \,\,\,\,\, \rightarrow \bar{\mu}_i > 0 \label{ineq1}\\
\frac{1}{\beta}\log \left(x_j e^{\beta \mathcal{E}_j} \right) &<& - c - \frac{\nu}{\beta}\quad \rightarrow \mu_j > 0\label{ineq2}
\end{eqnarray}
where 
\begin{equation}
\nu=\frac{\beta}{X_u} \left(\frac{1}{\beta}H_u (\rho ) + c(X_- - X_+) - \langle \mathcal{E}_u(\rho ) \rangle \right)
\end{equation}   
and we have used $\gamma \geq 0$. Clearly $\nu$ depends on how you partition the state space in to positively, negatively and unbounded fluctuations. In the following we derive a set of inequalities that determine the unique partition give $\rho$ and $c$. The following observations simplify the problem\\
\begin{lemma}\label{xulemma}
For any $c$-bounded work distribution where bounded fluctuations saturate their bounds, $X_\pm < 1/2$
\end{lemma}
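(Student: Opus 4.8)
The plan is to exploit the single structural fact, established in the construction of the optimal protocol above, that the work value attached to each initial level $s$ is $w(s) = W^{(c)}(\rho) + \theta_s$, so the fluctuations $\theta_s$ are centred on the average by construction. First I would record the exact identity $\sum_s x_s\theta_s = 0$, which is immediate from $W^{(c)}(\rho) = \sum_s x_s w(s)$ together with $\sum_s x_s = 1$. This centring identity is the only input needed, and it holds for any distribution whose average work equals the probability-weighted mean of its work values; nothing about the explicit Lagrange parameters is required.

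Next I would split this identity along the partition $\mathcal{X}_\mathrm{u}\cup\mathcal{X}_+\cup\mathcal{X}_-$. Since the bounded fluctuations saturate their bounds, $\theta_s = +c$ on $\mathcal{X}_+$ and $\theta_s = -c$ on $\mathcal{X}_-$, so the identity rearranges to $c(X_+ - X_-) = -\sum_{s\in\mathcal{X}_\mathrm{u}} x_s\theta_s$. The unbounded levels are precisely those for which neither $c$-bound is active, so they obey the strict bound $|\theta_s| < c$; hence the right-hand side is controlled by $\bigl|\sum_{s\in\mathcal{X}_\mathrm{u}} x_s\theta_s\bigr| < c\,X_\mathrm{u} = c\,(1 - X_+ - X_-)$, the strictness coming from $X_\mathrm{u}>0$ together with $c>0$.

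Combining the two displays gives $|X_+ - X_-| < 1 - X_+ - X_-$. Treating the two cases $X_+\ge X_-$ and $X_+ < X_-$ separately, the modulus collapses and in each case one of $2X_+$ or $2X_-$ is bounded strictly below $1$; since the smaller of $X_+,X_-$ then lies below $1/2$ automatically, both $X_+ < 1/2$ and $X_- < 1/2$ follow, which is the claim.

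The main obstacle is the degenerate configuration $X_\mathrm{u}=0$, i.e. a two-valued work distribution in which every level saturates a bound. There the centring identity forces $X_+ = X_- = 1/2$ and the strict inequality fails, so the argument must exclude this case. I would do so by appealing to the construction of the work-optimal solution: the partition is defined through the self-consistent parameter $\nu = \frac{1}{X_\mathrm{u}}\left(F_\mathrm{u} + c(X_+ - X_-)\right)$, which presupposes $X_\mathrm{u}>0$, and any level with $\theta_s = \pm c$ is by convention classified as bounded, so a genuinely unbounded level carries a strict fluctuation $|\theta_s| < c$. Making precise that the optimal protocol always retains at least one such interior level — equivalently, that the purely two-valued configuration is never work-optimal — is the one step demanding care beyond the elementary averaging estimate.
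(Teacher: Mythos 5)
Your centring identity $\sum_s x_s\theta_s = 0$ and the split of it along the partition $\mathcal X_\mathrm{u}\cup\mathcal X_+\cup\mathcal X_-$ are exactly the paper's starting point (the paper writes it as $\sum_i x_i\theta_i=\sum_j x_j\theta_j$ for the positive and negative parts), and your case analysis of $|X_+-X_-| < 1-X_+-X_-$ does deliver the claim once that strict inequality is in hand. The gap is in how you obtain the strictness. Your estimate $\bigl|\sum_{s\in\mathcal X_\mathrm{u}}x_s\theta_s\bigr| < c\,X_\mathrm{u}$ requires both $X_\mathrm{u}>0$ and the strict bound $|\theta_s|<c$ on the unbounded levels, and your proposed justification --- that the construction of $\nu$ ``presupposes $X_\mathrm{u}>0$'' --- is circular in context: the lemma is invoked in the partition-uniqueness result (Lemma~\ref{partition lemma}) precisely in order to conclude ``$X_\pm<1/2$, therefore $X_\mathrm{u}>0$''. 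So $X_\mathrm{u}>0$ cannot be an input to this lemma; it is its intended output.

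The paper avoids this dependency by running the averaging estimate against the opposite-sign side rather than the unbounded set. Suppose for contradiction $X_+\geq 1/2$. Saturation gives a positive-side contribution of at least $c\,X_+$, while the total weight of the negative fluctuations is at most $1-X_+\leq X_+$; dividing the centring identity by that weight shows the convex average of the negative magnitudes is at least $c\,X_+/(1-X_+)\geq c$, so some fluctuation must strictly exceed $c$ --- contradicting $c$-boundedness --- \emph{except} in the fully degenerate two-valued configuration $X_+=X_-=1/2$ with every level saturating. This argument uses only the non-strict bound $|\theta_s|\leq c$, which holds on all levels including unbounded ones, and never mentions $X_\mathrm{u}$. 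You did correctly isolate the two-valued configuration as the residual obstruction; in fairness, the paper's own proof also dispatches it loosely (it restricts to the case $X_+>X_-$ and argues by symmetry), and elsewhere resolves the $d=2$ tie $x_1=x_2=1/2$ purely by convention, declaring one of the two saturating levels unbounded. So your diagnosis of the weak point is accurate, but your repair should be replaced by the opposite-side contradiction argument, which closes the lemma without assuming what it is later used to prove.
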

\begin{proof}
Consider a $c$-bounded work distribution with work values $\{w(s) \}$, average work $W=\sum_s x_s w(s)$, and all work values obey $|w(s) - W|\leq c$. Break the work distribution up into work values that give positive fluctuations $w(i) \geq W $ and negative fluctuations $w(j) < W$. Writing the fluctuations as $\theta_i = w(i) - W$ and $\theta_j = W - w(j)$ and substituting into $W=\sum_s x_s w(s)$ we get 
\begin{equation}
\sum\limits_i x_i \theta_i = \sum\limits_j x_j \theta_j \label{flux}
\end{equation}
which merely states that the average (non-absolute value) fluctuation of a random variable is zero, as is always the case. Take the case that $\sum_{s\in\mathcal{X}_+} x_s > \sum_{s\in\mathcal{X}_-} x_s$ and $X_+ >1/2$. Clearly $\sum_{s\in\mathcal{X}_-} x_s < 1/2$ so $X_-<1/2$. If $X_+\geq 1/2$ then, as bounded fluctuations saturate their bounds, \eqref{flux} gives the inequality 
\begin{equation}
\sum\limits_{s\in\mathcal{X}_-} x_s c < \sum\limits_{s\in\mathcal{X}_+} x_s \theta_s
\end{equation}
taking the factor of $\sum_{s\in\mathcal{X}_+} x_s $ to the other side we can write the inequality 
\begin{equation}
c < \frac{1}{\sum\limits_{s\in\mathcal{X}_-} x_s}\sum\limits_{s\in\mathcal{X}_-} x_s \theta_s
\end{equation}
The right hand side is a convex sum, and all $\theta_s \geq 0$, therefore at least one $\theta_s > c$ contradicting the fact that the work distribution is c-bounded. Therefore we must have $X_+ < 1/2$. A similar argument for the case $\sum_{s\in\mathcal{X}_+} x_s < \sum_{s\in\mathcal{X}_-} x_s$ gives that $X_- < 1/2$.
\end{proof}
The second observation is that the inequalities \eqref{ineq1} and \eqref{ineq2} obey a $\beta$-ordering hierarchy. The $\beta$-odered state, $\rho^{\downarrow\beta}$, is defined as 
\begin{equation}
\rho^{\downarrow\beta} = \left(x_1, x_2, \dots, x_R \right)\quad , \quad x_s e^{\beta \mathcal{E}_s}\geq x_{s+1} e^{\beta \mathcal{E}_{s+1}}
\end{equation}
where $R=\text{rank}(\rho)$ (there is no work value $w^{(\infty )}_s$ associated with $x_s = 0$). The $\beta$-ordering gives $w^{(\infty )}_i \geq w^{(\infty )}_{i+1}$. Therefore if $w^{(\infty )}_i$ satisfies \eqref{ineq1} then so does $w^{(\infty )}_{i-1}$, and if $w^{(\infty )}_j$ satisfies \eqref{ineq2} the so does $w^{(\infty )}_{j+1}$. From this we can deduce that the partition of the state space will look like 
\begin{equation}
\rho^{\downarrow\beta} = (\underbrace{ x_1, x_2, \dots, x_k}_{\mathcal{X}_+},\underbrace{ x_{k+1},\dots, x_{l-1}}_{\mathcal{X}_u},\underbrace{x_l,\dots, x_R}_{\mathcal{X}_-})\label{hirearchy}
\end{equation}

Finally, it will be useful to put the inequalities in to the following form. In the following we drop the $(\infty )$ superscript from the unbounded work values $w^{(\infty )}_s$. Subscript $i$ will label positively bounded fluctuations and subscript $j$ negatively bounded fluctuations. \eqref{ineq1} and \eqref{ineq2} can be written in terms of the unbounded work distribution only (therefore the problem of finding the partition, assuming it exists, is a closed form)
\begin{align}
 W^{(\infty)}(\rho ) &<X_u (w({i'})-c)+\sum\limits_{s\in\mathcal{X}_+} x_s (w(s) -c) \nonumber\\
 &+\sum\limits_{s\in\mathcal{X}_-} x_s (w(s) +c)\quad \forall \, i'\in\mathcal{X}_+ \label{1} \\
 W^{(\infty)}(\rho )&> X_u (w(j')+c)+\sum\limits_{s\in\mathcal{X}_+} x_s (w(s) -c)+\sum\limits_{s\in\mathcal{X}_-} x_s (w(s) +c)\nonumber\\
&\quad \forall \, j'\in \mathcal{X}_- \label{2}
\end{align}
where $W^{(\infty)}(\rho )$ is the unbounded average work $W^{(\infty)}(\rho)=\beta^{-1}\log \mathcal{Z}'+F(\rho)$ and $X_u = 1-X_+ - X_-$. We now prove that for a given $\rho$ and $c$ there exists a unique partition for which all inequalities \eqref{1} and \eqref{2} are satisfied. 

\begin{lemma}\label{partition lemma}
Given state $\rho$, bound value $c$ and inverse temperature $\beta$, there is a unique partition of the state space that gives the optimal $c$-bounded work. For $R=\text{rank}(\rho)$ there are at most $R-1$ inequalities that must be checked to determine the partition.
\end{lemma}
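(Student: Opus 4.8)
The plan is to lean on the $\beta$-ordering hierarchy \eqref{hirearchy} already established above, which forces any admissible partition to take the blockwise form $\mathcal{X}_+ = \{1,\dots,k\}$, $\mathcal{X}_u = \{k+1,\dots,l-1\}$, $\mathcal{X}_- = \{l,\dots,R\}$ for some pair of cutoff indices $k \le l$. This is the structural backbone: it collapses the search over all $3^R$ labelings of the energy levels down to the ordered pairs $(k,l)$, and it is what ultimately makes both the uniqueness and the $R-1$ count tractable.

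First I would show that, for any fixed candidate $(k,l)$, only the two boundary inequalities need to be verified rather than the full family \eqref{1} and \eqref{2}. Since the unbounded work values $w_s$ decrease monotonically in the $\beta$-order and enter with positive coefficient $X_u$, the right-hand side of \eqref{1} is minimised over $i'\in\mathcal{X}_+$ at the innermost level $i'=k$, and the right-hand side of \eqref{2} is maximised over $j'\in\mathcal{X}_-$ at the innermost level $j'=l$. Hence \eqref{1} holds for all $i'$ iff it holds at $i'=k$, and \eqref{2} holds for all $j'$ iff it holds at $j'=l$; every interior inequality is automatically implied. This is the step that reduces the verification to a handful of gap-checks.

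Next I would establish existence and the count by a sweep. Starting from the fully unbounded partition $\mathcal{X}_u=\{1,\dots,R\}$ (which reproduces the reversible work of Supplementary Note~2), I would move the upper cutoff $k$ and the lower cutoff $l$ inward one level at a time, each move triggered when the corresponding boundary inequality \eqref{ineq1} or \eqref{ineq2} is violated. Because the cutoffs only march inward and each of the $R-1$ gaps between consecutive $\beta$-ordered levels is inspected at most once, at most $R-1$ inequalities are checked and the procedure terminates, yielding the stated bound and producing a partition.

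The hard part will be controlling the feedback through $\nu$: since $\nu$ depends on the entire partition via $X_\pm$, $X_u$, $H_u(\rho)$ and $\langle \mathcal{E}_u(\rho)\rangle$, moving a single cutoff reshuffles the thresholds for all remaining levels, so it is not evident by inspection that a reassigned level never needs to be reassigned back, i.e.\ that the boundary inequalities change sign at most once. I expect the cleanest way to secure uniqueness is to step back to the convex program underlying the construction, namely maximising the linear objective $\sum_s x_s w(s)$ over the convex region cut out by the strictly convex reversibility constraint \eqref{reversibility} together with the affine $c$-bounds $|w(s)-W|\le c$. The reversibility constraint is active at the optimum, and its strict convexity pins the maximiser to a unique contact point on that curved surface; the partition is precisely the active-constraint set of this optimiser, so uniqueness of the optimiser delivers uniqueness of the partition, while the sweep supplies the $R-1$ bound.
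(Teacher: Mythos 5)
Your skeleton matches the paper's proof — $\beta$-ordering hierarchy \eqref{hirearchy}, reduction of \eqref{1} and \eqref{2} to the two innermost boundary inequalities, and a one-pass sweep — but your uniqueness argument is a genuinely different route, and it is a legitimate one: the paper proves uniqueness combinatorially, never invoking the underlying convex program. Your route does need one step spelled out. Because the objective $\sum_s x_s w(s)$ is linear, strict convexity of $g(w)=\sum_s e^{\beta(w(s)-\mathcal{E}_s)}$ does not by itself pin the maximiser — a priori the optimum could sit on a flat face cut out by the affine $c$-bound constraints. What closes it is that the $c$-bounds are invariant under the uniform shift $w(s)\mapsto w(s)+\epsilon$ (the average shifts by the same $\epsilon$), so at any optimum the reversibility constraint \eqref{reversibility} must be active; then the midpoint of two distinct optima would be strictly inside the reversibility region by strict convexity, and a small uniform shift would be feasible and strictly increase the objective — contradiction. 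You should also note that uniqueness of the optimal work vector gives uniqueness of the partition only up to degenerate ties where a bound is active with zero multiplier (the paper's own $x_1=x_2=1/2$ qubit case is exactly such a tie), a caveat the paper shares but does not state.

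The genuine gap is the $R-1$ count. You correctly flag the feedback through $\nu$ as the hard part — moving one cutoff reshuffles all remaining thresholds — but your proposed fix does not address it: uniqueness of the optimiser tells you the sweep has a well-defined target, not that a greedy sweep reaches that target while inspecting each of the $R-1$ gaps at most once. Without a proof that a level, once (re)assigned, is never reassigned back, your assertion that "the cutoffs only march inward" is an assumption, and with it falls both the correctness of the sweep and the count. The paper closes exactly this hole with a short monotonicity computation: if inequality \eqref{3} fixing the positive cutoff at $w(k)$ holds while \eqref{4} fails at $w(l)$, then receding the negative cutoff from $l$ to $l'>l$ adds the term $\sum_{j=l}^{l'} x_j\left(w(k)-w(j)\right)\geq 0$ to the left-hand side of \eqref{3} (using $w(k)\geq w(j)$ from the $\beta$-ordering), so \eqref{3} stays satisfied; hence a satisfied inequality permanently fixes its side of the partition and at most $R-1$ checks are ever needed. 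Note also that the paper sweeps in the opposite direction from you — it starts from the \emph{maximal} trial bounded sets with $X_\pm<1/2$ (guaranteed nonempty $\mathcal{X}_\mathrm{u}$ by Lemma \ref{xulemma}) and shrinks them on failure — and its monotonicity computation is tailored to that direction. For your direction (growing the bounded sets from the fully unbounded partition, triggered by \eqref{ineq1} and \eqref{ineq2}) you would need the analogous one-sided monotonicity through $\nu$, which you have neither proved nor can extract from the convex program; the cleanest repair is to adopt the paper's direction and its monotonicity lemma.
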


\begin{proof}
By \ref{xulemma} we know that $X_\pm < 1/2$, therefore $X_u > 0$. The partition obeys the $\beta$-ordering hierarchy \eqref{hirearchy}. Starting from $x_1$ count left to right in $\rho^{\downarrow\beta}$ until you find the furthest $x_k$ s.t. $\sum_{i=1}^k x_s < 1/2$. These will form our trial set for the bounded positive fluctuations $\tilde{\mathcal{X}}_+$. Starting from $x_R$ count from right to left until you find the furthest $x_l$ s.t. $\sum_{j=l}^R x_j < 1/2$. These form our trial set for the bounded negative fluctuations $\tilde{\mathcal{X}}_-$. If this is indeed the correct partition for the state space, the tightest bounds will be \eqref{1} on $w(k)$ and \eqref{2} on $w(l)$. The inequalities are 
\begin{align*}
(1-\sum\limits_i x_i - \sum\limits_j x_j) (w(k)-c)&+\sum\limits_i x_i (w(i) -c) \label{3} \numberthis\\
&+\sum\limits_j x_j (w(j) +c) > W(\rho ) \\
(1-\sum\limits_i x_i - \sum\limits_j x_j) (w(l)+c)&+ \sum\limits_i x_i (w(i) -c) \label{4}  \numberthis\\
&+\sum\limits_j x_j (w(j) +c) < W(\rho )
\end{align*}
Whenever an inequality is satisfied it fixes that fluctuation as being the infinitum of its set. For example, if \eqref{3} is satisfied for $w(k)$ then $k\in \tilde{\mathcal{X}}_+$ regardless of $\tilde{\mathcal{X}}_-$. To see this, consider the case that \eqref{3} is satisfied for $w(k)$ but \eqref{4} isn't for $w(l)$ 
\begin{align*}
X_u (w(k_-c) + \sum\limits_{i=1}^k x_i (w(i) -c) + \sum\limits_{j=l}^R x_j (w(j) +c) &> W (\rho ) \\
X_u (w(l)+c) + \sum\limits_{i=1}^k x_i (w(i) -c) + \sum\limits_{j=l}^R x_j (w(j) +c) &> W (\rho )
\end{align*}
as $w(l)$ doesn't satisfy its bound we have to move further down to $w(l'>l)$. This makes $X_u \rightarrow X_u + \sum_{j=l}^{l'} x_j$ and $\sum_{j=l}^R x_j (w(j) +c) \rightarrow \sum_{j=l}^R x_j (w(j) +c) - \sum_{j=l}^{l'} x_j (w(j)+c)$. The left hand side of \eqref{3} gains the term 
\begin{align*}
&+ \sum\limits_{j=l}^{l'} x_j (w(k) - c)-\sum\limits_{j=l}^{l'} x_j (w(j) + c) \numberthis \\
&= \sum\limits_{j=l}^{l'} x_j (w(k)-w(j))
\end{align*}
As $w(k) \geq w(j)$ $\forall$ $j=l, \dots, l'$ this term is positive and the new \eqref{3} is guaranteed to be satisfied. Therefore we see that there is a unique partition as satisfying an inequalities fixes the corresponding (positive or negative) subspace. There are at most $R-1$ inequalities that we need to check, i.e. the ``worst case'' being when the state space is unbounded and we check all $R-1$ inequalities. 
\end{proof}

In summary, the algorithm for determining the partition can be summarised as follows
\begin{itemize}
\item[1.] $\beta$-order the state $\rho^{\downarrow\beta}=(x_1, \dots , x_R)$ where $x_s^{\beta \mathcal{E}_s}\geq x_{s+1}^{\beta \mathcal{E}_{s+1}}$ and $R=\text{rank}(\rho)$. This ordering gives the unbounded work distribution in descending order $(w(1), \dots, x_R)$ where $w(s) = \beta^{-1}\log (x_s e^{\beta \mathcal{E}_s}\mathcal{Z}')$ and $w(s) \geq w(s+1)$
\item[2.] Take the trial partition where you maximise $X_+=\sum_{i=1}^{k} x_i$ under the condition $X_+<1/2$ and $X_-=\sum_{j=l}^{R} x_j$, $X_-<1/2$. Check inequalities 
\begin{align}
X_u (w(k)-c) + \sum\limits_{i=1}^k x_i (w(i) -c) + \sum\limits_{j=l}^R x_j (w(j) +c) &> W (\rho ) \label{5} \\
X_u (w(l)+c) + \sum\limits_{i=1}^k x_i (w(i) -c) + \sum\limits_{j=l}^R x_j (w(j) +c) &< W (\rho ) \label{6}
\end{align}
\item[3.] If \eqref{5} is satisfied $x_k$ fixes $X_+=\sum_{i=1}^k x_i$ and similar for \eqref{6}. Otherwise we perform the next set of inequalities lower in the hierarchy, with $X_+=\sum_{i=1}^{k-1} x_i$ if \eqref{5} is not satisfied and/or $\sum_{j=l+1}^R x_j $ if \eqref{6} is not satisfied. We repeat this process until we find a pair of inequalities that are simultaneously satisfied, fixing $\tilde{\mathcal{X}}_\pm=\mathcal{X}_\pm$, requiring at most $R-1$ inequalities to be checked. 
\end{itemize} 

\noindent \textbf{Case: d=2}. This is the simplest case, as either $x_1> 1/2$ or $x_2 > 1/2$ or they both $=1/2$. In the first case we bound the negative fluctuation, and there is a single bound to check 
\begin{equation}
w(2) < W^{(\infty)}(\rho ) - c
\end{equation} 
and in the second case we bound the positive fluctuation if 
\begin{equation}
w(1) > W^{(\infty)}(\rho ) + c
\end{equation}
and if $x_1=x_2 = 1/2$ when the two fluctuations must be equal (as the average of the positive fluctuations $=$ the average of the negative fluctuations) and we can choose to bound one or the other, giving the same free energy. \\

In the following section we show that the same algorithm is used for determining the partition for the $c$-bounded work of formation. Note that, unless all fluctuations are unbounded $W^{(c)}(\rho) < W^{(\infty )}(\rho)$ and, as shown in the next section, $W^{(c)}_\text{F}(\rho) > W^{(\infty )}_\text{F}(\rho)$.

\section{Supplementary note 4}
In this Supplementary Note we derive the $c$-bounded minimal work of formation ${W}^{(c)}_\text{F}$.

\subsection{determining the optimal work of formation map} 
 Assume there exists some optimal choice of $\{ t(s'|s) , w(s'|s) \}$ that minimize the work cost  whilst obeying the $c$-bounds and microscopic reversibility \eqref{x4}. We start in the thermal state with $x_s = 1/\mathcal{Z}e^{-\beta \mathcal{E}_s}$ and end the protocol in state $\rho$ with probabilities $x_{s'}$. The average work is 
\begin{equation}
W = \sum\limits_{ss'} \frac{1}{\mathcal{Z}}e^{-\beta \mathcal{E}_s}t(s'|s) w(s'|s)
\end{equation}
where the $t_{ij}$ must obey \eqref{x1}, \eqref{x2} and \eqref{x3}.
\begin{lemma}
We can always choose new protocol parameters $\tilde{t}(s')  = x_{s'}$ and $\tilde{w}_{s'} = \sum_{s} \frac{t(s'|s) e^{-\beta\mathcal{E}_s }}{x_{s'} \mathcal{Z}}w(s'|s)$ that give the same average work as the optimal protocol and obey all the necessary constraints
\end{lemma}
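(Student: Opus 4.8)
The plan is to mirror the proof of Lemma~\ref{simplified map}, but with the roles of the initial and final states interchanged. Since state formation starts from the Gibbs state and ends in $\rho$, the natural simplification is to coarse-grain over the initial (thermal) energy levels rather than the final ones, so the proposed map $\tilde{t}(s')=x_{s'}$ is independent of the starting level $s$, while $\tilde{w}_{s'}$ is the average of the original work values $w(s'|s)$ weighted by the time-reversed conditional distribution $p(s|s')=x_s\,t(s'|s)/x_{s'}=t(s'|s)\,e^{-\beta\mathcal{E}_s}/(x_{s'}\mathcal{Z})$. By constraint~\eqref{x2} this $p(s|s')$ is a genuine probability distribution over $s$, so $\tilde{w}_{s'}$ is a bona fide convex combination of the $w(s'|s)$; this is the single structural fact that every step below exploits.

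First I would confirm that the average work is unchanged. Substituting $\tilde{t}(s')=x_{s'}$ into $\tilde{W}=\sum_{ss'}x_s\,\tilde{t}(s')\,\tilde{w}_{s'}$ and using $\sum_s x_s=1$ collapses this to $\sum_{s'}x_{s'}\tilde{w}_{s'}$; unfolding the definition of $\tilde{w}_{s'}$ then recovers $\sum_{ss'}\mathcal{Z}^{-1}e^{-\beta\mathcal{E}_s}t(s'|s)\,w(s'|s)=W$. Next I would dispatch the two easy constraints: normalization~\eqref{x1} holds because $\sum_{s'}\tilde{t}(s')=\sum_{s'}x_{s'}=1$, and the final-marginal constraint~\eqref{x2} holds because $\sum_s x_s\,\tilde{t}(s')=x_{s'}\sum_s x_s=x_{s'}$.

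The reversibility constraint is the crux and the main obstacle. After inserting $\tilde{t}(s')$ and factoring out the $s$-independent terms, the left-hand side of~\eqref{x4} becomes $x_{s'}e^{\beta(\mathcal{E}_{s'}+\tilde{w}_{s'})}\mathcal{Z}$. Because $\tilde{w}_{s'}$ is a convex combination of the $w(s'|s)$ with weights $p(s|s')$, Jensen's inequality for the exponential gives $e^{\beta\tilde{w}_{s'}}\le\sum_s p(s|s')\,e^{\beta w(s'|s)}$; substituting $p(s|s')=t(s'|s)e^{-\beta\mathcal{E}_s}/(x_{s'}\mathcal{Z})$ and cancelling reproduces exactly $\sum_s t(s'|s)\,e^{\beta(\mathcal{E}_{s'}-\mathcal{E}_s+w(s'|s))}$, which equals $1$ by the original reversibility equality~\eqref{x3}. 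Hence the new map satisfies the relaxed inequality~\eqref{x4}, exactly as in the work-content case.

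Finally, the $c$-bound is inherited without extra work: each $\tilde{w}_{s'}$ lies between $\min_s w(s'|s)$ and $\max_s w(s'|s)$, all of which lie in $[W-c,\,W+c]$ by hypothesis, and since the average is unchanged ($\tilde{W}=W$) we conclude $|\tilde{w}_{s'}-\tilde{W}|\le c$. The only genuine content is the convexity step in the reversibility check; everything else is bookkeeping identical in spirit to Lemma~\ref{simplified map}.
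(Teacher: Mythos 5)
Your proposal is correct and takes essentially the same route as the paper's own proof: the same checks of the marginal constraints \eqref{x1}--\eqref{x2}, the same collapse of the double sum showing $\tilde{W}=W$, the same Jensen/convexity step reducing the reversibility inequality \eqref{x4} for the new map to that of the original protocol, and the same convex-combination argument for inheriting the $c$-bound. Your identification of the weights as the time-reversed conditional distribution $p(s|s')$ is a nice interpretive gloss but does not change the argument.
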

\begin{proof}
Clearly these choices of $\tilde{t}(s') $ satisfy $\sum_{s'} \tilde{t}(s')  = 1$ and $\sum_s 1/\mathcal{Z} e^{-\beta \mathcal{E}_s} \tilde{t}(s')  = x_{s'}$, and the average work is given by 
\begin{align*}\numberthis
\tilde{W} &= \sum\limits_{ss} 1/\mathcal{Z} e^{-\beta \mathcal{E}_s} \tilde{t}(s')  \tilde{w}_{s'} = \sum\limits_{s'}  \tilde{t}(s')  \tilde{w}_{s'}\\
&= \sum\limits_{ss'} x_{s'} \frac{t(s'|s)e^{-\beta \mathcal{E}_s} }{\mathcal{Z}x_{s'}}w(s'|s) = W 
\end{align*}
Note that $ \sum_s \frac{t(s'|s) e^{-\beta\mathcal{E}_s }}{x_{s'} \mathcal{Z}} =1$ so $\tilde{w}_{s'}$ is a convex sum of the $w(s'|s)$. The reversibility inequality \eqref{x4} demands that 
\begin{equation}
\sum\limits_s \tilde{t}(s')  e^{\beta (\tilde{w}_{s'} -\Delta\mathcal{E}_{ss'})}\leq 1
\end{equation}
which we can see to be true as the LHS is 
\begin{align*}
\text{LHS}\,\, &= x_{s'} e^{\beta (\tilde{w}_{s'} + \mathcal{E}_{s'})}\sum\limits_s e^{-\beta \mathcal{E}_s }\\ 
&= x_{s'}\mathcal{Z} e^{\beta (\tilde{w}_{s'} + \mathcal{E}_{s'})} \\
&=  x_{s'}\mathcal{Z} e^{\beta\mathcal{E}_{s'}}e^{\beta \sum_i \frac{t(s'|s) e^{-\beta\mathcal{E}_s }}{x_{s'} \mathcal{Z}}w(s'|s)}\\
&\leq x_{s'}\mathcal{Z} e^{\beta\mathcal{E}_{s'}}\sum\limits_s \frac{t(s'|s) e^{-\beta\mathcal{E}_s }}{x_{s'} \mathcal{Z}}e^{\beta w(s'|s)}\\
&= \sum\limits_s t(s'|s) e^{\beta (w(s'|s) - \Delta \mathcal{E}_{ss'})} \leq 1
\end{align*}
where in the fourth step we have used the convexity of the exponential function and in the last step we have used the fact that the optimal protocol defined by $\{ t(s'|s) , w(s'|s)\}$ is reversible.\\
Clearly the new protocol obeys the $c$-bounds as $\tilde{w}_{s'}$ is a convex combination of $w(s'|s)$ and the average of the work distribution is the same for both maps, therefore the spread of $\tilde{w}_{s'}$ about the average is less than or equal to that of $w(s'|s)$.
\end{proof}

Substituting $t(s'|s)\rightarrow\tilde{t}(s')  = x_{s'}$ and $w(s'|s)\rightarrow\tilde{w}_{s'}$ into the Lagrangian and simplifying gives 
\begin{align*}
\mathcal{L}&=\sum\limits_{s'} f_{s'} w(s') + \sum\limits_{s'} \lambda_{s'} \left(1-x_{s'} e^{\beta \mathcal{E}_{s'}}\mathcal{Z}e^{\beta w(s')} \right)\\ 
&+ c \sum\limits_{s'} (\mu_{s'} + \bar{\mu}_{s'} )
\end{align*}

where we have dropped the tilde from $\tilde{w}_{s'}$ and 
$f_{s'} = x_{s'}(1+f) - (\bar{\mu}_{s'} - \mu_{s'})$ and $f=\sum\limits_{s'}( \bar{\mu}_{s'}-\mu_{s'})$. $\bar{\mu}$  give the bounds on positive fluctuations and $\mu$ give the bounds for negative fluctuations. Maximizing w.r.t $w(s')$ and $\lambda_{s'}$ gives
\begin{eqnarray}
\lambda_{s'} &=&  \text{max}\{\, \frac{f_{s'}}{\beta}\, , \, 0\,  \} \\
w(s') &=& -\frac{1}{\beta}\log \left(\frac{\beta \lambda_{s'} x_{s'} e^{\beta \mathcal{E}_{s'}}\mathcal{Z} }{f_{s'}} \right)\\
&=& -\frac{1}{\beta}\log \left( x_{s'} e^{\beta \mathcal{E}_{s'}}\mathcal{Z} \right)\,\, , \quad f_{s'} \neq 0
\end{eqnarray}
which is the work value given in the unbounded cases (it contains no $\bar{\mu}_j, \mu_j$ terms). Therefore all fluctuations with $f_{s'} >0$ are unbounded, with $\mu_{s'} (\bar{\mu}_{s'})=0$, giving $f_{s'}=x_{s'}$ if unbounded or $0$ if bounded. The Lagrangian reduces to 
\begin{equation}
\mathcal{L}=-\frac{1}{\beta} \sum\limits_{s'} f_{s'} \log \left( x_{s'} e^{\beta \mathcal{E}_{s'}}\mathcal{Z} \right) + c \sum\limits_{s'} \left(\bar{\mu}_{s'} + \mu_{s'} \right)
\end{equation}
Which we can now re-write in terms of bounded and unbounded fluctuations. Index all fluctuations with $f_{s'} > 0$ with $s'\in\mathcal{X}_u$, and all those with $f_{s'} = 0$ with $s'\in\mathcal{X}_+$. The Lagrangian becomes
\begin{equation}
\mathcal{L}=-\frac{1}{\beta} \sum\limits_{s'\in\mathcal{X}_u} x_{s'} (1+f) \log \left( x_{s'} e^{\beta \mathcal{E}_{s'}}\mathcal{Z} \right) + c \sum\limits_{s'\in\mathcal{X}_+} \left(\bar{\mu}_s + \mu_s \right)
\end{equation}
$f_s=0$ $\forall s'\in\mathcal{X}_+$ gives
\begin{equation}
\bar{\mu}_{s'} - \mu_{s'} = x_{s'} (1+f)
\end{equation}
Summing over $s\in\mathcal{X}_+$ and solving for $f=\sum\limits_{k'} (\bar{\mu}_{k'} - \mu_{k'})$ gives 
\begin{equation}
f = \frac{X_c}{1-X_c}
\end{equation} 
where $X_c = \sum_{s'\in\mathcal{X}_+} x_s$. For a given bounded fluctuation only one of the $\mu$ Lagrange parameters is non-zero, corresponding to if the fluctuation is positive or negative. Using the above result gives
\begin{eqnarray}
\bar{\mu}_k &=& x_k (1+\frac{X_c}{1-X_c})= \frac{x_k}{1-X_c} \label{aaa}\\
\mu_l &=& - \frac{x_l}{1-X_c}
\end{eqnarray}
The Lagrange parameters are restricted to being positive, $\mu_s \geq 0$ and $\bar{\mu}_s \geq 0$ $\forall$ $s$. Therefore as $\mu_s = - x_s / (1-X_c)$ which is $\leq 0$, therefore all $\mu_s = 0$ and we never bound negative fluctuations. \\

Using $X_u = 1-X_c$ we arrive at the $c$-bounded optimal work of formation
\begin{equation}
W_F^{(c)} = \frac{1}{X_u}\left( -\frac{1}{\beta} \sum\limits_{s'\in\mathcal{X}_u} x_{s'}\log \left( x_{s'} e^{\beta \mathcal{E}_{s'}}\mathcal{Z} \right)+ c \, X_c\right) \label{workofformation}
\end{equation}
This is exactly the work cost when we calculate the $c$-bounded work of formation in the following way. Take the optimal unbounded work distribution for forming a state $(w(1) , \dots , w(d))$ with average $W^{(\infty )} (\rho )$. Equation \eqref{workofformation} is given by the solution to the equation 
\begin{align*}
W &= \sum\limits_{s\in\mathcal{X}_u} x_s w(s) +\sum\limits_{s\in\mathcal{X}_+} x_s (W+c)\\
\therefore W &\underbrace{(1-\sum\limits_{s\in\mathcal{X}_+} x_s)}_{X_u} = \sum\limits_{s\in\mathcal{X}_u} x_s w(s) + c X_+ \\
\therefore W &= \frac{1}{X_u}\left( \sum\limits_{s\in\mathcal{X}_u} x_s w(s) + c X_+\right)
\end{align*}
I.e we simply take the optimal unbounded work distribution and if the largest positive fluctuation breaks $w(i)-W\leq c$ we replace it with $W'+c$, recalculating the average each time. 

\subsection{finding the optimal partition for state formation}

We now show this algorithm for finding the partition is identical to the algorithm derived in Supplementary Note 2 for work extraction, in the case of positive fluctuation bounds only. \\

\begin{theorem}\label{partitionformation}
The partition of the state space that gives the $c$-bounded work of formation is found using the algorithm described in Lemma \ref{partition lemma} but bounding only positive fluctuations. 
\end{theorem}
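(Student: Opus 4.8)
The plan is to reduce everything to the positive-fluctuation structure already extracted in Supplementary Note 4A. There it was shown that for the work of formation every negative-fluctuation multiplier vanishes, $\mu_{s'}=0$, so that $\mathcal{X}_-=\emptyset$ and only positive fluctuations are ever bounded. Hence the partition to be determined is merely the split $\{1,\dots,R\}=\mathcal{X}_u\cup\mathcal{X}_+$, the negative-bound test \eqref{6} of the extraction algorithm is never invoked, and only the positive-bound test \eqref{5} survives.

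First I would record that the unbounded optimal work values $w(s)=-\beta^{-1}\log(x_s\,e^{\beta\mathcal{E}_s}\mathcal{Z})$ are strictly monotone in $x_s\,e^{\beta\mathcal{E}_s}$. Consequently the $\beta$-ordering of the state orders the work values, and the set of positive fluctuations $\{s:w(s)>W\}$ is a contiguous block — the exact analogue of the hierarchy \eqref{hirearchy}, now with the $\mathcal{X}_-$ block empty. Because the sign of $w$ is opposite to the extraction case, this block sits at the tail (small $x_s\,e^{\beta\mathcal{E}_s}$) end of $\rho^{\downarrow\beta}$, but this relabelling does not affect the combinatorial structure of the search.

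Second, I would show that the iterative rule stated below \eqref{workofformation} — bound the largest positive fluctuation at $W+c$, recompute the average, and repeat until every residual fluctuation obeys $w(s)-W\le c$ — is precisely the greedy scan through this block, and that the self-consistency condition for the boundary state $k$ to lie in $\mathcal{X}_+$ coincides with \eqref{5} once the $\mathcal{X}_-$ sum is deleted. This is a short calculation: starting from $w(k)-W^{(c)}_F>c$ with $W^{(c)}_F$ given by \eqref{workofformation}, multiply through by $X_u$, substitute $\sum_{s\in\mathcal{X}_u}x_s w(s)=W^{(\infty)}-\sum_{i\in\mathcal{X}_+}x_i w(i)$, and regroup the $c\,X_+$ term to obtain $X_u(w(k)-c)+\sum_{i\in\mathcal{X}_+}x_i(w(i)-c)>W^{(\infty)}(\rho)$, which is \eqref{5} with no negative block.

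Finally I would establish uniqueness and termination by transplanting the monotonicity argument of Lemma \ref{partition lemma}. The key identity is that bounding a state $m$ with $w(m)-W>c$ strictly lowers the average, $W^{(c)}_F(\mathcal{X}_+\cup\{m\})<W^{(c)}_F(\mathcal{X}_+)$, which can only increase the residual fluctuations of the remaining states; hence the set $\mathcal{X}_+$ grows monotonically, the greedy scan reaches a unique fixed point, and at most $R-1$ inequalities \eqref{5} need be checked. The main obstacle I anticipate is precisely this termination-and-uniqueness step: unlike the extraction case, where Lemma \ref{xulemma} supplies the two-sided control $X_\pm<1/2$, for the work of formation there is no $X_+<1/2$ restriction, so I must verify that the monotone-growth argument survives without it and that lowering the average when a fluctuation is clamped at $W+c$ never produces an oscillation that would prevent convergence to the single consistent partition.
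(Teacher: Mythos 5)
Your proposal is correct and follows essentially the same route as the paper's proof: you use the vanishing of the negative-fluctuation multipliers from Supplementary Note 4A to set $\mathcal{X}_-=\emptyset$, invoke the $\beta$-ordering hierarchy so that the positive block is contiguous and only the tightest bound need be checked, and your rearrangement of $w(k)-W^{(c)}_F>c$ into $X_u\left(w(k)-c\right)+\sum_{i\in\mathcal{X}_+}x_i\left(w(i)-c\right)>W^{(\infty)}(\rho)$ is exactly the paper's identification of the formation self-consistency condition with inequality \eqref{5} stripped of its negative block. Your closing monotone-average clamping argument simply makes explicit the uniqueness and termination step that the paper handles implicitly via the monotonicity of $\sum_{s'=1}^{k}x_{s'}\left(w(k)-w(s')\right)$ in $k$ and its ``largest $k$'' criterion, and you are right that no analogue of the $X_+<1/2$ control from Lemma \ref{xulemma} is needed (nor is one used by the paper) for the formation case.
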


\begin{proof}
Take the final state $\rho'$ and $\beta$-order it
\begin{equation}\rho'^{\downarrow\beta}=(x_1 ,\dots x_R) \quad , \quad x_{s'}e^{\beta \mathcal{E}_{s'}}\geq x_{s'+1}e^{\beta \mathcal{E}_{s'+1}}
\end{equation}
where $R=\text{rank}(\rho)$ (we can discount work values in the unbounded work distribution associated with $x_s=0$ as they have no probability of being observed). The unbounded work distribution obeys the inverse $\beta$-ordering as $w(s') = -\beta^{-1}\log (x_{s'} e^{\beta\mathcal{E}_{s'}}\mathcal{Z})$, therefore $w(1)\leq w(2) \leq \dots \leq w(R)$. For a correct partition where $X_+ = \sum_{s'=k}^R x_i$ we require that $w(s') \leq W_F (\rho')^{(c)} + c$ $\forall$ $s<k$. As with the $c$-bounded work extraction protocol the $\beta$-ordering puts these inequalities into a hierarchy, with the bound for $w(k-1)$ being the tightest. As all bounds must be satisfied we need only check the tightest one. For the aforementioned partition the tightest bound is 
\begin{equation}
w(k-1) \leq \frac{1}{\sum\limits_{i=k}^R x_i}\left(W^{(\infty )} - \sum\limits_{s'=k}^R x_{s'} w(s') + c \sum\limits_{s'=k}^R x_{s'} \right)
\end{equation}
which can be re-arranged to give 
\begin{equation}
(1-\sum\limits_{s'=k}^R x_{s'}) w(k-1) + \sum\limits_{{s'}=k}^R x_{s'} w(s') \leq W^{(\infty )} + c
\end{equation}
which is simply the inequality used for checking a partition of the state space in the $c$-bounded work extraction protocol, in the case that there are no negative fluctuations that saturate their bounds \eqref{5}. It can be simplified further to 
\begin{equation}
\sum\limits_{s'=1}^{k-1}x_{s'} (w(k-1)-w(s') ) \leq c \label{counter}
\end{equation}
So the partition is defined by the largest $k$ s.t. 
\begin{align}
\sum\limits_{s'=1}^{k}x_{s'} (w(k)-w(s') ) &> c \label{formation inequality}\\
\sum\limits_{s'=1}^{k-1}x_{s'} (w(k-1)-w(s') ) &\leq c
\end{align}
All $w(s)$ with $s\geq k$ are bounded and all with $s<k$ are unbounded. Once again we have at most $R-1$ inequalities to check. I.e. starting from $k=2$ we check \eqref{formation inequality} for $k=2,3,\dots $. 
\end{proof}

\section{Supplementary note 5}

\subsection{recovering standard and single-shot regimes}
\begin{theorem}
$\lim\limits_{c\rightarrow 0}W^{(c)} (\rho ) = W_\text{min} (\rho ) $
\end{theorem}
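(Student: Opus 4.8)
The plan is to prove the claim by a squeeze argument, bracketing $W^{(c)}(\rho)$ between $W_\text{min}(\rho)$ and $W_\text{min}(\rho)+c$ for every $c>0$. Here $W_\text{min}(\rho)=W^{(0)}(\rho)=\beta^{-1}\log\mathcal{Z}'-\beta^{-1}\log\sum_s x_s^0\,e^{-\beta\mathcal{E}_s}$ is the single-shot (min free energy) work content, the sum running over the support $S=\{s:x_s>0\}$ since $x_s^0$ vanishes off $S$. The two inequalities I would establish are $W^{(c)}(\rho)\geq W^{(0)}(\rho)$ and $W^{(c)}(\rho)\leq W^{(0)}(\rho)+c$; letting $c\to 0^+$ then forces $W^{(c)}(\rho)\to W^{(0)}(\rho)=W_\text{min}(\rho)$.

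For the lower bound I would argue that the deterministic single-shot protocol is itself a feasible $c$-bounded protocol. Assign the single work value $w^\ast=W^{(0)}(\rho)$ to every transition in the support; this saturates the reversibility constraint \eqref{reversibility}, since $\sum_{s\in S}e^{\beta(w^\ast-\mathcal{E}_s)}=e^{\beta w^\ast}\sum_{s\in S}e^{-\beta\mathcal{E}_s}=\mathcal{Z}'$. Being deterministic, its work fluctuations are identically zero and therefore satisfy $|w-\langle w\rangle|\leq c$ for every $c>0$. Since $W^{(c)}(\rho)$ is by definition the supremum of the average work over all $c$-bounded protocols, it is at least the value $W^{(0)}(\rho)$ attained by this one, giving $W^{(c)}(\rho)\geq W^{(0)}(\rho)$ for all $c\geq 0$.

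For the upper bound I would invoke Lemma~\ref{simplified map}, which lets me assume the optimal $c$-bounded protocol is specified by work values $\{w(s)\}_{s\in S}$ with average $W=\sum_{s\in S}x_s w(s)$ obeying the reversibility inequality \eqref{reversibility} and the constraint $|w(s)-W|\leq c$. From $w(s)\geq W-c$ I would estimate
\begin{equation}
\mathcal{Z}'\geq\sum_{s}e^{\beta(w(s)-\mathcal{E}_s)}\geq\sum_{s\in S}e^{\beta(W-c-\mathcal{E}_s)}=e^{\beta(W-c)}\sum_{s\in S}e^{-\beta\mathcal{E}_s},
\end{equation}
where dropping the terms outside the support only decreases the right-hand side. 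Rearranging yields $W\leq c+\beta^{-1}\log\mathcal{Z}'-\beta^{-1}\log\sum_{s\in S}e^{-\beta\mathcal{E}_s}=c+W^{(0)}(\rho)$, and taking the supremum over protocols gives $W^{(c)}(\rho)\leq W^{(0)}(\rho)+c$. Combining the two bounds and sending $c\to 0$ closes the argument.

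The step I expect to require the most care is the upper bound, specifically the treatment of energy levels outside the support: for $s\notin S$ the $c$-bound imposes no constraint (these work values carry zero probability), so one must ensure the estimate uses only the support terms of the reversibility sum, exactly as above. I note that the inequality so obtained is precisely the trade-off bound \eqref{bounds inequality}, so the argument is also self-contained with respect to it. Alternatively the limit follows from the closed form \eqref{cwork 2}: as $c\to 0$ one has $e^{\beta c},e^{-\beta c}\to 1$ and, analyzing the limiting partition in which the surviving unbounded levels share a common value of $x_s e^{\beta\mathcal{E}_s}$, one checks that $e^{\nu}X_\mathrm{u}\to\sum_{s\in\mathcal{X}_\mathrm{u}}e^{-\beta\mathcal{E}_s}$ and hence $\gamma\to\sum_{s\in S}e^{-\beta\mathcal{E}_s}$; the squeeze argument is, however, more transparent.
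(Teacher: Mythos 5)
Your proof is correct, but it takes a genuinely different route from the paper's. The paper obtains the limit directly from its closed-form Lagrangian solution $W^{(c)}(\rho)=\beta^{-1}\left(\log\mathcal{Z}'-\log\gamma\right)$ of \eqref{mainresult}: as $c\to 0$ the partition inequalities \eqref{ineq1} and \eqref{ineq2} force every level with $x_s>0$ into $\mathcal{X}_+$ or $\mathcal{X}_-$, so $X_u\to 0$ and $\gamma\to\mathcal{Z}_++\mathcal{Z}_-=\sum_s x_s^0\,e^{-\beta\mathcal{E}_s}$, which yields $W_\text{min}(\rho)$. You instead squeeze: feasibility of the deterministic single-shot protocol gives $W^{(c)}(\rho)\geq W^{(0)}(\rho)$ for all $c\geq 0$, and lower-bounding each supported work value by $W-c$ inside the reversibility constraint \eqref{reversibility} gives $W^{(c)}(\rho)\leq W^{(0)}(\rho)+c$. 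Two remarks. First, your upper bound is a two-line re-derivation of the trade-off inequality \eqref{bounds inequality}, which the paper proves separately and far more laboriously in Supplementary Note 5 (differentiating the closed form and showing $X_+\leq 1/2$ via a dedicated lemma); your squeeze is therefore more elementary and makes the continuity at $c=0$ manifest with an explicit modulus $c$, and it is also more robust, since it bypasses the partitioning algorithm and the degenerate cases (several levels sharing the value $x_se^{\beta\mathcal{E}_s}$) that the paper's limit argument glosses over and that your closing closed-form remark would have to handle. Second, one small point you should state explicitly in the lower bound: the sum in \eqref{reversibility} runs over \emph{all} $s$, so your claim that assigning $w^\ast$ on the support ``saturates'' the constraint requires the off-support work values to be pushed to $-\infty$ (or those rows otherwise arranged to contribute nothing); with finite off-support values the left-hand side would exceed $\mathcal{Z}'$. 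You handled exactly this support issue correctly on the upper-bound side, where dropping off-support terms only helps the inequality, so the fix is cosmetic. In exchange for its extra machinery, the paper's route does expose structural information your argument does not: in the $c\to 0$ limit every supported fluctuation saturates its bound, consistent with the general solution framework.
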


\begin{proof}

As $c\rightarrow 0$ the inequalities that determine if a fluctuation saturates its bound \eqref{ineq1} and \eqref{ineq2} become 
\begin{align*}
\frac{1}{\beta}\log \left(x_s e^{\beta \mathcal{E}_s} \right) &\geq -\nu / \beta \rightarrow s \in \mathcal{X}_+ \\
\frac{1}{\beta}\log \left(x_s e^{\beta \mathcal{E}_s} \right) &\leq -\nu / \beta \rightarrow s \in \mathcal{X}_- \\
\end{align*}
All fluctuations satisfy one of these bounds, except in the case that $x_s = 0$ (i.e. there is no fluctuation associated with state $\ket{s}$), therefore $X_u \rightarrow 0$. $\gamma = X_u e^\nu + \mathcal{Z}_+e^{\beta c}+\mathcal{Z}_-e^{-\beta c} \rightarrow \mathcal{Z}_+ + \mathcal{Z}_-$ which can be formulated as 
\begin{equation}
\gamma = \sum\limits_s x_s^0e^{-\beta \mathcal{E}_s}
\end{equation} 
therefore $W^{(c)} (\rho )$ becomes 
\begin{align*}
\lim\limits_{c\rightarrow 0}W^{(c)} (\rho ) &= \frac{1}{\beta}\left(\log \mathcal{Z}' - \log \sum\limits_s x_s^0 e^{-\beta \mathcal{E}_s}  \right) \numberthis\\
&= W_\text{min} (\rho )
\end{align*}
\end{proof}

\begin{theorem}
$\lim\limits_{c\rightarrow 0}W_F^{(c)} (\rho ) = W^\text{min}_F (\rho ) $
\end{theorem}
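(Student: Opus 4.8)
The plan is to mirror the proof of the extraction limit $\lim_{c\to0}W^{(c)}=W_{\mathrm{min}}$, but now exploiting the fact that the formation partition of Theorem~\ref{partitionformation} bounds \emph{only} positive fluctuations. I would start from the closed form \eqref{workofformation},
\[
W_F^{(c)}=\frac{1}{X_u}\!\left(-\frac{1}{\beta}\sum_{s'\in\mathcal{X}_u}x_{s'}\log\!\left(x_{s'}e^{\beta\mathcal{E}_{s'}}\mathcal{Z}\right)+c\,X_c\right),
\]
and determine the limiting partition. Recall that after $\beta$-ordering the occupied levels so that $x_1e^{\beta\mathcal{E}_1}\ge\cdots\ge x_Re^{\beta\mathcal{E}_R}$, the reversible work values $w(s')=-\beta^{-1}\log(x_{s'}e^{\beta\mathcal{E}_{s'}}\mathcal{Z})$ are non-decreasing, and the boundary index $k$ (with $\mathcal{X}_u=\{1,\dots,k-1\}$ and $\mathcal{X}_+=\{k,\dots,R\}$) is the smallest $k$ obeying the formation inequality $\sum_{s'=1}^{k}x_{s'}(w(k)-w(s'))>c$.

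First I would pin down the partition as $c\to0^+$. Writing $g(k)=\sum_{s'=1}^{k}x_{s'}(w(k)-w(s'))$, one checks that $g(1)=0$, that $g\ge0$, and that $g$ is non-decreasing because $g(k+1)-g(k)=\bigl(w(k+1)-w(k)\bigr)\sum_{s'\le k}x_{s'}\ge0$. Hence the boundary is $k=\min\{k:g(k)>c\}$, and as $c\downarrow0$ this index drops to the first level at which $g$ becomes strictly positive. Since $g$ vanishes exactly on the initial block of levels that share the maximal value of $x_{s'}e^{\beta\mathcal{E}_{s'}}$ (equivalently the minimal work value $w(1)$) and is strictly positive immediately afterwards, $\mathcal{X}_u$ collapses precisely onto that block; generically $\mathcal{X}_u=\{1\}$ and $X_u=x_1$.

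With the partition fixed the limit is a one-line substitution. The penalty $c\,X_c\to0$, and since every surviving level obeys $x_{s'}e^{\beta\mathcal{E}_{s'}}=\max_{s}x_{s}e^{\beta\mathcal{E}_{s}}$ the remaining sum factorises, $\sum_{s'\in\mathcal{X}_u}x_{s'}\log(x_{s'}e^{\beta\mathcal{E}_{s'}}\mathcal{Z})=X_u\log(\mathcal{Z}\max_{s}x_{s}e^{\beta\mathcal{E}_{s}})$, so the prefactor $1/X_u$ cancels and
\[
\lim_{c\to0}W_F^{(c)}(\rho)=-\frac{1}{\beta}\log\!\left(\mathcal{Z}\,\max_{s}x_{s}e^{\beta\mathcal{E}_{s}}\right)=\min_{s'}w(s')=W_F^{\mathrm{min}}(\rho),
\]
the single-shot (max-free-energy) work of formation; the interpretation is that forcing the work to be deterministic collapses the whole distribution onto its extreme value $w(1)$.

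The main obstacle is the careful handling of the degenerate case in which several levels tie for the maximum of $x_se^{\beta\mathcal{E}_s}$: there $\mathcal{X}_u$ does not reduce to a singleton, and one must verify that the $1/X_u$ normalisation still cancels the \emph{entire} surviving sum rather than a single term -- which is exactly what the factorisation above delivers, so the naive ``$\mathcal{X}_u\to\{1\}$'' shortcut is the one step that would be wrong. A secondary check, making the limit genuine rather than merely formal, is that on the interval $0<c<g(k_{\mathrm{min}})$ the partition is constant, so $W_F^{(c)}$ is affine in $c$ there and is right-continuous at $c=0$.
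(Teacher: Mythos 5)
Your proposal is correct and follows essentially the same route as the paper's proof: both use the formation-partition algorithm of Theorem~\ref{partitionformation} to argue that as $c\to 0$ the unbounded set $\mathcal{X}_u$ collapses onto the level(s) attaining $\max_s x_s e^{\beta\mathcal{E}_s}$, i.e.\ the most negative work value $w(1)$, and then substitute into \eqref{workofformation}. Your write-up is in fact tighter than the paper's --- the explicit monotonicity of $g$, the careful handling of degenerate ties in $x_s e^{\beta\mathcal{E}_s}$, and the affine-in-$c$ right-continuity check make rigorous what the paper dispatches with ``clearly'' plus a Lorenz-curve interpretation of $w(1)$.
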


\begin{proof}
Following the partition algorithm derived in Lemma \ref{partitionformation}, clearly when $c\rightarrow 0$ we bound all but the most negative fluctuation(s), those with value $w(1)$, as this gives the only positively bounded work distribution for which all work values are $\leq $ $W^{(0)}_F$. Any other choice of partition would require both negative and positive bound saturation, which contradicts \eqref{aaa} being positive. Therefore $W^{(0)}_F$ is given by 
\begin{equation}
W^{(0)}_F=\min\limits_s \left\{-\frac{1}{\beta}\log\left(x_s e^{\beta\mathcal{E}'_s}\mathcal{Z}\right) \right\} = -\frac{1}{\beta}\log\left(x_1 e^{\beta\mathcal{E}'_1}\mathcal{Z}\right)
\end{equation}
where $\rho'^{\downarrow\beta}=(x_1,\dots,x_R)$. We can interpret 
\begin{equation}
w(1) = -\frac{1}{\beta}\log\left(x_1 e^{\beta\mathcal{E}'_1}\mathcal{Z}\right)
\end{equation}
$-w(1)$ is the ``on ramp'' (the first segment) of the Lorenz curve of state $\rho'$ (see Figure~\ref{lfig} and \citep{horodecki2013fundamental}). It therefore gives the work that can be extracted from the ``thermally sharp state'', where all segments either have the same gradient or zero gradient, that just thermomajorizes $\rho'$, see figure \ref{lfig} below and also \citep{horodecki2013fundamental}. This can also be interpreted as the upper bound to the work that can be in-deterministically extracted from $\rho'$, and this is precisely the single-shot work of formation.
\begin{figure}[h!]
\includegraphics[scale=0.5]{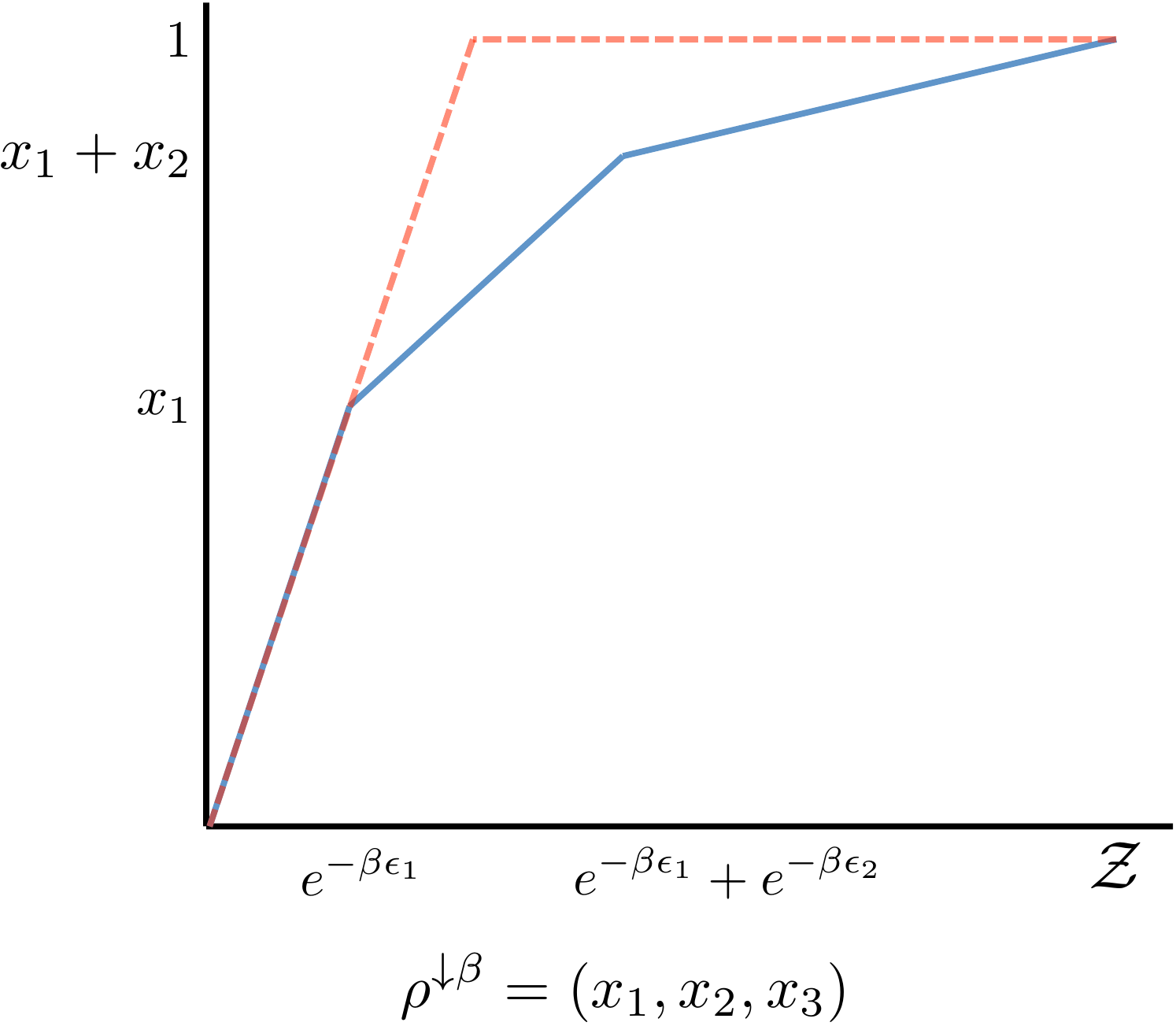}
\caption{Lorenz curve depicting work of formation of a state. Figure shows the thermally sharp state whose single-shot work content gives the work of formation of state $\rho'$. This also represents the maximal average work that can be extracted from $\rho'$ }\label{lfig}
\end{figure}
\end{proof}

\subsection{bounds on c-bounded work}

\begin{lemma}
$W^{(c)} (\rho )  \leq W_\text{min}(\rho ) + c$. $c\in [ 0, c_\text{crit}]$, therefore any gain in work extracted over the single-shot work context of $\rho$ requires fluctuations that are at least as large as the increase in extracted work
\end{lemma}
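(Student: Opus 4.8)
The plan is to work directly from the closed form \eqref{mainresult}, $W^{(c)}(\rho) = \beta^{-1}(\log\mathcal{Z}' - \log\gamma)$, together with the closed form for $W_\text{min}(\rho) = \beta^{-1}(\log\mathcal{Z}' - \log\sum_s x_s^0 e^{-\beta\mathcal{E}_s})$ obtained in the $c\to 0$ analysis. Subtracting the two cancels the $\log\mathcal{Z}'$ terms, so the claim $W^{(c)}(\rho) \leq W_\text{min}(\rho) + c$ reduces to the single inequality $\gamma \geq e^{-\beta c}\sum_s x_s^0 e^{-\beta\mathcal{E}_s}$, since $W^{(c)}(\rho)$ is monotonically decreasing in $\gamma$. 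The whole lemma therefore collapses to a lower bound on the normalisation $\gamma$.

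First I would rewrite $\gamma$ in the fluctuation form $\gamma = \sum_s x_s^0 e^{-\beta\mathcal{E}_s} e^{\beta\theta_s}$, which is verified by evaluating the three cases of the partition $\mathcal{X}_u\cup\mathcal{X}_+\cup\mathcal{X}_-$ and recovering $X_u e^{\nu} + \mathcal{Z}_+ e^{\beta c} + \mathcal{Z}_- e^{-\beta c}$. The key observation is then that every fluctuation in the optimal $c$-bounded distribution obeys $\theta_s \geq -c$: the clamped levels have $\theta_s = \pm c \geq -c$ by construction, and the unbounded levels $s\in\mathcal{X}_u$ are precisely those the partitioning algorithm leaves inside the open interval $|\theta_s| < c$. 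Hence $e^{\beta\theta_s} \geq e^{-\beta c}$ term by term, and summing against the nonnegative weights $x_s^0 e^{-\beta\mathcal{E}_s}$ yields $\gamma \geq e^{-\beta c}\sum_s x_s^0 e^{-\beta\mathcal{E}_s}$, which is exactly what is needed.

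To upgrade to the strict inequality for $c>0$, I would invoke the zero-mean property of the fluctuations $\sum_s x_s\theta_s = 0$ established in \eqref{flux}. If the bound on $\gamma$ were saturated, every occupied level would need $\theta_s = -c$, forcing $\sum_s x_s\theta_s = -c < 0$ and contradicting the zero-mean identity. Thus at least one occupied level has $\theta_s > -c$, its term strictly exceeds the bound, and the inequality on $\gamma$, and hence on $W^{(c)}(\rho)$, is strict.

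The calculation is short, so the main obstacle is conceptual rather than algebraic: I must be certain that $\theta_s \geq -c$ holds for \emph{all} occupied levels of the optimal distribution, including the unbounded ones in $\mathcal{X}_u$. This is where the restriction $c\in[0,c_\text{crit}]$ enters, since throughout this range it is the partitioning algorithm of Supplementary Note 3B that defines the $\theta_s$, and that algorithm guarantees by construction that the unbounded fluctuations never leave $[-c,c]$. I would therefore state explicitly that admissibility of the optimal $c$-bounded protocol \emph{is} the condition $|\theta_s|\leq c$, so that the one-line lower bound on $\gamma$ is fully justified rather than assumed.
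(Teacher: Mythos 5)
Your proof is correct, but it takes a genuinely different route from the paper's. The paper proves the bound dynamically: it differentiates the closed form \eqref{mainresult} with respect to $c$, shows that $\partial W^{(c)}(\rho)/\partial c > 1$ would force $X_+ > 1/2$, rules this out via the zero-mean fluctuation identity (the same argument as Lemma~\ref{xulemma}), and then integrates the resulting sub-linear growth $\partial W^{(c)}/\partial c \leq 1$ from $c=0$, where $W^{(0)}(\rho)=W_\text{min}(\rho)$. You instead prove the bound statically and pointwise: writing $\gamma = \sum_s x_s^0\, e^{-\beta\mathcal{E}_s} e^{\beta\theta_s}$ and observing that feasibility of the optimal protocol forces $\theta_s \geq -c$ on every occupied level (the clamped levels by construction, the levels in $\mathcal X_\mathrm{u}$ because the partitioning algorithm, equivalently the KKT conditions, only leaves a fluctuation unconstrained when it already lies within the bound), you get $\gamma \geq e^{-\beta c}\sum_s x_s^0 e^{-\beta\mathcal{E}_s}$ in one line, which is exactly the claimed inequality by monotonicity of $W^{(c)}$ in $\gamma$. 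Your route buys two things: it sidesteps any question of differentiability of $c\mapsto W^{(c)}(\rho)$ at the discrete values of $c$ where the partition $\mathcal X_\mathrm{u}\cup\mathcal X_+\cup\mathcal X_-$ changes (a point the paper's integration argument passes over silently, relying on continuity and piecewise smoothness), and it yields strictness for $c>0$ directly, since saturation would require $\theta_s=-c$ on all occupied levels, contradicting $\sum_s x_s\theta_s = 0$. What the paper's derivative approach buys in exchange is local information — the actual marginal rate at which extractable work grows with the fluctuation allowance, $\partial W^{(c)}/\partial c = \gamma^{-1}\left((X_+-X_-)e^\nu - \mathcal{Z}_+e^{\beta c}+\mathcal{Z}_-e^{-\beta c}\right)$ — which your termwise estimate does not expose. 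Both proofs ultimately rest on the same physical fact (the mean fluctuation vanishes), but you use it only for strictness, whereas the paper needs it for the main inequality itself.
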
 
\begin{proof}
For $c=0$ we get that $W_{c=0}(\rho ) = W_\text{min} (\rho)$. $W^{(c)}$ increases as we increase $c$ from 0. We show $W^{(c)}(\rho)$ grows sub-linearly with $c$, i.e. that
\begin{equation}
\frac{\partial W^{(c)}(\rho)}{\partial c} \leq 1 \quad \forall \, c
\end{equation}
which implies that $|W^{(c)} (\rho ) - W_\text{min}(\rho )| \leq c$. Using \eqref{mainresult} the derivative of $W^{(c)}(\rho )$ with respect to $c$ is 
\begin{equation}
\frac{\partial W^{(c)}(\rho)}{\partial c} = -\frac{1}{\beta \gamma}\frac{\partial \gamma}{\partial c} =  \frac{1}{\gamma} \left((X_+ - X_-)e^\nu - \mathcal{Z}_+e^{\beta c}+\mathcal{Z}_-e^{-\beta c} \right)
\end{equation}
where $\gamma = X_u e^\nu + \mathcal{Z}_+e^{\beta c}+\mathcal{Z}_-e^{-\beta c}$. For this to be greater than $1$ we require that 
\begin{equation}
(X_+ - X_-)e^\nu - \mathcal{Z}_+e^{\beta c}+\mathcal{Z}_-e^{-\beta c} > X_u e^\nu + \mathcal{Z}_+e^{\beta c}+\mathcal{Z}_-e^{-\beta c}
\end{equation}
As $\mathcal{Z}_\pm e^{\pm \beta c} \geq 0$ and $e^\nu \geq 0$ this cannot be satisfied unless 
\begin{equation}
X_+ - X_- > X_u \xrightarrow{\therefore} X_+ > \frac{1}{2}
\end{equation}
where we have used $X_u = 1-X_+ - X_-$. We now show that this is never true for $c$-bounded work distributions. Using $W^{(c)}(\rho) = \sum_s x_s \theta_s$, where $\theta_s = w(s) - W^{(c)}(\rho )$ is the fluctuation associated with work value $w(s)$, gives $\sum_s x_s \theta_s = 0$. Divide the fluctuations for the $c$-bounded protocol $\theta_s$ into positive $\theta_\alpha \geq 0$ where $\alpha\in\mathcal{X}_+$ and negative $\theta_\beta <0$ where $\beta\in\mathcal{X}_-$, giving
\begin{equation}
\sum\limits_\alpha x_\alpha |\theta_\alpha | = \sum\limits_\beta x_\beta |\theta_\beta | \label{fluc2}
\end{equation}
according to \eqref{sat1} and \eqref{sat2}, bounded fluctuations saturate their bounds. Take the subset of the positive fluctuations that saturate their bounds as $\theta_{\alpha'}$. \eqref{fluc2} becomes
\begin{equation}
c X_+ + \sum\limits_{\alpha\neq \alpha '}x_\alpha |\theta_\alpha | = \sum\limits_\beta x_\beta |\theta_\beta |
\end{equation}
Dividing both sides by $\sum_\beta x_\beta$ gives 
\begin{equation}
\frac{\sum\limits_\beta x_\beta |\theta_\beta |}{\sum\limits_\beta x_\beta }= \frac{c X_+ + \sum\limits_{\alpha\neq \alpha '}x_\alpha |\theta_\alpha |}{\sum\limits_\beta x_\beta}
\end{equation}
the right hand side is strictly larger than $c$ as $X_+ > 1/2$ and therefore  $\sum_\beta x_\beta < 1/2$, and $\sum_{\alpha\neq \alpha '}x_\alpha |\theta_\alpha | \geq 0$. The left hand side is a convex sum with all $|\theta_\beta| \geq 0$, therefore at least one $|\theta_\beta |$ must be larger than $c$. Therefore in a $c$-bounded protocol $X_+\leq 1/2$.
\end{proof}

\begin{lemma}
$\left| W^{(c)}_F (\rho' )\right|  \geq \left| W_\text{max}(\rho' )\right| - c$. $c\in [ 0, c_\text{crit}]$, therefore any gain in work extracted over the single-shot work of formation of $\rho'$ requires fluctuations that are at least as large as the increase in extracted work
\end{lemma}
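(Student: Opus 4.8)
The plan is to mirror the proof of the preceding content lemma: fix the endpoint value at $c=0$, show that $W^{(c)}_F$ varies with $c$ at a rate bounded by one, and then convert this into the stated inequality via the reverse triangle inequality. The starting point is the theorem proved just above, which identifies the $c\to 0$ limit of $W^{(c)}_F(\rho')$ with the single-shot work of formation $W_\text{max}(\rho')$, so that $W^{(c)}_F$ only departs from $W_\text{max}$ as $c$ grows. It therefore suffices to establish $|W^{(c)}_F-W_\text{max}|\le c$ along the whole path, since the reverse triangle inequality $\big||W^{(c)}_F|-|W_\text{max}|\big|\le|W^{(c)}_F-W_\text{max}|$ then immediately yields $|W^{(c)}_F(\rho')|\ge|W_\text{max}(\rho')|-c$.

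For the rate bound I would work directly from the closed form \eqref{workofformation}. On any interval of $c$ over which the optimal partition $\mathcal{X}_u\cup\mathcal{X}_+$ is constant, \eqref{workofformation} is affine in $c$ with slope $\partial W^{(c)}_F/\partial c=X_c/X_u=X_c/(1-X_c)$, because $X_u$, $X_c$ and the unbounded free-energy term are all fixed on that interval. This slope is at most one precisely when $X_c\le 1/2$. The partition changes only at finitely many threshold values of $c$ (at most $R-1$ of them, by Lemma \ref{partition lemma} and Theorem \ref{partitionformation}), and at each threshold a level passes between $\mathcal{X}_+$ and $\mathcal{X}_u$ exactly when its fluctuation equals $+c$, so $W^{(c)}_F$ is continuous there. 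Hence $W^{(c)}_F$ is a continuous, piecewise-affine function of $c$ whose every slope lies in $[0,1]$, giving $0\le W^{(c)}_F-W_\text{max}\le c$ and in particular the needed bound $|W^{(c)}_F-W_\text{max}|\le c$.

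The crux is the inequality $X_c\le 1/2$, which I would prove by the same zero-mean argument used in Lemma \ref{xulemma} and in the preceding content lemma. Writing $\theta_s=w(s)-W^{(c)}_F$ for the fluctuations, the balance identity $\sum_s x_s\theta_s=0$ holds. In the formation-optimal distribution the bounded levels are exactly those in $\mathcal{X}_+$, whose fluctuations saturate at $\theta_s=+c$ and contribute $cX_c$ to this sum, while every remaining level still obeys the defining $c$-bound $\theta_s\ge -c$. If one had $X_c>1/2$, then the complementary weight would satisfy $1-X_c<1/2$, and the balance condition would force the weighted average of the remaining fluctuations to equal $-cX_c/(1-X_c)<-c$; being a convex combination of numbers each $\ge -c$, at least one would then have to drop below $-c$, contradicting the $c$-bound. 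Hence $X_c\le 1/2$.

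The main obstacle I anticipate is precisely this last step, and specifically the care needed to see that the implicit lower bound $\theta_s\ge -c$ holds for every level even though the formation optimisation of Supplementary Note 4 bounds only positive fluctuations (all $\mu_s=0$). The point to pin down is that $\theta_s\ge -c$ is not an extra constraint but a consequence of the very definition of a $c$-bounded protocol, so the fact that the negative bounds are inactive does not invalidate the balance argument; once this is settled, the affineness computation and the reverse triangle inequality are routine. Finally, for $c$ beyond the critical value $c_\text{crit}$ at which $\mathcal{X}_+=\emptyset$ and $W^{(c)}_F$ has reached the standard free-energy value, the function is constant while the right-hand side only decreases, so the inequality persists trivially, completing the argument on $[0,\infty)$ and in particular on $[0,c_\text{crit}]$.
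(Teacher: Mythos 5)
Your strategy---anchor the value at $c=0$ via the preceding limit theorem, bound the slope of $W^{(c)}_F$ by one using $X_c\le 1/2$, and finish with the reverse triangle inequality---is a transplant of the paper's proof of the \emph{extraction} inequality $W^{(c)}(\rho)\le W_\text{min}(\rho)+c$, and the mechanical parts are fine: the slope $\partial W^{(c)}_F/\partial c = X_c/(1-X_c)$ on fixed-partition intervals, continuity at the (at most $R-1$) partition thresholds, and the triangle-inequality step are all correct given \eqref{workofformation}. But this is not how the paper proves the formation lemma, and the transplant fails at exactly the point you flagged as ``the main obstacle'' and then waved away. Your argument for $X_c\le 1/2$ requires every unbounded level of the distribution realizing \eqref{workofformation} to satisfy $\theta_s\ge -c$. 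You assert this is ``a consequence of the very definition of a $c$-bounded protocol'', but in Supplementary Note 4 that distribution is obtained by enforcing \emph{only} the upper bounds: all negative-bound multipliers vanish ($\mu_s=0$, ``we never bound negative fluctuations''), and the partition algorithm of Theorem \ref{partitionformation} checks only inequalities of the form $w(s')\le W^{(c)}_F+c$. Nothing prior to this lemma certifies that the extremal work value respects the lower bound --- and since the single-shot work of formation is precisely that extremal value, $|W_\text{max}(\rho')|=|w(1)|$ with $w(1)=-\beta^{-1}\log\left(x_1 e^{\beta\mathcal{E}_1}\mathcal{Z}\right)$, the statement $\theta_1=w(1)-W^{(c)}_F\ge -c$ \emph{is} the lemma being proven. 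Appealing to the operational definition does not rescue this: the true constrained optimum is feasible by definition, but then you cannot simultaneously assume it has the structure of \eqref{workofformation} (level $1$ sitting at its unbounded value $w(1)$), because whether the negative bounds are genuinely inactive at the optimum is the unverified piece. Note also the asymmetry you glossed over: in extraction, $W_\text{min}$ is a soft minimum of the work values rather than the extremal one, and the two-sided partition checks \eqref{1}--\eqref{2} build in feasibility of both signs, which is why the derivative argument does real work there but is circular (or, on the charitable reading, redundant, since $\theta_1\ge -c$ alone gives the lemma in one line) in the formation case.

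The paper's own proof avoids this circle entirely: assume $\left|W^{(c)}_F(\rho')\right| < \left|W_\text{max}(\rho')\right|-c$, substitute the explicit formula \eqref{workofformation} with $w(1)$ the extremal value, and reduce the assumption to an inequality of the form $c<\sum_{s'\in\mathcal{X}_u}x_{s'}\left(w(1)-w(s')\right)$, which contradicts the partition inequality \eqref{counter} --- a condition that was verified during the construction of the partition and involves only the upper bounds. To repair your proof, replace the definitional appeal in the balance argument by exactly this step: derive the lower-bound feasibility (and with it $X_c\le 1/2$, via Lemma \ref{xulemma}) from \eqref{counter}, rather than presupposing it.
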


\begin{proof}
if $\left| W^{(c)}_F (\rho' )\right|  < \left| W_\text{max}(\rho' )\right| - c$ then, taking $w(1) = \text{max}\{ w(s') \}=\text{max}\{ \beta^{-1}\log x_{s'} e^{\beta \epsilon_{s'}}\mathcal{Z}\}=\left| W_\text{max}(\rho' )\right|$ when we require that 
\begin{equation}
\frac{1}{X_u}\left(\sum\limits_{s'\in\mathcal{X}_u}w(s')+c(1-X_u) \right)+c < w(1)
\end{equation}
which requires that
\begin{equation}
c< \sum\limits_{s'\in\mathcal{X}_u}(w(1)-w(s'))
\end{equation}
which contradicts~\eqref{counter} for any partition
\end{proof} 

\subsection{c-bounded work obeys the Jarzinski equality}
In this Supplementary Note we show that all $c$-bounded work distributions obey the Jarzinski equality. The equality is given in the case that a system begins in the thermal state of an initial Hamiltonian $H_i$ and the Hamiltonian is transformed to $H_f$, causing the state to evolve to a state that is, potentially, out of equilibrium with the final Hamiltonian. The equality is stated as\\
\begin{equation}
\langle e^{-\beta w}\rangle = e^{-\beta \Delta F}
\end{equation}
where $w$ is the work random variable (i.e. the work values associated with the protocol), with the convention that they are positive if the work is done on the system, and $\Delta F$ is the free energy difference between the equilibrium states of $H_\mathcal{S}$ and $H'_\mathcal{S}$, given by \\
\begin{equation}
\Delta F = \frac{1}{\beta}\log\left(\frac{\mathcal{Z}}{\mathcal{Z}'} \right)
\end{equation}

\begin{theorem}\label{jarzinskilemma}
Any protocol that saturates all reversibility constraints immediately satisfies the Jarzinski equality
\end{theorem}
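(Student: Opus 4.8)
The plan is to evaluate the exponential average $\langle e^{-\beta w}\rangle$ directly from the joint distribution of initial state, final state and work, using only the saturated reversibility condition~\eqref{result 1}. Since the Jarzinski setting fixes the initial state to be the Gibbs state of $H_\text{S}$, the initial occupation probabilities are $x_s = \mathcal{Z}^{-1} e^{-\beta \mathcal{E}_s}$, and the full work distribution of the protocol is $P(w) = \sum_{s,s'} x_s\, t(s',w|s)$, where $t(s',w|s)$ is the classical map of the protocol. First I would write the Jarzinski average, adopting the paper's sign convention in which $w$ is the work delivered to the weight so that the ``work done on the system'' appearing in the equality is $-w$, as
\begin{equation}
\langle e^{\beta w}\rangle = \frac{1}{\mathcal{Z}} \sum_{s,s'} \int_\mathbb{R}\! dw\, e^{-\beta \mathcal{E}_s}\, t(s',w|s)\, e^{\beta w}\ .
\end{equation}

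The key step is to recognise that, up to a factor $e^{-\beta \mathcal{E}_{s'}}$, the summand is exactly the left-hand side of the reversibility condition~\eqref{result 1}. Multiplying~\eqref{result 1} by $e^{-\beta \mathcal{E}_{s'}}$ and using that the protocol saturates it gives
\begin{equation}
\sum_s \int_\mathbb{R}\! dw\, t(s',w|s)\, e^{\beta(w-\mathcal{E}_s)} = e^{-\beta \mathcal{E}_{s'}} \quad \forall\, s'\ ,
\end{equation}
so that summing over $s'$ collapses the double sum into the final partition function,
\begin{equation}
\langle e^{\beta w}\rangle = \frac{1}{\mathcal{Z}} \sum_{s'} e^{-\beta \mathcal{E}_{s'}} = \frac{\mathcal{Z}'}{\mathcal{Z}}\ .
\end{equation}

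Finally I would match this to the claimed form. With $\Delta F = \beta^{-1}\log(\mathcal{Z}/\mathcal{Z}')$ one has $e^{-\beta\Delta F} = \mathcal{Z}'/\mathcal{Z}$, and since the Jarzinski work (positive when done on the system) is $-w$, the computed quantity $\langle e^{\beta w}\rangle$ coincides with $\langle e^{-\beta w_\text{J}}\rangle$, yielding $\langle e^{-\beta w_\text{J}}\rangle = e^{-\beta \Delta F}$ as required. The derivation is immediate once~\eqref{result 1} is invoked, which is precisely why the statement asserts that saturation of the reversibility constraints \emph{immediately} implies the equality; in particular, since the argument never uses the optimality of the protocol, it applies verbatim to both the work-extraction and work-of-formation distributions of Theorem~2. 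The only real care needed, and hence the main (modest) obstacle, is the bookkeeping of the two opposite sign conventions for work: the framework measures work extracted onto the weight, whereas the Jarzinski equality is conventionally stated for work performed on the system, and one must reconcile these so that the exponent and the sign of $\Delta F$ remain consistent.
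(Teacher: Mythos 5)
Your proof is correct and follows essentially the same route as the paper's: saturate the reversibility constraint \eqref{result 1}, multiply through by $e^{-\beta\mathcal{E}_{s'}}$, sum over $s'$ with the initial Gibbs weights $x_s = \mathcal{Z}^{-1}e^{-\beta\mathcal{E}_s}$, and identify $\langle e^{\beta w}\rangle = \mathcal{Z}'/\mathcal{Z} = e^{-\beta\Delta F}$ after reconciling the sign conventions. The only (harmless) difference is that you work with the general map $t(s',w|s)$ and an integral over $w$, whereas the paper writes the argument for deterministic-per-transition work values $w(s'|s)$; your sign bookkeeping is in fact cleaner than the paper's, which asserts $\mathcal{Z}'/\mathcal{Z} = e^{\beta\Delta F}$ and relies on the convention remark to fix it.
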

\begin{proof}
If a protocol, characterised by $\{t(s'|s), w(s'|s) \}$, saturates all reversibility constraints then 
\begin{equation}
\sum\limits_s t(s'|s)e^{\beta (w(s'|s) - \mathcal{E}_s + \mathcal{E}_{s'})} = 1
\end{equation}
Dividing by $\mathcal{Z}$ and re-arranging gives
\begin{equation}
\sum\limits_s \frac{e^{-\beta\mathcal{E}_s}}{\mathcal{Z}}t(s'|s)e^{\beta w(s'|s)}=\frac{e^{-\beta \mathcal{E}_{s'}}}{\mathcal{Z}}
\end{equation}
if we sum over $s'$, the LHS of this is equivalent to $\langle e^{\beta w(s'|s)}\rangle$ with the initial state being thermal w.r.t $H_\mathcal{S}$, $x_s=\mathcal{Z}^{-1}e^{-\beta\mathcal{E}_s}$, and the RHS becomes $\mathcal{Z}'/\mathcal{Z}$, which is equal to $e^{\beta \Delta F}$. Note that we use the convention of positive work for work extracted from the system, hense our $w(s'|s)=-w(s)$ are the negatives of the work values given in the Jarzinski equality. \\
\end{proof}

\begin{theorem}
The work distributions associated with $W^{(c)}(\rho)$ and $W_F^{(c)}(\rho)$ obey the Jarzinski equality. 
\end{theorem}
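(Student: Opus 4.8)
The plan is to lean entirely on Theorem~\ref{jarzinskilemma}, which has already done the conceptual work: a protocol $\{t(s'|s),w(s'|s)\}$ satisfies $\langle e^{\beta w}\rangle = \mathcal{Z}'/\mathcal{Z}$ the moment every reversibility constraint \eqref{x3} is saturated, i.e. the left-hand side $R_{s'}=\sum_s t(s'|s)\,e^{\beta(\mathcal{E}_{s'}-\mathcal{E}_s+w(s'|s))}$ equals $1$ for each $s'$. So the whole statement reduces to verifying that the two work-optimal maps built in Supplementary Notes 3 and 4 saturate \eqref{x3}. By the thermal-operation criterion \eqref{result 1} this is in fact a necessary condition for any map to be physically realizable, so verifying it by hand is the natural route, and it is where the two cases behave very differently.

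For the $c$-bounded work content I would substitute the explicit optimal map of Supplementary Note 3, namely $t(s'|s)=e^{-\beta\mathcal{E}_{s'}}/\mathcal{Z}'$ with $w(s'|s)=w(s)=\tfrac{1}{\beta}\log(f_s e^{\beta\mathcal{E}_s}\mathcal{Z}')$, directly into $R_{s'}$. Since the work value depends only on the initial index $s$, the sum is the same for every final index $s'$ and collapses to $\tfrac{1}{\mathcal{Z}'}\sum_s e^{\beta(w(s)-\mathcal{E}_s)}=\sum_s f_s$, which is $1$ by the normalization $\sum_s f_s=1$ used in deriving \eqref{mainresult}. This works verbatim for the bounded levels as well, because the closed form $w(s)=\tfrac{1}{\beta}\log(f_s e^{\beta\mathcal{E}_s}\mathcal{Z}')$ stays valid with the capped $f_s$ of $\mathcal{X}_\pm$. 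Hence every constraint saturates and Theorem~\ref{jarzinskilemma} gives the Jarzinski identity immediately.

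For the $c$-bounded work of formation I would repeat the substitution with the Supplementary Note 4 map $t(s'|s)=x_{s'}$. On the unbounded block $s'\in\mathcal{X}_u$ the optimal work value is the reversible one, $w(s')=-\tfrac{1}{\beta}\log(x_{s'}e^{\beta\mathcal{E}_{s'}}\mathcal{Z})$, and inserting it gives $x_{s'}e^{\beta\mathcal{E}_{s'}}\mathcal{Z}\,e^{\beta w(s')}=1$, so those constraints saturate at once. The main obstacle I anticipate is the bounded block $s'\in\mathcal{X}_+$: here the capped value $w(s')=W^{(c)}_F+c$ sits strictly below the reversible value, so the \emph{deterministic} proxy used to evaluate the average \eqref{workofformation} gives $R_{s'}<1$ and fails to saturate \eqref{x3} level by level. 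The point I would have to argue carefully is that the distribution entering the Jarzinski average is the work distribution of the genuine achieving thermal operation — which by \eqref{result 1} is forced to saturate every constraint — rather than the averaged proxy. Pinning this down (a natural candidate being to transport the already-established saturation of the extraction map through microscopic reversibility to its reverse, the formation process) while keeping the fluctuations within the $c$-bound is exactly the delicate step, and is where I expect the real difficulty of the proof to lie.
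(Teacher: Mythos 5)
Your extraction half is exactly the paper's own argument: substitute $t(s')=e^{-\beta\mathcal{E}_{s'}}/\mathcal{Z}'$ and $w(s)=\beta^{-1}\log(f_s e^{\beta\mathcal{E}_s}\mathcal{Z}')$ into the reversibility constraints, watch them collapse to $\sum_s f_s=1$ independently of $s'$, and invoke Theorem~\ref{jarzinskilemma}. Nothing to add there.

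For the formation half you have located the obstruction correctly, but the rescue you sketch cannot be carried out, and you should not expect it to: the difficulty you flag is a genuine defect of the statement, not of your proof. Write $W$ for the mean of the formation work distribution in the supplement's extracted-work convention, and recall that the defining property of a bounded level $s'\in\mathcal{X}_+$ is $w^{(\infty)}(s')-W>c$, where $w^{(\infty)}(s')=-\beta^{-1}\log\left(x_{s'}e^{\beta\mathcal{E}_{s'}}\mathcal{Z}\right)$ is the \emph{unique} work value saturating the constraint at $s'$. Any realization obeying the $c$-bound — fluctuating or deterministic — has conditional work support in $(-\infty,\,W+c]$, hence $\langle e^{\beta w}\,|\,s'\rangle\leq e^{\beta(W+c)}<e^{\beta w^{(\infty)}(s')}$, so the constraint \eqref{result 1} at every $s'\in\mathcal{X}_+$ is strictly slack for \emph{every} $c$-bounded formation protocol; no transport of saturation from the extraction map through microscopic reversibility can fix this without breaking the bound. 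Since the Jarzynski average with thermal initial state is $\langle e^{\beta w}\rangle=\mathcal{Z}^{-1}\sum_{s'}e^{-\beta\mathcal{E}_{s'}}R_{s'}$ with $R_{s'}$ the constraint left-hand side, one gets $\langle e^{\beta w}\rangle=\sum_{s'\in\mathcal{X}_u}e^{-\beta\mathcal{E}_{s'}}/\mathcal{Z}+X_+e^{\beta(W+c)}<\mathcal{Z}'/\mathcal{Z}$ strictly whenever $\mathcal{X}_+\neq\emptyset$: e.g.\ for trivial qubit Hamiltonians, $\rho'=0.9\ketbra{0}{0}+0.1\ketbra{1}{1}$, $\beta=1$, $c=1$, the distribution gives $\langle e^{\beta w}\rangle\approx 0.67\neq 1=\mathcal{Z}'/\mathcal{Z}$. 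Note that the paper itself only computes the extraction case and dismisses the other with ``can similarly be shown'' (even mislabelling it $W^{(c)}(\rho)$ a second time), so it does not supply the step you were missing either; the honest conclusion is that the formation distribution obeys the Jarzynski equality only in the regime where no bound is active ($\mathcal{X}_+=\emptyset$, i.e.\ the reversible regime), a failure consonant with the absolute-irreversibility results the paper cites elsewhere.
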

First we cover the case of work extraction, $W^{(c)}(\rho)$. The work values are give by 
\begin{equation}
w(s) = \frac{1}{\beta}\log (f_s e^{\beta\mathcal{E}_s}\mathcal{Z}') 
\end{equation}
and the $t(s'|s)$ are given by 
\begin{equation}
t(s') = \frac{e^{-\beta\mathcal{E}'_{s'}}}{\mathcal{Z}'}
\end{equation}
as derived in Supplementary Note 2. Note that 
\begin{equation}
\sum\limits_s f_s = 1
\end{equation}
The LHS of the reversibility constraints are 
\begin{align*}
&= \sum\limits_s \frac{e^{-\beta\mathcal{E}_{s'}}}{\mathcal{Z}'} e^{\beta (w(s) - \mathcal{E}_s + \mathcal{E}_{s'})} \\
&= \sum\limits_s \frac{1}{\mathcal{Z}'} e^{\log (f_s \mathcal{Z}'))} \\
&= \sum\limits_s f_s = 1
\end{align*}
and can similarly be shown for $W^{(c)}(\rho)$.

\section{Supplementary note 6}

In this Supplementary Note we calculate the $c$-bounded Carnot efficiency for a single qubit quantum engine. The engine operates by moving a qubit $\rho = x \ketbra{0}{0}+(1-x)\ketbra{1}{1}$ with Hamiltonian $H_s=\mathcal{E}\ketbra{0}{0}$ between two baths with inverse temperature $\beta_H$ and $\beta_C$, with $\beta_H<\beta_C$. The engine cycle begins with the qubit in thermal equilibrium with the cold bath. It is then placed in contact with the hot bath, and extracting work by allowing the state to equilibrate. The final step in the cycle is to return the qubit to the cold bath, allowing it to equilibrate and extracting work.\\ 
\textbf{Case: }$\mathbf{c\rightarrow\infty}$. Following the proof given in \cite{skrzypczyk2014work}, we show that in the case that fluctuations are unbounded it is possible to reach Carnot efficiency. The equilibrium state of the qubit if given by \\
\begin{equation}
\rho_{H,C}=\frac{1}{Z_{H,C}}e^{-\beta_{H,C}\mathcal{E}}\ketbra{0}{0} +\frac{1}{Z_{H,C}}\ketbra{1}{1}
\end{equation}
where $\mathcal{Z}_{H,C}=e^{-\beta_{H,C}\mathcal{E}}+1$. When $\rho_C$ is placed in thermal contact with the hot bath we can extract a maximal average work given by the free energy difference 
\begin{align*}
F(\rho_C,\beta_H)-F(\rho_H,\beta_H)&=\frac{1}{\beta_H}\log\left(\frac{\mathcal{Z}_H}{\mathcal{Z}_C} \right)\\
&+\text{tr}\left( H_s \rho_c \right) \left(\frac{\beta_H-\beta_C}{\beta_H} \right) \numberthis
\end{align*}
where $\text{tr}\left( H_s \rho_C \right)=\mathcal{Z}^{-1}_Ce^{-\beta_C\mathcal{E}}\mathcal{E}$ and work values\\
\begin{align*}
w_{0,H} &= \frac{1}{\beta_H}\log\left(\frac{\mathcal{Z}_H}{\mathcal{Z}_C}e^{\mathcal{E} (\beta_H - \beta_C)} \right) \\
w_{1,H} &= \frac{1}{\beta_H}\log\left(\frac{\mathcal{Z}_H}{\mathcal{Z}_C}\right) \numberthis
\end{align*}
where we have used the result for the optimal unbounded work extraction protocol $w(s) = \beta^{-1}\log x_s e^{\beta\mathcal{E}_s}\mathcal{Z}$. Returning the qubit to the cold bath we extract work
\begin{align*}
F(\rho_H,\beta_C)-F(\rho_C,\beta_C)&=\frac{1}{\beta_C}\log\left(\frac{\mathcal{Z}_C}{\mathcal{Z}_H} \right)\\
&+\text{tr}\left( H_s \rho_H \right) \left(\frac{\beta_C-\beta_H}{\beta_C} \right) \numberthis
\end{align*}
with work values \\
\begin{align*}
w_{0,C} &= \frac{1}{\beta_C}\log\left(\frac{\mathcal{Z}_C}{\mathcal{Z}_H}e^{\mathcal{E} (\beta_C - \beta_H)} \right) \\
w_{1,C} &= \frac{1}{\beta_C}\log\left(\frac{\mathcal{Z}_C}{\mathcal{Z}_H}\right) \numberthis
\end{align*}
where $\text{tr}\left( H_s \rho_H \right)=\mathcal{Z}^{-1}_He^{-\beta_H\mathcal{E}}\mathcal{E}$. The total work extracted in one cycle is 
\begin{equation}
W_\text{tot}=\left(\frac{1}{\beta_H}- \frac{1}{\beta_C}\right)\left(S_H-S_C \right)
\end{equation}
where $S_{H,C} = -\log \mathcal{Z}_{H,C}-\beta_{H,C}\text{tr}\left( H_s \rho_{H,C} \right)$ is the Von Neumann entropy of the Gibbs state with inverse temperature $\beta_{H,C}$. Applying the first law of thermodynamics $\Delta U = Q-W$ to the first step (extracting work from the hot bath), we get 
\begin{equation}
\text{tr}\left( H_s \rho_{H} \right)-\text{tr}\left( H_s \rho_{C} \right)=Q_H - \left(F(\rho_C,\beta_H)-F(\rho_H,\beta_H) \right)
\end{equation}
which simplifies to 
\begin{equation}
Q_H = \frac{1}{\beta_H}\left(S_H - S_C \right)
\end{equation}
where $Q_H$ is the heat flow out of the hot bath. The efficiency is given by the ratio of the total work to $Q_H$, giving 
\begin{equation}
\frac{W_\text{tot}}{Q_H}=1-\frac{\beta_H}{\beta_C}
\end{equation}
which is the Carnot efficiency.\\
\textbf{Case:} $\mathbf{c}$ \textbf{finite}. \\
$\mathcal{E}>0$ implies that $x<1/2$ for all thermal states of $\rho$, therefore by the partitioning algorithm derived in Supplementary Note 4, the least likely fluctuation $w^{(c)}_0$ associated with subspace $\ket{0}$ is the only fluctuation that can be bounded. Using the result of Supplementary note 3, the $c$-bounded work content for a qubit $\rho^{\downarrow\beta} = (x_1, x_2)$ is given by \\
\begin{equation}
W^{(c)}(\rho) = \frac{1}{\beta}\log\mathcal{Z}+\begin{cases}
F(\rho ) \, , \,\quad  x_1 e^{\beta\mathcal{E}_1}\leq e^{\beta (F(\rho)+c)}\\
\quad\quad \quad  \quad x_2 e^{\beta\mathcal{E}_2}\geq e^{\beta (F(\rho)-c)}\\
F^{(c)}_1(\rho) \, , \,  w(0)< W(\rho) -c\\
F^{(c)}_2(\rho) \, , \,    w(0)> W(\rho) +c\\
\end{cases}\label{cboundedwork qubit}
\end{equation}
where $F^{(c)}_1(\rho)=- \beta^{-1}\log\left(e^{-c \beta+ \frac{c\beta}{1-x}}+e^{-c\beta - \mathcal{E}\beta}\right)$ and $F^{(c)}_2(\rho)=-\beta^{-1}\log\left(e^{c \beta- \frac{c\beta}{1-x}}+e^{c\beta - \mathcal{E}\beta}\right)$, $W(\rho)$ is the unbounded work given by the free energy, and $w(0) = \beta^{-1}\log (x e^{\beta \mathcal{E}}\mathcal{Z})$. In the first step, putting the qubit in contact with the hot bath, the work we extract is bounded negatively if $w_{0,H}<F(\rho_C,\beta_H)-F(\rho_H,\beta_H)-c$, which gives the inequality 
\begin{equation}
\frac{\mathcal{E}}{\mathcal{Z}_C}\left(\frac{\beta_C-\beta_H}{\beta_H} \right)>c \label{carnot1}
\end{equation}
note that $\beta_C > \beta_H$ therefore the LHS of \eqref{carnot1} is positive. In order to break the positive fluctuation bound we would require -LHS $>c$ which is impossible for positive $c$, therefore in the first part of the engine cycle the work extracted is either unbounded or negatively bounded. For the second cycle, re-equilibriating the qubit with the cold bath, the inequality that implies a positively bounded work is given by $w_{0,C}>F(\rho_H,\beta_C)-F(\rho_C,\beta_C)+c$, which simplifies to
\begin{equation}
\frac{\mathcal{E}}{\mathcal{Z}_H}\left(\frac{\beta_C-\beta_H}{\beta_C} \right)>c \label{carnot2}
\end{equation}
Again the LHS is positive, and we would require -LHS $>c$ in the case of a negatively bounded protocol. Therefore on the second part of the cycle the work is either unbounded or positively bounded. Also note that as $\beta_H<\beta_C$ and therefore $\mathcal{Z}_C<\mathcal{Z}_H$, satisfying inequality \eqref{carnot1} implies inequality \eqref{carnot2} is also satisfied. There are therefore three cases, 1) the work is unbounded and we can achieve Carnot efficiency, 2) the work extracted from the hot bath is negatively bounded, and 3) 2 and the work extracted from the cold bath is positively bounded. \\

the $c$-bounded work extracted from the hot bath, in the case that inequality \eqref{carnot1} is satisfied, is given by
\begin{equation}
W_1^{(c)}=\frac{1}{\beta_H}\log\left(\frac{\mathcal{Z}_H }{e^{c\beta_H \mathcal{Z}_C}+e^{-\beta_H \mathcal{E}}} \right)+c
\end{equation}
where we have used the expressions in \eqref{cboundedwork qubit}. If inequality \eqref{carnot2} is satisfied for the second part of the cycle, then we get 
\begin{equation}
W_2^{(c)}=\frac{1}{\beta_C}\log\left(\frac{\mathcal{Z}_C }{e^{-c\beta_C \mathcal{Z}_H}+e^{-\beta_C \mathcal{E}}} \right)-c
\end{equation}
Using $Q_H = \Delta U + W^{(c)}_1$, the efficiency is given by 
\begin{equation}
\eta^{(c)}_1 = \frac{W^{(c)}_1+F(\rho_H,\beta_C)-F(\rho_C,\beta_C)}{\Delta U + W^{(c)}_1}
\end{equation}
if inequality \eqref{carnot1} is satisfied and 
\begin{equation}
\eta^{(c)}_2 =\frac{W^{(c)}_1+W^{(c)}_2}{\Delta U + W^{(c)}_1}
\end{equation}
if inequalities \eqref{carnot1} and \eqref{carnot2} are satisfied. As the temperature difference between hot and cold baths increases, the efficiency increases. In the unbounded case, the Carnot efficiency is bounded from above by 1. As $\beta_H/\beta_C\rightarrow 0$, inequality \eqref{carnot2} cannot be satisfied for any finite $c$. To find the upper bound of $\eta^{(c)}_1$ we make the change of variables $\beta_C/n = \beta_H$, i.e. $T_H = n T_C$ and $n$ is the ratio of the two bath temperatures. We then take $\beta_H \rightarrow \infty$ and $n\rightarrow \infty$. \\
\begin{equation}
\lim\limits_{n\rightarrow 0}\left( \lim\limits_{\beta_H\rightarrow \infty}\left( \eta^{(c)}_1 \right)\right)=1-\frac{\mathcal{E}}{2(2c + \mathcal{E} )}
\end{equation} 
Of course as $\mathcal{E}\rightarrow 0$ the efficiency tends to 1, but this is because we are no longer extracting any work (as the two Gibbs states are identical). As $c\rightarrow\infty$ we recover the Carnot efficiency upper bound, although for any finite $c$ we can never reach an efficiency of 1. Therefore all realistic engines of this form have an upper bound to their efficiency defined by their energy gap and fragility $c$. This represents a general physical upper bound on the efficiency of a the qubit engine, given by its ability to withstand fluctuations. Also notice that as $c\rightarrow 0$ the maximal efficiency is bounded from below by $1/2$. For $c=0$ the efficiency is zero, as no work can be extracted. Therefore we observe, in this model at least, a genuine discontinuity between the capabilities of a thermal machine in the single-shot regime. Allowing for arbitrarily small fluctuations in principle will allow the engine to reach a maximum efficiency that is bounded from below by $1/2$. But if we demand that the work is deterministic, the efficiency is zero.\\

\end{document}